\def\withcolors{1}
\def\withnotes{1}
\def\eps{\ve}
\renewcommand{\epsilon}{\ve}
\def\ve{\varepsilon}
\newcommand{\E}{\mbox{\bf E}}
\newcommand{\Var}{\mbox{\bf Var}}
\newcommand{\pr}[2][]{\mathrm{Pr}\ifthenelse{\not\equal{}{#1}}{_{#1}}{}\!\left[#2\right]}
\newcommand{\R}{\mathbb{R}}
\newcommand{\dtv}{d_{\mathrm {TV}}}
\newcommand{\id}{\mathbb{I}}
\def \Paren#1{{\left({#1}\right)}}
\newtheorem{theorem}{Theorem}
\newtheorem{remark}[theorem]{Remark}
\newtheorem{fact}[theorem]{Fact}
\newtheorem{lemma}[theorem]{Lemma}
\newtheorem{claim}[theorem]{Claim}
\newtheorem{corollary}[theorem]{Corollary}
\newtheorem{definition}[theorem]{Definition}
\newtheorem{question}[theorem]{Question}
\numberwithin{theorem}{section} 
\numberwithin{nontheorem}{section} 
\numberwithin{proposition}{section} 
\numberwithin{observation}{section} 
\numberwithin{remark}{section} 
\numberwithin{fact}{section} 
\numberwithin{lemma}{section} 
\numberwithin{claim}{section} 
\numberwithin{corollary}{section} 
\numberwithin{case}{section} 
\numberwithin{dfn}{section} 
\numberwithin{definition}{section} 
\numberwithin{question}{section} 
\numberwithin{openquestion}{section} 
\numberwithin{res}{section}
  \newcommand{\gcolor}[1]{{\color{red}#1}} 
  \newcommand{\hcolor}[1]{{\color{blue}#1}} 
  \newcommand{\gcolor}[1]{{#1}}
  \newcommand{\hcolor}[1]{{#1}} 
  \newcommand{\gnote}[1]{\par\gcolor{\textbf{G: }\sf #1}} 
  \newcommand{\gfootnote}[1]{\footnote{{\bf \gcolor{Gautam}}: {#1}}}
  \newcommand{\hnote}[1]{ \hcolor{\textbf{H: }\sf #1}} 
  \newcommand{\snote}[1]{ \hcolor{\textbf{SW: }\sf #1}} 
  \newcommand{\gnote}[1]{}
  \newcommand{\hnote}[1]{}
    \newcommand{\snote}[1]{}
  \newcommand{\gfootnote}[1]{}
\newcommand{\ignore}[1]{\leavevmode\unskip} 
\newcommand{\ellone}[1]{||#1||_{\ell_1}}
\newcommand{\polylog}{\mathrm{polylog}}
\newcommand{\Dinfty}{D_\infty}
\newcommand{\Lap}{\mathrm{Lap}}
\DeclareMathOperator*{\argmin}{argmin}
\newcommand{\DKL}{D_{\mathrm{KL}}}
\title{Locally Private Hypothesis Selection}
\author {
Sivakanth Gopi\thanks{Microsoft Research. {\tt sigopi@microsoft.com}.}
\and
Gautam Kamath\thanks{Cheriton School of Computer Science, University of Waterloo. {\tt g@csail.mit.edu}. Supported by a University of Waterloo startup grant. Part of this work was done while supported as a Microsoft Research Fellow, as part of the Simons-Berkeley Research Fellowship program at the Simons Institute for the Theory of Computing, and while visiting Microsoft Research Redmond.}
\and
Janardhan Kulkarni\thanks{Microsoft Research. {\tt jakul@microsoft.com}.}
\and
Aleksandar Nikolov\thanks{Department of Computer Science, University of Toronto. {\tt anikolov@cs.toronto.edu}. Supported by Ontario ERA, and NSERC Discovery grants. Part of this work was done while visiting the Simons Institute for the Theory of Computing.}
\and
Zhiwei Steven Wu\thanks{University of Minnesota. {\tt zsw@umn.edu}. Supported in part by the NSF FAI Award \#1939606, a Google Faculty Research Award, a J.P. Morgan Faculty Award, a Facebook Research Award, and a Mozilla Research Grant. Part of this work was done while visiting the Simons Institute for the Theory of Computing.}
\and
Huanyu Zhang\thanks{Cornell University. {\tt hz388@cornell.edu}. Supported by NSF \#1815893 and by NSF \#1704443. This work was partially done while the author was an intern at Microsoft Research Redmond.}
}
\begin{document}
\maketitle \footnotetext{Authors are in alphabetical order.}

\begin{abstract}
  We initiate the study of hypothesis selection under local differential privacy.
  Given samples from an unknown probability distribution $p$ and a set  $\mathcal{Q}$ of $k$ probability distributions, we aim to output, under the constraints of $\varepsilon$-local differential privacy, a distribution from $\mathcal{Q}$ whose total variation distance to $p$ is comparable to the best such distribution.
  This is a generalization of the classic problem of $k$-wise simple hypothesis testing, which corresponds to when $p \in \mathcal{Q}$, and we wish to identify $p$.
  Absent privacy constraints, this problem requires $O(\log k)$ samples from $p$, and it was recently shown that the same complexity is achievable under (central) differential privacy.
  However, the naive approach to this problem under local differential privacy would require $\tilde O(k^2)$ samples.

  We first show that the constraint of local differential privacy incurs an exponential increase in cost: any algorithm for this problem requires at least $\Omega(k)$ samples.
  Second, for the special case of $k$-wise simple hypothesis testing, we provide a non-interactive algorithm which nearly matches this bound, requiring $\tilde O(k)$ samples.
  Finally, we provide sequentially interactive algorithms for the general case, requiring $\tilde O(k)$ samples and only $O(\log \log k)$ rounds of interactivity.
  Our algorithms for the general case are achieved through a reduction to maximum selection with adversarial comparators, a problem of independent interest for which we initiate study in the parallel setting.
  For this problem, we provide a family of algorithms for each number of allowed rounds of interaction $t$, as well as lower bounds showing that they are near-optimal for every $t$.
  Notably, our algorithms result in exponential improvements on the round complexity of previous methods.
\end{abstract}

\section{Introduction}
Perhaps the most fundamental question in statistics is that of simple hypothesis testing.
Given two known distributions $p$ and $q$, and a dataset generated according to one of these distributions, the goal is to determine which distribution the data came from.
The optimal solution to this problem is the likelihood-ratio test, as shown by Neyman and Pearson~\cite{NeymanP33}.
This problem can be generalized in two ways that we consider in this paper.
First, rather than just two distributions, one can consider a setting where the goal is to select from a set of $k$ distributions.
We refer to this setting as \emph{$k$-wise simple hypothesis testing}.
Furthermore, the data may not have been generated according to \emph{any} distribution from the set of known distributions -- instead, the goal is to just select a distribution from the set which is competitive with the best possible (in an appropriate distance measure).
This problem is the core object of our study, and we denote it as \emph{hypothesis selection}.

The hypothesis selection problem appears naturally in a number of settings.
For instance, we may have a collection of distribution learning algorithms that are effective under different assumptions on the data, but it is unknown which ones hold in advance.
Hypothesis selection allows us to simply run all of these algorithms in parallel and pick a good output from these candidate distributions afterwards.
More generally, a learning algorithm may first ``guess'' various parameters of the unknown distribution during and for each guess produce a candidate output distribution. Hypothesis selection allows us to pick a final result from this set of candidates.
Finally, near-optimal sample complexity bounds can often be derived by enumerating all possibilities within some parametric class of distributions (i.e.,  a cover) and then applying hypothesis selection with this enumeration as the set of hypotheses~\cite{DevroyeL01}.

Classical work (e.g.,~\cite{Yatracos85, DevroyeL96, DevroyeL97, DevroyeL01}) on these problems has shown that, even in the most general setting of hypothesis selection, there are effective algorithms with sample complexity scaling only \emph{logarithmically} in the number of candidate hypotheses.
Building on this, there has been significant study into hypothesis selection with additional desiderata, including computational efficiency, robustness, weaker access to hypotheses, and more (e.g.,~\cite{MahalanabisS08, DaskalakisDS12b, DaskalakisK14, SureshOAJ14, AcharyaJOS14b, DiakonikolasKKLMS16, AcharyaFJOS18, BousquetKM19, BunKSW19}).

One consideration which has not received significant attention in this setting is that of \emph{data privacy}.
The dataset may be comprised of personally sensitive data, including medical records, location history, or salary information, and classical hypothesis selection algorithms may violate the privacy of individuals who provided the data. 
Motivated by this issue, our goal is to perform our statistical analysis while ensuring that the output does not reveal significant information about any individual datapoint.
We will be concerned with the formalization of this principle as \emph{differential privacy}~\cite{DworkMNS06}, which can be seen as the gold standard for modern data privacy.

We first distinguish between two common definitions of differential privacy.
The first is \emph{central differential privacy} (also known as the \emph{trusted curator} setting)~\cite{DworkMNS06}, in which users transmit their data to a central server without any obfuscation, and the algorithm operates on this dataset with the restriction that its final output must be appropriately privatized.
The second is \emph{local differential privacy} (LDP)~\cite{Warner65, EvfimievskiGS03, KasiviswanathanLNRS11}, in which users trust no one: each individual privatizes their own data before sending it to the central server.
In some sense, LDP places the privacy barrier closer to the users, and as a result, has seen adoption in practice by a number of companies that analyze sensitive user data, including Google~\cite{ErlingssonPK14}, Microsoft~\cite{DingKY17}, and Apple~\cite{AppleDP17}.

Recently, Bun, Kamath, Steinke, and Wu~\cite{BunKSW19} showed that under the constraint of central differential privacy, one can still perform hypothesis selection with sample complexity which scales logarithmically in the number of hypotheses.
A priori, it was not clear that this would be possible.
Non-privately, one can apply methods which essentially ask ``Which of these two distributions fits the data better?'' for all $O(k^2)$ pairs of hypotheses.
Crucially, one can reuse the same set of $O(\log k)$ samples for all such comparisons (rather than drawing fresh samples for each one), and accuracy can be proved by a Chernoff and union bound style argument.
A naive privatization of this method would result in a polynomial dependence on $k$, due to issues arising from sample reuse and the composition of privacy losses.
\cite{BunKSW19} avoid this issue by a careful application of tools from the differential privacy literature (i.e., the exponential mechanism~\cite{McSherryT07}), achieving an $O(\log k)$ sample complexity.
However, their method relies upon techniques which are not available in the local model of differential privacy.
Indeed, at first glance, it may not be clear how to improve upon an $\tilde O(k^2)$ sample complexity in the local model, achieved by simply using a fresh set of samples for each comparison, and using randomized response to privately perform the comparison.
This raises the question: what is the sample complexity of hypothesis selection under local differential privacy?
Can the problem be solved with a logarithmic dependence of the number of samples on the number of candidate hypotheses?
Or do we require a polynomial number of samples?

\subsection{Results, Techniques, and Discussion}
To describe our results, we more formally define the problems of $k$-wise simple hypothesis testing and hypothesis selection. For a definition of the total variation distance $\dtv(p,q)$, we refer the reader to Section~\ref{sec:pre}.

\begin{definition}
  Suppose we are given a set of $n$ data points $X_1, \dots, X_n$, which are sampled i.i.d.\ from some (unknown) distribution $p$, and a set of $k$ distributions $\mathcal{Q} = \{q_1, \dots, q_k\}$.
  The goal is to output a distribution $\hat q \in \mathcal{Q}$ such
  that $\dtv(p,\hat q) \leq c \min_{q^* \in \mathcal{Q}} \dtv(p, q^*)
  + \alpha$, for some $c =c(\alpha, k)$.
  
  We refer to the value of $c(\alpha,k)$ as the \emph{agnostic approximation factor}.
  If $c(\alpha,k)$ is an absolute constant, then we denote this problem as \emph{hypothesis selection}.
  If $c(\alpha, k)$ grows with $k$ and $\frac1\alpha$, we refer to
  this problem as \emph{weak hypothesis selection}. If we require that
  $p \in \mathcal{Q}$, that $\min_{i\neq j} \dtv(q_i, q_j) \ge \alpha$, and that the algorithm must correctly identify $p$, then we denote this problem as \emph{$k$-wise simple hypothesis testing}.
\end{definition}

We assume the reader is familiar with the notion of $\ve$-local differential privacy ($\ve$-LDP); a formal definition appears in Section~\ref{sec:pre}.

Our first result shows that $k$-wise simple hypothesis testing (and thus, hypothesis selection) requires $\Omega(k)$ samples.
\begin{restatable}{theorem}{lb}\label{thm:fully-interactive-lb}
    Let $\ve \in (0,1)$.
  Suppose $M$ is an $\ve$-LDP protocol that solves the $k$-wise
  simple hypothesis testing problem  with probability at least $1/3$
  when given $n$ samples from some distribution $p \in \mathcal{Q}$,
  for any set $\mathcal{Q} = \{q_1, \ldots, q_k\}$ such
  that $\min_{i \neq j} \dtv(q_i, q_j) \ge \alpha$. Then
  $n = \Omega\left( \frac{k}{ \alpha^2 \ve^2}\right).$
\end{restatable}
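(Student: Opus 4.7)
The plan is to derive the lower bound via an information-theoretic (Fano-style) argument using a hard family that exposes the $k$-dimensional structure responsible for the target $k$ factor (rather than merely $\log k$).

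First I would construct the hard instance. Take the domain to be $[k]$ and for each $i \in [k]$ let
\[
q_i(j) = \tfrac{1-\alpha}{k} + \alpha\cdot\mathbf{1}\{j = i\}.
\]
Each $q_i$ is a valid probability distribution, and a direct calculation gives $\dtv(q_i, q_j) = \alpha$ for every $i \neq j$, so the family is a legitimate $k$-wise simple hypothesis testing instance. Crucially, each $q_i$ is a rank-one perturbation of the uniform distribution along a distinct coordinate direction, and the $k$ perturbations $\{q_i - \bar q\}$ are linearly independent (even orthogonal in $\ell_2$).

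Second I would put a uniform prior $V \sim \mathrm{Unif}([k])$ on the true hypothesis. If $Y^n$ denotes the transcript of the fully-interactive $\varepsilon$-LDP protocol, then success probability at least $1/3$ combined with Fano's inequality forces $I(V;Y^n) = \Omega(\log k)$. The chain rule gives $I(V;Y^n) = \sum_t \E[I(V;Y_t \mid Y_{<t})]$, and in the fully-interactive model the single-user channel $X_t \to Y_t$ remains $\varepsilon$-LDP even after conditioning on the history $Y_{<t}$, so per-round LDP contraction inequalities continue to apply round by round.

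The main obstacle is establishing the sharp per-round information bound. A direct appeal to the Duchi--Jordan--Wainwright contraction inequality gives $I(V;Y_t \mid Y_{<t}) \leq 4(e^\varepsilon - 1)^2\, \E[\dtv(q_V, q_{V'})^2] = O(\varepsilon^2 \alpha^2)$, which combined with Fano only yields $n = \Omega(\log k / (\varepsilon^2 \alpha^2))$, short of the target by a factor of $k/\log k$. To close this gap I would exploit the orthogonality of the perturbations: the information any LDP message carries about $V$ decomposes into $k$ coordinate-wise sub-tests, and an $\varepsilon$-LDP channel has only an $O(\varepsilon^2)$ chi-squared / Fisher-information ``budget'' to spread across these $k$ directions. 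Formalizing this via an Assouad-type private lower bound (in the spirit of Duchi--Jordan--Wainwright for multi-parameter estimation, Barnes--Chen--\"Ozg\"ur for sparse-parameter estimation, and Acharya--Canonne--Freitag--Tyagi for locally private identity testing) refines the per-round bound to $O(\varepsilon^2 \alpha^2 / k)$ after averaging over the prior, which combined with Fano (or equivalently via a direct Assouad argument) yields $n = \Omega(k/(\varepsilon^2\alpha^2))$. Extending these private-Assouad / trace-Fisher-information bounds---originally developed for non-interactive or sequentially-interactive LDP---to the fully-interactive setting is the technically delicate step and is carried out by a round-by-round conditional analysis, since the budget constraint is a property of each individual $\varepsilon$-LDP channel and so survives conditioning on earlier messages.
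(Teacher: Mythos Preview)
The paper's proof is much shorter than your plan: it takes the hard family to be the $2d$ one-sparse Gaussians $\mathcal{N}(\theta,I_d)$ with $\theta\in\{\pm\alpha e_j:j\in[d]\}$, checks that the pairwise total variation distances are $\Theta(\alpha)$, and then simply invokes an existing fully-interactive LDP lower bound for exactly this selection problem (Corollary~6 of Duchi--Rogers, equivalently Theorem~4.5 of Braverman--Garg--Ma--Nguyen--Woodruff). No new information-theoretic argument is carried out; the entire proof is a two-line reduction to that black box. Your discrete spike family is an equally valid hard instance, but the cited result is stated for the Gaussian family, which is why the paper uses it.

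Your plan instead tries to re-derive (a variant of) that black-box result from scratch via Fano plus per-round LDP contraction. You correctly identify the obstacle: the off-the-shelf DJW contraction gives $I(V;Y_t\mid Y_{<t})=O(\varepsilon^2\alpha^2)$, which only yields $\Omega(\log k/(\alpha^2\varepsilon^2))$, and the extra $1/k$ must come from the orthogonality/sparsity of the family. The gap is in your proposed fix. You argue that the $O(\varepsilon^2)$ ``budget'' is a property of each LDP channel and therefore survives conditioning on $Y_{<t}$. That is true of the $\varepsilon^2$ factor, but the $1/k$ factor in the desired bound $O(\varepsilon^2\alpha^2/k)$ comes from averaging over a prior on $V$ whose perturbations are spread across $k$ orthogonal directions. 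After conditioning on the transcript, the posterior on $V$ may concentrate on a small set, and for such rounds the $1/k$ gain is no longer available; a round-by-round sum of conditional mutual informations can then be much larger than $n\cdot O(\varepsilon^2\alpha^2/k)$. Handling this is exactly the nontrivial content of the Duchi--Rogers / BGMNW results you are implicitly trying to reproduce (they avoid the posterior-drift issue by working with a different functional whose contraction depends only on the channel and the geometry of $\{q_i\}$, not on the evolving posterior). Your sketch does not supply that ingredient, so as written the fully-interactive case does not close; the cleanest repair is to do what the paper does and cite the existing result directly.
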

The theorem above shows that the cost of hypothesis testing is exponentially larger under local differential privacy than under central differential privacy (i.e., $\Omega(k)$ versus $O(\log k)$), and it holds even when the LDP protocol is allowed the power of full interactivity.
The construction used to prove this lower bound is the problem of $1$-sparse mean estimation, previously identified as a problem of interest by Duchi, Jordan, and Wainwright~\cite{DuchiJW13, DuchiJW17}. The lower bound follows from results in \cite{DuchiR19}.
Given the construction, our result can be seen as a translation of existing results. Further details are given in Section~\ref{sec:lb}.

With a lower bound of $\Omega(k)$ samples, and the aforementioned naive upper bound of $\tilde O(k^2)$ samples, the problem remains to identify the correct sample complexity.
We provide two different algorithms which require $\tilde O(k)$ samples, nearly matching this lower bound.
The first is for the special case of $k$-wise simple hypothesis testing, and is a non-interactive protocol -- each user only sends a message to the curator once, independently of the messages sent by other users. 
The second solves the more general problem of hypothesis selection, but requires sequential interactivity (albeit only $O(\log \log k)$ rounds of interaction): users still only send a message to the curator once, but the curator may request different types of messages from later users based on the messages sent by earlier users.
Less interaction in a protocol is generally preferred, and the role and power of interactivity in local differential privacy is one of the most significant questions in the area (see, e.g.~\cite{KasiviswanathanLNRS11,JosephMNR19,DanielyF19,DuchiR19,JosephMR20}).

Our first algorithmic result gives a non-interactive mechanism with $\tilde O(k)$ sample complexity for sufficiently well separated instances. Define $\beta :=  \min_{q \in \mathcal{Q}} \dtv(p, q)$.

\begin{theorem}
  \label{thm:1Round}
  For every $\varepsilon \in [0,1)$, there is a non-interactive $\epsilon$-LDP algorithm that with
  probability at least $1-1/k^2$ outputs a distribution
  $\hat q \in \mathcal{Q}$ such that $\dtv(p,\hat q) \leq \alpha$, if the number of samples
  $n \gg k (\log k)^3/(\alpha^4\epsilon^2)$ and $\beta \ll \alpha^2/\log k$.\footnote{We use $A\ll B$ to denote that $A\le c B$ for some sufficiently small constant $c>0$. Similarly we use $A \gg B$ to denote that $A \ge C B$ for some sufficiently large constant $C>0.$ $A \lesssim B$ is used interchangeably with $A=O(B)$. Similarly $A\gtrsim B$ is used interchangeably with $A=\Omega(B).$}
\end{theorem}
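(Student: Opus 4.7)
The strategy is to reduce the problem to $k$-dimensional mean estimation in the non-interactive LDP model. Fix a truncation level $B = \Theta(\log k)$ and, for each sample $X_\ell \sim p$, form the score vector $v_\ell = (\max(\log q_i(X_\ell), -B))_{i=1}^{k} \in [-B,0]^k$. The algorithm estimates $\mu = \E_{X\sim p}[v(X)] \in \R^k$ in $\ell_\infty$ norm to accuracy $\gamma = \Theta(\alpha^2)$ with failure probability at most $1/k^2$, and then outputs $\hat q = q_{\hat i}$ for $\hat i = \argmax_i \hat\mu_i$.

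\textbf{LDP estimation step.} Each user rescales $v_\ell/B \in [-1,0]^k$ and reports it through a standard non-interactive $\varepsilon$-LDP vector-mean protocol (for example, a Hadamard-response or PrivUnit-type mechanism). Non-interactive $\varepsilon$-LDP $\ell_\infty$ estimation of a $k$-dimensional mean to accuracy $\gamma/B$, with $1/k^2$ failure probability, is known to require $n = O(k B^2 \log k/(\gamma^2 \varepsilon^2))$ samples. Substituting $B = \Theta(\log k)$ and $\gamma = \Theta(\alpha^2)$ yields precisely $n \gg k(\log k)^3/(\alpha^4\varepsilon^2)$, matching the claim.

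\textbf{Correctness: truncated log-likelihood versus total variation.} Let $q^\star \in \argmin_{q\in \mathcal{Q}} \dtv(p,q)$, so $\beta = \dtv(p,q^\star) \ll \alpha^2/\log k$. The heart of the analysis is a two-sided comparison. On the \emph{good} side, I expect $\mu_{q^\star} \ge \mu_p - O(\beta B)$, where $\mu_p = \E_p[\max(\log p(X),-B)]$: since the truncated log-likelihood takes values in a range of width $B$ and $q^\star$ differs from $p$ by only $\beta$ in TV, the clipped KL quantity $\E_p[\max(\log p,-B) - \max(\log q^\star,-B)]$ is at most $O(\beta B) = o(\alpha^2)$ under the hypothesis on $\beta$. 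On the \emph{bad} side, every $q_i$ with $\dtv(p,q_i) \ge \alpha$ satisfies $\mu_p - \mu_i \ge \Omega(\alpha^2)$: the Scheff\'e set of $p$ versus $q_i$ contributes $\Omega(\alpha^2)$ to the raw KL divergence via Pinsker, and truncation at $-B = -\Theta(\log k)$ removes only $O(2^{-B})$ of that. Combining these two bounds with the $\gamma = \Theta(\alpha^2)$ estimation accuracy and a union bound over the $k$ coordinates, $\argmax_i \hat\mu_i$ must index a distribution at total variation distance at most $\alpha$ from $p$.

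\textbf{Main obstacle.} The delicate step is the lower bound $\mu_p - \mu_i = \Omega(\alpha^2)$ for a TV-far $q_i$. The raw KL divergence can be infinite even when $\dtv(p,q_i) = \alpha$, so truncation is essential; but the same truncation can in principle wipe out the $\Omega(\alpha^2)$ separation. The cleanest route I see is to replace the hard cap $\max(\log q_i(x),-B)$ with the smooth surrogate $\log(q_i(x)+2^{-B})$ and run a squared-Hellinger / second-order Taylor analysis, which yields a robust $\Omega(\alpha^2)$ lower bound on the bad side while introducing only $O(\beta B)$ bias on the good side. This quantitative trade-off is precisely where the assumption $\beta \ll \alpha^2/\log k$ is needed, and it is also the source of the $\log^3 k$ and $\alpha^{-4}$ factors in the final sample complexity.
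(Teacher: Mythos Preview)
Your proposal has a real gap in the ``bad side'' bound, and the fix you sketch does not close it. The issue is that your truncation level $B=\Theta(\log k)$ is tied to the number of hypotheses, not to the domain. If the domain $[N]$ is large enough that all probabilities $q_i(x)$ (and $p(x)$) are well below $2^{-B}$, then $\max(\log q_i(x),-B)=-B$ for every $x$ and every $i$, so $\mu_1=\cdots=\mu_k=-B$ and the argmax carries no information at all. The smooth surrogate $\log(q_i(x)+2^{-B})$ suffers the same fate: when $q_i(x)\ll 2^{-B}$ the surrogate is essentially $-B$ regardless of $q_i$, so again all $\mu_i$ coincide up to $o(\alpha^2)$. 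Your sentence ``truncation at $-B$ removes only $O(2^{-B})$'' implicitly assumes the mass of $p$ on $\{x:q_i(x)<2^{-B}\}$ is $O(2^{-B})$, which need not hold; the separation between $p$ and $q_i$ can live entirely in the truncated region. No Hellinger or Taylor argument rescues this without an a priori lower bound on the probabilities.

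The paper's proof resolves exactly this obstacle with a preprocessing step you are missing: a \emph{Flattening Lemma}. It constructs a randomized map $\phi:[N]\to[N']$ (depending only on $q_1,\dots,q_k$) with $N'\le (k+1)N$ such that (i) $\dtv(\phi\circ q_i,\phi\circ q_j)=\tfrac12\dtv(q_i,q_j)$, and (ii) $\tfrac{1}{2N'}\le (\phi\circ q_i)(a)\le \tfrac{1}{N}$ for every $a,i$. Each user applies $\phi$ locally to their sample, after which all hypotheses are within a multiplicative factor $2N'/N\le 2(k+1)$ of uniform. Now the \emph{ratio} $\log(\gamma(a)/q_i(a))$ with $\gamma$ uniform is bounded by $L=\log k+O(1)$ with no truncation, Pinsker gives the $\Omega(\alpha^2)$ separation cleanly, and the $O(\beta L)$ bias on the good side follows from $\|\log(\gamma/q_i)\|_\infty\le L$. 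Your LDP estimation step (and the resulting $k(\log k)^3/(\alpha^4\varepsilon^2)$ sample count) is essentially the same as the paper's group-splitting-plus-Laplace scheme; the missing ingredient is this domain transformation that makes the log-likelihoods bounded without sacrificing the TV gap.
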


We prove the theorem in Section \ref{sec:non-adaptive}. While somewhat more general, the above theorem immediately gives a non-interactive $\tilde O(k)$-sample algorithm for the important special case of LDP $k$-wise simple hypothesis testing.

\begin{corollary}
  \label{cor:1rRealizable}
Suppose our instance of hypothesis testing is such that $p \in \mathcal{Q}$ and all distributions in $\mathcal{Q}$ are $\Omega(\alpha)$-far from each other in total variation distance.
For $\varepsilon \in [0,1)$, there exists a non-interactive $\varepsilon$-LDP algorithm which identifies $p$ with high probability, given $n = O\left(\frac{k \log^3 k}{\alpha^4 \varepsilon^2}\right)$ samples.
\end{corollary}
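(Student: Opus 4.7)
The plan is to obtain Corollary~\ref{cor:1rRealizable} as an immediate specialization of Theorem~\ref{thm:1Round} to the realizable setting. Under the assumption $p \in \mathcal{Q}$ we have $\beta = \min_{q \in \mathcal{Q}} \dtv(p,q) = 0$, so the precondition $\beta \ll \alpha^2/\log k$ in Theorem~\ref{thm:1Round} holds vacuously, regardless of the accuracy target we subsequently pass to the algorithm.

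To upgrade an $\alpha$-close output into \emph{exact} identification of $p$, I would invoke Theorem~\ref{thm:1Round} with a rescaled accuracy $\alpha' = c\alpha$, where $c>0$ is a constant chosen small enough that $2\alpha' < \min_{i\neq j}\dtv(q_i,q_j)$. The pairwise separation hypothesis $\min_{i\neq j}\dtv(q_i,q_j) = \Omega(\alpha)$ guarantees that such a constant exists. Because $\alpha' = \Theta(\alpha)$, the sample complexity condition $n \gg k(\log k)^3/((\alpha')^4\varepsilon^2)$ supplied by Theorem~\ref{thm:1Round} becomes $n = O(k(\log k)^3/(\alpha^4\varepsilon^2))$, matching the bound in the corollary up to constant factors.

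It then remains to read off identification from closeness. Theorem~\ref{thm:1Round} returns some $\hat q \in \mathcal{Q}$ with $\dtv(p,\hat q) \le \alpha'$ with probability at least $1-1/k^2$. For any $q \in \mathcal{Q}\setminus\{p\}$, the separation assumption gives $\dtv(p,q) \ge \min_{i\neq j}\dtv(q_i,q_j) > 2\alpha'$, so no such $q$ can lie within $\alpha'$ of $p$; hence $\hat q = p$ with the stated probability. There is no real obstacle here—the only care needed is in selecting $c$ so that the pairwise TV gap strictly exceeds twice the approximation radius of Theorem~\ref{thm:1Round}, after which identification is forced by the triangle inequality.
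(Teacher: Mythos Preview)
Your proposal is correct and matches the paper's approach: the paper presents Corollary~\ref{cor:1rRealizable} as an immediate specialization of Theorem~\ref{thm:1Round}, exactly as you argue. One minor remark: the factor of $2$ in your condition $2\alpha' < \min_{i\neq j}\dtv(q_i,q_j)$ is unnecessary, since $\dtv(p,\hat q)\le \alpha'$ together with $\dtv(p,q) > \alpha'$ for all $q\neq p$ already forces $\hat q = p$; but this does not affect the argument or the sample bound.
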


Our algorithm is based on a noised log-likelihood test, though significant massaging and manipulation of the problem instance is required to achieve an acceptable sample complexity. 
In our algorithm, the users are divided into $k$ groups. Each user in the $i^{th}$ group sends the log-likelihood (with some Laplace noise added for privacy) of observing the sample given to the user if the true distribution was $q_i$. The log-likelihoods from all the users in the $i^{th}$ group are aggregated and the most likely distribution is output. Alternatively, we can also think of our algorithm as using the samples from the $i^{th}$ group to estimate KL-divergences between the unknown distribution and $q_i$ and finally outputting the closest distribution.
For this approach to work, we need all the log-likelihoods to be
bounded. We achieve this by a {\em flattening lemma} which makes all
the distributions close to uniform, while preserving their total
variation distances. Moreover, this flattening can be implemented
locally by the users transforming their samples from the original
distribution. We believe that our flattening lemma may have
applications in other DP problems.

\smallskip
Our second algorithmic result is a $O(\log \log k)$-round sequentially interactive $\tilde O(k)$-sample algorithm for LDP hypothesis selection.

\begin{corollary}[Informal version of Corollary~\ref{cor:set-params-ldp}]
  \label{cor:informal-sequential}
  Suppose we are given $n$ samples from an unknown distribution $p$ and a set of descriptions of $k$ distributions $\mathcal{Q}$.
  There exists an algorithm which identifies a distribution $\hat q \in \mathcal{Q}$, such that $\dtv(p,\hat q) \leq 27  \min_{q^* \in \mathcal{Q}} \dtv(p,q^*) + O(\alpha)$ with probability $9/10$.
  The algorithm is $\ve$-LDP, requires $O(\log \log k)$ rounds of sequential interactivity, and $n = O\left(\frac{k \log k \log \log k}{\alpha^2 \varepsilon^2}\right)$ samples.
\end{corollary}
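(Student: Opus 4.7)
The plan is to obtain Corollary~\ref{cor:informal-sequential} as a direct consequence of the paper's general reduction from LDP hypothesis selection to parallel maximum selection with adversarial comparators, instantiated with $t = O(\log \log k)$ rounds. The starting point is the classical Scheffé tournament: for a pair $q_i, q_j \in \mathcal{Q}$, define the Scheffé set $S_{ij} = \{x : q_i(x) > q_j(x)\}$ and compare an estimate of $p(S_{ij})$ against the known quantities $q_i(S_{ij})$ and $q_j(S_{ij})$. Standard Scheffé analysis implies that whenever one candidate is substantially closer to $p$ in total variation distance than the other (by more than an $O(\alpha)$ gap scaled against the worse of the two distances), the test reliably returns the closer candidate; otherwise the outcome may be adversarial, but the ``loser'' of the comparison is still within a small constant factor of optimum. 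This is exactly the adversarial-comparator abstraction: comparisons are faithful when it matters, and arbitrary only between candidates that are already near-optimal.

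To implement a single Scheffé comparison under $\varepsilon$-LDP, I would have each assigned user apply randomized response to the bit $\mathbbm{1}[X \in S_{ij}]$; averaging the noisy replies yields an estimate of $p(S_{ij})$ with additive accuracy $\alpha$ and failure probability $\delta$ using $O(\log(1/\delta)/(\alpha^2 \varepsilon^2))$ samples. Because the overall protocol must union-bound over every comparison performed across the $O(\log \log k)$ rounds, taking $\delta = 1/\poly(k)$ inflates the per-comparison sample cost to $O(\log k / (\alpha^2 \varepsilon^2))$. Local privacy forbids a single user from participating in more than one comparison without paying composition, so the pools servicing distinct comparisons are disjoint and the total sample budget is the product of the per-comparison cost and the total comparison count aggregated over all rounds.

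The reduction then invokes the paper's $t$-round parallel max-selection algorithm with $t = O(\log \log k)$; by the framework referenced in the abstract, this procedure uses $\tilde O(k)$ comparisons in total while outputting a candidate whose rank in the Scheffé partial order is a bounded constant. Multiplying the $\tilde O(k)$ comparisons by the $O(\log k / (\alpha^2 \varepsilon^2))$ per-comparison sample cost yields the claimed bound $n = O(k \log k \log \log k / (\alpha^2 \varepsilon^2))$. Translating the constant rank of the output back through the Scheffé semantics produces a distribution $\hat q$ satisfying $\dtv(p, \hat q) \le c \min_{q^* \in \mathcal{Q}} \dtv(p, q^*) + O(\alpha)$ for an absolute constant $c$, which careful bookkeeping pins down as $c = 27$.

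The main obstacle will be calibrating the comparator so that the agnostic approximation factor does not degrade multiplicatively across the $O(\log \log k)$ elimination rounds. Scheffé tests do not induce a total order on $\mathcal{Q}$, so a naive tournament could, in the worst case, push the output's effective rank up by a constant factor per round and inflate the approximation constant to $2^{\Omega(\log \log k)}$. Avoiding this requires choosing the comparator's decision threshold so that only hypotheses significantly worse than the current-round optimum are eliminated, together with an inductive invariant guaranteeing that some candidate within constant factors of the true optimum always survives to the next round. Once this invariant is in place, the rest of the proof is a mechanical combination of the generic parallel max-selection guarantee with the LDP sample accounting described above.
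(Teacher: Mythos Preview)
Your overall framework is correct and matches the paper: the reduction to adversarial comparators via privatized Scheff\'e tests, the per-comparison cost of $O(\log k/(\alpha^2\varepsilon^2))$ under a union bound over $\poly(k)$ comparisons, disjoint sample pools to preserve local privacy, and the translation of a $\tau$-approximation in max-selection to a $(3+\gamma)^\tau$ agnostic factor (so $\tau=3$ yields the $27$). The sample accounting is also right.

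However, there is a genuine gap in your handling of the ``main obstacle.'' You correctly observe that a naive $t$-round tournament degrades to a $2t$-approximation (this is exactly the paper's Lemma~\ref{lem:t-round}), but your proposed fix---calibrating the comparator's decision threshold per round and maintaining an inductive invariant that a near-optimal candidate always survives---does not work as stated. The comparator is the fixed Scheff\'e test; you cannot set its threshold to depend on the unknown ``current-round optimum,'' and the survival invariant you describe is precisely what the naive tournament already delivers, with the inevitable additive loss of $2$ per round.

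The paper's actual mechanism is different and concrete (Algorithm~\ref{alg:better-t-round}, borrowing an idea from~\cite{DaskalakisK14}). In parallel with the recursive tournament, it draws a uniformly random subset $H\subseteq\mathcal{Q}$ of size $O(k^{2^{t-1}/(2^t-1)})$, and in the final round runs a single \textsc{Round-Robin} on $H$ together with the tournament survivors $L$. The analysis then splits on the density $\gamma$ of items that are $1$-approximations to the maximum: if $\gamma$ is large, $H$ contains such an item with high probability and the final round-robin yields a $3$-approximation; if $\gamma$ is small, a randomly placed maximum-value item never meets a close competitor in the first $t-1$ rounds (a Markov-inequality count on the size of its subtree), so it survives into $L$ and the output is a $2$-approximation. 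This two-case density argument, not threshold tuning, is what pins the approximation at $3$ independently of $t$, and it is the idea your proposal is missing.
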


The $k$-wise simple hypothesis testing and hypothesis selection
problems can also be studied in the Statistical Queries (SQ) model of~\cite{Kearns98}. In this model, rather than being given samples
from a distribution $p$, the algorithm can ask queries specified by
bounded functions $\phi$, and get a (possibly adversarial) additive
$\tau$-approximation to the expectation of $\phi$ under $p$, where the
parameter $\tau$ is usually called the tolerance. For distributional
problems, \cite{KasiviswanathanLNRS11} showed that sample complexity in the LDP model is equivalent up to
polynomial factors to complexity in the SQ model, measured in terms of
the number of queries and the inverse tolerance
$\frac1\tau$. In particular, this
connection and our lower bound in
Theorem~\ref{thm:fully-interactive-lb} imply that $k$-wise simple
hypothesis testing in the SQ model requires that either the number of
queries or $\frac1\tau$ be polynomial in $k$. Because of the
polynomial loss, however, our precise study of the sample complexity
of these problems does not immediately translate to the SQ model. We
remark that both the 1-round algorithm in
Corollary~\ref{cor:1rRealizable}, and the algorithm in
Corollary~\ref{cor:informal-sequential} can be implemented in the SQ
model, and require, respectively, $1$ round and $O(\log \log k)$ rounds
of adaptive queries. Understanding the precise relationship between
the number of queries, the tolerance parameter, and the number of
rounds of adaptivity for solving hypothesis selection in the SQ model
is an interesting direction for future work.

Interestingly, Corollary~\ref{cor:informal-sequential} is derived as a consequence of a connection to maximum selection with adversarial comparators, a problem of independent interest.
This connection was previously established in works by Acharya, Falahatgar, Jafarpour, Orlitsky, and Suresh~\cite{AcharyaJOS14b, AcharyaFJOS18}. 
Prior work, however, has not exploited this connection under LDP constraints. Given the aforementioned importance of interactivity in the LDP setting, we initiate a study of the maximum selection with adversarial comparators problem from the perspective of understanding the trade-off between the number of rounds of parallel comparisons, and the total number of comparisons. 
The problem is as follows: we are given a set of items of unknown value, and we can perform comparisons between pairs of items.
If the value of the items is significantly different, the comparison will correctly report the item with the larger value.
If the values are similar, then the result of the comparison may be arbitrary.
The goal is to output an item with value close to the maximum.
We wish to minimize the total number of comparisons performed, as well as the number of rounds of interactivity.

Our main result for this setting gives a family of algorithms and lower bounds, parameterized by the number of rounds used (denoted by $t$).
Setting $t = O(\log \log k)$ yields Corollary~\ref{cor:informal-sequential}.
\begin{theorem}[Restatement of Theorems~\ref{thm:better-t-round} and~\ref{thm:lower-t-round}]
  \label{thm:informal-comparisons}
  For every $t \in \mathbb{Z}^+$, there exists a $t$-round protocol which, with probability $9/10$, approximately solves the problem of parallel approximate maximum selection with adversarial comparators from a set of $k$ items.
  The algorithm requires $O(k^{1 + \frac{1}{2^t-1}}t)$ comparison queries.
  Furthermore, any algorithm which provides these guarantees requires $\Omega(\frac{ k^{1 + \frac{1}{2^t-1}}}{3^t})$ comparison queries.
\end{theorem}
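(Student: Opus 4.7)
My plan is to establish both bounds via a recursive tournament structure. For the upper bound, I would design a $t$-round protocol as follows: in round $i$, if $k_i$ items remain (with $k_1 = k$), partition them into groups of size $g_i$, perform all $\binom{g_i}{2}$ comparisons within each group, and advance the ``tournament winner'' (say, the item that wins at least half of its within-group comparisons) from each group. The number of comparisons in round $i$ is then $O(k_i g_i)$ and $k_{i+1} = k_i / g_i$. Setting $g_i = k^{2^{i-1}/(2^t-1)}$ balances the recursion: since $\sum_{i=1}^t 2^{i-1} = 2^t - 1$, we have $\prod_i g_i = k$ so exactly one item survives, and for every $i$ the cost $k_i g_i = k \cdot g_i / \prod_{j<i} g_j = k^{1 + 1/(2^t-1)}$ is the same, giving the claimed total of $O(t \cdot k^{1+1/(2^t-1)})$ comparisons.

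For correctness, I would prove inductively that after round $i$ at least one survivor has value within $O(i\alpha)$ of the global maximum; at the end $t = O(\log \log k)$ rounds this is absorbed into the $O(\alpha)$ slack. The key lemma is a group-level guarantee: in any group in which all pairwise comparisons are performed, the item with the most wins lies within $O(\alpha)$ of the true group maximum, because any item more than $\alpha$ below the maximum loses to both the maximum and to everything else near it, while the maximum wins all comparisons against items $\alpha$-far below. If needed, I would advance a constant number of top vote-getters per group rather than a single winner, which preserves a near-maximum in each round at only a constant blow-up in total comparisons. Combining with the reduction to hypothesis selection (where a ``comparison'' is implemented by a small number of LDP Scheff\'e tests on a fresh batch of samples), plugging in $t = O(\log \log k)$ yields Corollary~\ref{cor:informal-sequential}.

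For the lower bound, I would use an adversary argument. The adversary maintains a ``confusion set'' $S_i \subseteq [k]$ of items still plausibly tied for the maximum, with $S_0 = [k]$. Within $S_i$ the adversary declares all items to be $\alpha$-close (so the oracle may answer comparisons arbitrarily), and comparisons involving items outside $S_i$ are answered honestly with an order consistent with everyone in $S_i$ being strictly larger. The combinatorial core is: if round $i{+}1$ makes $m$ comparisons within $S_i$, then viewing those as edges on $|S_i|$ vertices, Tur\'an's theorem guarantees an independent set of size $\Omega(|S_i|^2/m)$, which the adversary sets as $S_{i+1}$; any member of $S_{i+1}$ remains a valid candidate for maximum consistent with all answers so far. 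Iterating, $|S_t| \gtrsim k \cdot \prod_{i=1}^t \Paren{|S_{i-1}|/(3 C_i)}$ where $C_i$ is the comparison count in round $i$, and Lagrange-style optimization of $\sum_i C_i$ subject to forcing $|S_t| \le 1$ yields $\sum C_i = \Omega(k^{1+1/(2^t-1)}/3^t)$. The hard part will be executing the adversary argument carefully across rounds: one must ensure that the independent set chosen at step $i{+}1$ is compatible with the total order on items already implicitly revealed in earlier rounds, so that the adversary can extend its partial commitments to a global total order witnessing any element of $S_t$ as the true maximum; the factor $3^t$ in the denominator is precisely the cost of iterating the Tur\'an step $t$ times.
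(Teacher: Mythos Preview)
Your upper-bound recursion is the same as the paper's \textsc{Multi-Round} algorithm: partitioning into groups of size $k^{2^{i-1}/(2^t-1)}$ in round $i$ and running round-robin on each gives exactly the $O\bigl(t\,k^{1+1/(2^t-1)}\bigr)$ query count, and the per-round $2$-loss accumulates to a $2t$-approximation. However, there is a genuine gap in your correctness argument. The theorem (restating the paper's Theorem~\ref{thm:better-t-round}) asserts a \emph{constant} approximation independent of $t$; your proposal yields only $2t$, and your claim that this ``is absorbed into the $O(\alpha)$ slack'' is incorrect. Under the reduction, a $2t$-approximation in the comparator model becomes a $(3+\gamma)^{2t}$ agnostic factor for hypothesis selection, which for $t=\Theta(\log\log k)$ is $\mathrm{polylog}(k)$, not a constant, so Corollary~\ref{cor:informal-sequential} would not follow. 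Your suggested fix of advancing several top vote-getters per group does not break the cascade: the round-robin winner of a group is already guaranteed to be within $2$ of the group maximum, and carrying extra candidates forward still only controls distance to the \emph{surviving} maximum, not to the original one. The paper closes this gap with a different idea: in the final round it runs round-robin on the $t$-th level survivors \emph{together with} a fresh random sample of $O\bigl(k^{2^{t-1}/(2^t-1)}\bigr)$ original items. Either items within $1$ of the true maximum are dense (so the random sample contains one, giving a $3$-approximation), or they are sparse (so with high probability the true maximum never meets one in the recursion and survives to the last round, giving a $2$-approximation). You are missing this dichotomy argument.

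Your lower-bound plan is correct in outline but takes a genuinely different route from the paper. You use an \emph{adaptive} adversary who, after seeing each round's queries, extracts (via Tur\'an) a large independent set inside the current confusion set and commits items outside it to smaller values; iterating the quadratic recurrence $|S_{i+1}|\gtrsim |S_i|^2/C_{i+1}$ and optimizing yields the stated bound, with the $3^t$ coming from the Tur\'an constant compounded over rounds. The consistency worry you flag is real but manageable: since $S_{i+1}$ is independent in the round-$(i{+}1)$ query graph, the adversary never has to answer a within-$S_{i+1}$ query in that round, and can answer all other queries consistently with the layered value assignment $x=2j\tau$ for items in $S_j\setminus S_{j+1}$. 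The paper instead proves the bound against a strictly weaker \emph{non-adaptive} adversary: it fixes in advance a random $(t{+}1)$-layer tournament (the ``$(k,t)$-construction'') and shows by induction that, conditioned on the first round's answers, the subgraph on the undiscovered portion of the upper layers is distributed as a $(k',t{-}1)$-construction with $k'\approx k^{(2^t-2)/(2^t-1)}$. Your argument is more elementary and arguably cleaner; the paper's buys a stronger statement, since it shows the lower bound holds even when all comparison outcomes are fixed before the algorithm begins.
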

For each number of rounds $t$, we prove an upper bound and an almost-matching lower bound. 
In order to get down to a near-linear number of comparisons, we require $O(\log \log k)$ rounds, which is exponentially better than the $O(\log k)$ rounds required by previous algorithms.
Interestingly, in this setting, while maximum selection (with standard comparisons) with $\tilde O(k)$ queries is achievable in only 3 rounds, we show that $\Theta(\log \log k)$ rounds are both necessary and sufficient to achieve a near-linear number of comparisons when the results might be adversarial.

Our upper bounds follow by carefully applying a recursive tournament structure: in each round, we partition the input into appropriately-sized smaller groups, perform all pairwise-comparisons within each group, and send only the winners to the next round.
Additional work is needed to prevent the quality of approximation from decaying as the number of rounds increases. For the lower bound, we restate the problem as a game, in which the adversary constructs a random complete directed graph with a unique sink, and the algorithm queries the directions of edges, and tries to identify the sink in the smallest number of queries and rounds. We give a strategy in which the adversary constructs a layered graph with $t+1$ layers, where $t$ is the number of rounds in the game. We can guarantee that, if the algorithm does not make enough queries, then even after conditioning on the answers to the queries in the first $q$ rounds, the last $t+1 - q$ layers of the graph remain sufficiently random, so that the algorithm cannot guess the sink with reasonable probability. In particular, after $t$ rounds, there is still enough randomness in the $(t+1)$-st layer to make sure that algorithm cannot guess the sink correctly with high probability.

A self-contained description of the connection between hypothesis selection and maximum selection with adversarial comparators, as well as our upper and lower bounds, appear in Section~\ref{sec:comparators}.

\subsection{Related Work}
\label{sec:related}
As mentioned before, our work builds on a long line of investigation on hypothesis selection.
This style of approach was pioneered by Yatracos~\cite{Yatracos85}, and refined in subsequent work by Devroye and Lugosi~\cite{DevroyeL96, DevroyeL97, DevroyeL01}.
After this, additional considerations have been taken into account, such as computation, approximation factor, robustness, and more~\cite{MahalanabisS08, DaskalakisDS12b, DaskalakisK14, SureshOAJ14, AcharyaJOS14b, DiakonikolasKKLMS16, AcharyaFJOS18, BousquetKM19, BunKSW19}.
Most relevant is the recent work of Bun, Kamath, Steinke, and Wu~\cite{BunKSW19}, which studies hypothesis selection under central differential privacy.
Our results are for the stronger constraint of local differential privacy.

Versions of our problem have been studied under both central and local differential privacy.
In the local model, the most pertinent result is that of Duchi, Jordan, and Wainwright~\cite{DuchiJW13, DuchiJW17}, showing a lower bound on the sample complexity for simple hypothesis testing between two known distributions. 
This matches folklore upper bounds for the same problem.
However, the straightforward way of extending said protocol to $k$-wise simple hypothesis testing would incur a cost of $\tilde O(k^2)$ samples.
Other works on hypothesis testing under local privacy include~\cite{GaboardiR18,Sheffet18,AcharyaCFT19,AcharyaCT19,JosephMNR19}.
In the central model, some of the early work was done by the Statistics community~\cite{VuS09,UhlerSF13}.
More recent work can roughly be divided into two lines -- one attempts to provide private analogues of classical statistical tests~\cite{WangLK15,GaboardiLRV16,KiferR17,KakizakiSF17,CampbellBRG18,SwanbergGGRGB19,CouchKSBG19}, while the other focuses more on achieving minimax sample complexities for testing problems~\cite{CaiDK17, AcharyaSZ18, AliakbarpourDR18, AcharyaKSZ18, CanonneKMUZ19, AliakbarpourDKR19, AminJM19}. 
While most of these focus on composite hypothesis testing, we highlight~\cite{CanonneKMSU19} which studies simple hypothesis testing.
Work of Awan and Slavkovic~\cite{AwanS18} gives a universally optimal test for binomial data, however Brenner and Nissim~\cite{BrennerN14} give an impossibility result for distributions with domain larger than $2$.
For further coverage of differentially private statistics, see~\cite{KamathU20}.

We are the first to study parallel maximum selection with adversarial comparators.
Prior work has investigated (non-parallel) maximum selection and sorting with adversarial comparators~\cite{AjtaiFHN09, AcharyaJOS14b, AcharyaFJOS18}.
Works by Acharya, Falahatgar, Jafarpour, Orlitsky, and Suresh established the connection with hypothesis selection~\cite{AcharyaJOS14b, AcharyaFJOS18}.
The parallelism model we study here was introduced by Valiant~\cite{Valiant75}, for parallel comparison-based problems with non-adversarial comparators.
This has inspired a significant literature on parallel sorting and selection~\cite{HaggkvistH81,AjtaiKS83,BollobasT83,Kruskal83, Leighton84, BollobasH85, Alon85, AlonAV86, AzarV87, Pippenger87, AlonA88a, AlonA88b, AzarP90, BollobasB90,FeigeRPU94, BravermanMW16, BravermanMP19,CohenMM20}.
Also, note that the noisy comparison models considered in some of these papers (where comparisons are incorrect with a certain probability) is different from the adversarial comparator model we study.
Thematically similar investigations on round complexity exist in the context of best arm identification for multi-armed bandits~\cite{AgarwalAAK17, TaoZZ19}.


\section{Preliminaries}
\label{sec:pre}

We recall the definition of total variation distance between probability distributions.
 \begin{definition}
   The \emph{total variation distance} between distributions $p$ and
   $q$ on $\Omega$ is defined as
   \[
     \dtv(p,q) = \max_{S \subseteq \Omega} p(S) - q(S) =\frac12 \int_{x \in \Omega} |p(x) - q(x)| d= \frac12 \|p - q\|_1 \in [0,1].
   \]
\end{definition}

  

We now define differential privacy, and the variants which we are concerned with.

\begin{definition}[\cite{DworkMNS06}]
  An algorithm $M$ with domain $\mathcal{X}^n$ is \emph{$\ve$-differentially private} if, for all $S \subseteq \mathrm{Range}(M)$ and all inputs $x,y \in \mathcal{X}^n$ which differ in exactly one point,
  \[
    \Pr[M(x) \in S] \leq e^\varepsilon \Pr[M(y) \in S].
  \]
  This definition is also called \emph{pure differential privacy}.
\end{definition}

In the local setting of differential privacy, we imagine that each user has a single datapoint. 
We require that each individual's output is differentially private.
\begin{definition}[\cite{Warner65, EvfimievskiGS03, KasiviswanathanLNRS11}]
  Suppose there are $n$ individuals, where the $i$th individual has
  datapoint $X_i$. In each round $q$ of the protocol, there is a set
  $U_q\subseteq [n]$ of active individuals, and  each individual
  $i$ in $U_q$  computes some (randomized) function of their datapoint $X_i$, and of
  all messages $\{m_{r,j}: r\le q, j \in U_r\}$ output by all individuals in previous rounds, and outputs a message $m_{q,i}$. 
  A protocol is \emph{$\varepsilon$-locally differentially private}
  (LDP) if the set $\{m_{q,i}: q \in [t], i \in U_q\}$ of all messages
  output during the $t$ rounds of the protocol is $\ve$-differentially
  private with respect to the inputs $(X_1, \ldots, X_n)$.
\end{definition}

We note that there are many notions of interactivity in LDP, and we
cover the two primary definitions which we will be concerned with:
non-interactive and sequentially interactive protocols.
\begin{definition}
  An $\ve$-LDP protocol is \emph{non-interactive} if the number of
  rounds is $t=1$, and $U_1 = [n]$, i.e., every individual $i$ outputs a
  single message $m_i$, dependent only on their datapoint $X_i$.

  An $\ve$-LDP protocol is \emph{sequentially interactive} with $t$
  rounds of interaction if the sets $U_1, \ldots, U_t$ of active individuals
  in each round are disjoint.
  
\end{definition}

We recall the canonical $\ve$-LDP algorithm, randomized response.
\begin{lemma}
  \label{lem:rr}
  \emph{Randomized response} is the protocol when each user has a bit $X_i \in \{0,1\}$ and outputs $X_i$ with probability $\frac{e^\ve}{1 + e^\ve}$ and $1 - X_i$ with probability $\frac{1}{1 + e^\ve}$.
  It satisfies $\ve$-local differential privacy.
\end{lemma}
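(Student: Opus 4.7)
The plan is to verify the definition of $\varepsilon$-local differential privacy directly for the single-user randomized response mechanism $M$. Since the protocol is non-interactive (only one round, each user sends one message that depends only on their own bit $X_i$), the joint message vector is $\varepsilon$-differentially private with respect to $(X_1,\ldots,X_n)$ if and only if each individual mechanism is $\varepsilon$-differentially private with respect to its single bit; this reduces the problem to analyzing a single invocation $M$ on $\{0,1\}$.

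For the single-bit analysis, the plan is to check the privacy inequality for every pair of neighboring inputs and every output set. Because the output space $\{0,1\}$ is finite with only two elements, it suffices to verify
\[
  \Pr[M(x) = s] \le e^\varepsilon \Pr[M(y) = s]
\]
for all $x,y,s \in \{0,1\}$. When $x = y$ this is trivial. When $x \ne y$, by the definition of randomized response we have $\Pr[M(x) = x] = \frac{e^\varepsilon}{1+e^\varepsilon}$ and $\Pr[M(y) = x] = \Pr[M(y) = 1 - y] = \frac{1}{1+e^\varepsilon}$, so the ratio is exactly
\[
  \frac{\Pr[M(x) = x]}{\Pr[M(y) = x]} = \frac{e^\varepsilon/(1+e^\varepsilon)}{1/(1+e^\varepsilon)} = e^\varepsilon,
\]
and by symmetry the same holds for the output $s = y$. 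This covers all four nontrivial (input, output) pairs and establishes $\varepsilon$-differential privacy for the per-user mechanism.

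Finally, the plan is to lift this per-user guarantee to the protocol-level LDP definition. Since users act independently in a non-interactive protocol and each user's message is $\varepsilon$-DP with respect to their own input, the full transcript $\{m_i\}_{i \in [n]}$ is $\varepsilon$-DP with respect to $(X_1,\ldots,X_n)$: on neighboring databases differing in coordinate $i^\star$, only the distribution of $m_{i^\star}$ changes, and the ratio of joint output densities factors into a product where all but one factor cancels, leaving a single per-user ratio bounded by $e^\varepsilon$. There is no real obstacle here; the only point that requires a moment of care is the reduction from the joint LDP definition to a per-user statement, which is immediate in the non-interactive setting because messages are conditionally independent given the inputs.
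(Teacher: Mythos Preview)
Your proof is correct. The paper does not actually prove this lemma; it simply states randomized response as the canonical $\varepsilon$-LDP mechanism and moves on, treating the privacy guarantee as a well-known fact. Your direct verification of the DP inequality over all input/output pairs, together with the observation that in a non-interactive protocol the per-user guarantee immediately lifts to the full transcript, is the standard argument and fills in exactly what the paper leaves implicit.
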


There exists a simple folklore algorithm for $\ve$-LDP $2$-wise simple hypothesis testing: use randomized response to privately count the number of samples which fall into the region where one distribution places more mass, and output the distribution which is more consistent with the resulting estimate.
This gives the following guarantees.
\begin{lemma}
  There exists a non-interactive $\ve$-LDP algorithm which solves $2$-wise simple hypothesis testing with probability $1- \beta$, which requires $n = O(\log(1/\beta)/\alpha^2\varepsilon^2)$ samples.
\end{lemma}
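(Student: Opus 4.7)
The plan is to implement the folklore algorithm sketched informally in the text: use the optimal Scheffé (likelihood-ratio) test, but with each user applying randomized response to a single Bernoulli indicator rather than sending their raw sample.

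First, I would define the Scheffé set $S = \{x : p(x) > q(x)\}$, so that by the definition of total variation distance, $p(S) - q(S) = \dtv(p,q) \ge \alpha$. Each of the $n$ users, upon receiving their sample $X_i$ from the unknown distribution, would locally compute the bit $Y_i = \mathbbm{1}[X_i \in S]$, then apply randomized response from Lemma~\ref{lem:rr} to $Y_i$ and transmit the resulting bit $Z_i$. This is non-interactive and $\ve$-LDP by Lemma~\ref{lem:rr} and post-processing, since $Y_i$ is a deterministic function of $X_i$.

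Next, I would have the curator form the standard debiased estimator
\[
\hat\mu = \frac{e^\ve+1}{e^\ve-1}\Paren{\frac1n\sum_{i=1}^n Z_i - \frac{1}{1+e^\ve}},
\]
which is unbiased for $\mu := \Pr[X_i \in S]$. Note $\mu = p(S)$ if the true distribution is $p$ and $\mu = q(S)$ if it is $q$, and these two values differ by at least $\alpha$. The algorithm outputs $p$ if $\hat\mu$ is closer to $p(S)$ and $q$ otherwise.

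For correctness, I would apply Hoeffding's inequality to the bounded random variables $Z_i \in \{0,1\}$ to obtain
\[
\Pr\Bigl[\,\Bigl|\tfrac1n\sum_i Z_i - \E Z_i\Bigr| > t\,\Bigr] \le 2e^{-2nt^2},
\]
and then multiply through by the scaling factor $\frac{e^\ve+1}{e^\ve-1}$, which for $\ve \in (0,1]$ is $\Theta(1/\ve)$. Setting the resulting deviation bound to $\alpha/3$ and solving for $n$ gives $n = O(\log(1/\beta)/(\alpha^2\ve^2))$ samples to ensure $|\hat\mu - \mu| \le \alpha/3 < \alpha/2$ with probability at least $1-\beta$, at which point the curator's decision rule is correct. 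There is no real obstacle here; the only point requiring minor care is the $\Theta(\ve)$ scaling of $(e^\ve-1)/(e^\ve+1)$ in the small-$\ve$ regime, which is what produces the $1/\ve^2$ factor in the sample complexity.
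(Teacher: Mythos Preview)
Your proposal is correct and follows essentially the same approach as the paper's (informal) treatment: the paper states this lemma as folklore without a formal proof, describing the algorithm as ``use randomized response to privately count the number of samples which fall into the region where one distribution places more mass, and output the distribution which is more consistent with the resulting estimate.'' Your proof fleshes out exactly this idea with the Scheff\'e set, the debiased estimator, and a Hoeffding bound, and the details are all in order.
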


This can be extended to $k$-wise simple hypothesis testing by simply running said algorithm on pairs of distributions and picking the one which never loses a hypothesis test.
This gives us an $\tilde O(k^2)$ baseline algorithm for locally private hypothesis selection.
\begin{corollary}
  There exists a non-interactive $\ve$-LDP algorithm which solves $k$-wise simple hypothesis testing with high probability, which requires $n = O(k^2\log k/\alpha^2\varepsilon^2)$ samples.
\end{corollary}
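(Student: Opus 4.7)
The plan is to reduce $k$-wise simple hypothesis testing to running the folklore $2$-wise algorithm on every pair of candidate distributions, with disjoint groups of users for each pair so that the overall protocol remains non-interactive and $\varepsilon$-LDP.

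First, I would partition the $n$ users into $\binom{k}{2}$ groups, one per unordered pair $(i,j)$ with $i<j$, each of size $m = n/\binom{k}{2}$. The group assigned to $(i,j)$ runs the non-interactive $\varepsilon$-LDP $2$-wise tester from the previous lemma between $q_i$ and $q_j$. Because each user sends exactly one message, based only on their own datapoint and on the pre-determined pair they are responsible for, the protocol is non-interactive, and since each user's message comes from an $\varepsilon$-LDP mechanism, the joint protocol is $\varepsilon$-LDP (there is no composition across users in the local model).

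Next, I would set the per-pair failure probability to $\beta = 1/(10 k^2)$. By the $2$-wise lemma, each group needs $m = O(\log(1/\beta)/(\alpha^2 \varepsilon^2)) = O(\log k / (\alpha^2 \varepsilon^2))$ samples, giving a total of
\[
n = \binom{k}{2} \cdot m = O\!\left(\frac{k^2 \log k}{\alpha^2 \varepsilon^2}\right),
\]
which matches the claimed bound. A union bound over all $\binom{k}{2}$ pairs guarantees that every pairwise test succeeds simultaneously with probability at least $9/10$ (or any other constant, by adjusting the constant in $\beta$); the $\log(1/\beta)$ dependence only inflates constants.

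Finally, I would show correctness under the event that all $2$-wise tests succeed. Since $p \in \mathcal{Q}$, write $p = q_{i^*}$. The hypothesis-testing assumption $\min_{i\neq j}\dtv(q_i,q_j) \ge \alpha$ ensures that in every pair involving $q_{i^*}$, the tester is between $p$ and a distribution that is $\alpha$-far from $p$; by the guarantee of the $2$-wise tester it correctly outputs $q_{i^*}$. Conversely, every other $q_j$ loses its comparison against $q_{i^*}$. Thus the aggregation rule of returning the unique distribution that wins all of its comparisons identifies $q_{i^*} = p$, which completes the proof. There is no real obstacle here; the only minor bookkeeping point is confirming that the per-user $\varepsilon$-LDP guarantee of each pairwise subprotocol lifts to $\varepsilon$-LDP for the whole protocol without any composition loss, which holds because users are partitioned and each contributes to only one subprotocol.
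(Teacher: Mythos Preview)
Your proposal is correct and follows essentially the same approach as the paper, which only sketches the argument in one sentence (``simply running said algorithm on pairs of distributions and picking the one which never loses a hypothesis test''); you have filled in the details faithfully. One small imprecision: the $2$-wise lemma only guarantees success when $p$ is one of the two hypotheses, so the union bound really only needs to (and only can) cover the $k-1$ pairs involving $q_{i^*}$---but since your argument ultimately uses exactly those pairs, and your choice of $\beta$ already suffices, this does not affect correctness or the bound.
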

We note that the same algorithm also solves the more general problem of $\ve$-LDP hypothesis selection, see Section~\ref{sec:comparators} and particularly Section~\ref{sec:comp:easy-k2}.

\section{Lower Bounds for Locally Private Hypothesis Selection}
\label{sec:lb}

In this section we state sample complexity lower bound results on
locally private hypothesis selection. We will first focus on the lower
bound for non-interactive protocols, and leverage a known lower
bound on locally private selection due to \cite{Ullman18} (a similar statement appears in~\cite{DuchiJW17}), which also
follows from the lower bound for sparse estimation in~\cite{DuchiJW17}. Let
$d\in \mathbb{N}$, $\alpha \in [0,1]$, and let $U_d$ be a uniform
distribution over $\{\pm 1\}^d$. For every $b \in \{\pm 1\}$ and
$j\in [d]$, we define distribution
$p_{b,j} = (1 - \alpha) U_d + \alpha \left( U_d \mid x_j = b\right)$,
that is, the distribution that is uniform over $\{\pm 1\}^d$ except
that $X_j = b$ with probability $1/2 + \alpha$.

\begin{theorem}[Theorem 3.1 of \cite{Ullman18}]\label{goodhombre}
  Let $\ve \in (0,1)$.  Let $d > 32$, $B$ be distbuted uniformly
  over $\{\pm 1\}$, and let $J$ be distributed uniformly over $[d]$.
  Suppose $M$ is an non-interactive $\ve$-LDP protocol and $n$ is such that
  \[
    \Pr_{B, J, X_1, \ldots , X_n \sim (p_{B,J}| B, J)} [ M(X_1, \ldots
    , X_n) = (B, J)] \geq 1/3.
  \]
  Then
  $$n = \Omega\left( \frac{d \log d}{ \alpha^2 \ve^2}\right).$$
\end{theorem}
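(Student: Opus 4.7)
The plan is to combine Fano's inequality with a strong data processing inequality for non-interactive $\ve$-LDP channels in the style of Duchi, Jordan, and Wainwright. Writing $Y_1,\ldots,Y_n$ for the messages output by the $n$ users, I would sandwich the mutual information $I((B,J); Y_1,\ldots,Y_n)$ between $\Omega(\log d)$ from below (using success of the protocol) and $O(n\ve^2\alpha^2/d)$ from above (using $\ve$-LDP contraction applied to the sparse perturbations $p_{B,J} - U_d$), forcing $n = \Omega(d \log d/(\alpha^2 \ve^2))$.

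For the lower bound, Fano's inequality: since $(B,J)$ is uniform over $2d$ equally likely values and the protocol's output $(\hat B, \hat J)$ matches $(B,J)$ with probability $\ge 1/3$,
$$I((B,J); (\hat B, \hat J)) \;\ge\; \tfrac{1}{3}\log(2d) - \log 2 \;=\; \Omega(\log d)$$
for $d > 32$, and by the data processing inequality the same bound holds for $I((B,J); Y_1,\ldots,Y_n)$. For the upper bound, non-interactivity tensorizes the mutual information: conditional on $(B,J)$ the messages $Y_i$ are mutually independent (each user acts only on its own $X_i$ with fresh local randomness), so $I((B,J); Y_1,\ldots,Y_n) \le \sum_{i=1}^n I((B,J); Y_i)$, and the problem reduces to bounding the per-sample information.

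The heart of the proof is showing $I((B,J); Y_i) = O(\ve^2\alpha^2/d)$. Direct computation gives $p_{b,j}(x) = 2^{-d}(1 + \alpha b x_j)$ on $\{\pm 1\}^d$, so the perturbations $\phi_{b,j} := p_{b,j} - U_d$ take the Fourier-character form $\phi_{b,j}(x) = 2^{-d}\alpha b x_j$ and are orthogonal in $\ell_2$ across the $d$ coordinates. This orthogonality is the source of the $1/d$ saving: for any bounded test function $\psi : \{\pm 1\}^d \to [-1,1]$, Parseval's identity gives
$$\sum_{(b,j) \in \{\pm 1\}\times [d]} \Bigl(\mathbb{E}_{p_{b,j}}[\psi] - \mathbb{E}_{U_d}[\psi]\Bigr)^2 \;=\; 2\alpha^2 \sum_{j=1}^d \widehat{\psi}(\{j\})^2 \;\le\; 2\alpha^2 \|\psi\|_2^2 \;\le\; 2\alpha^2,$$
so the total ``squared signal'' any single bounded scoring function can extract across all $2d$ hypotheses is $O(\alpha^2)$. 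Plugging this into the DJW strong data processing inequality (which bounds each $\DKL(Q(p_{b,j})\|Q(U_d))$ by $O(\ve^2)$ times a signal-squared quantity of exactly this flavor) and averaging over the uniform prior on $(B,J)$ yields $I((B,J); Y_i) = O(\ve^2\alpha^2/d)$. Combining with Fano gives $\log d \lesssim n \ve^2 \alpha^2/d$, i.e.\ $n = \Omega(d\log d/(\alpha^2\ve^2))$.

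The main obstacle is this $1/d$ saving in the per-sample information bound. The basic Pinsker-style $\ve$-LDP contraction $\DKL(Q(p_{b,j})\|Q(U_d)) \lesssim \ve^2 \dtv(p_{b,j}, U_d)^2 = O(\ve^2\alpha^2)$ is insensitive to the number of hypotheses and loses the factor $d$, only yielding $n = \Omega(\log d/(\alpha^2\ve^2))$. The saving genuinely requires exploiting the $\ell_2$-orthogonality of the coordinate-perturbations $\phi_{b,j}$ via the sharp Parseval-style form of the DJW inequality: intuitively, a single $\ve$-LDP message can only be meaningfully correlated with a few of the $d$ orthogonal directions of perturbation, so on average it carries $O(\ve^2\alpha^2/d)$ nats of information about $(B,J)$ rather than $O(\ve^2\alpha^2)$. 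Once this sharp contraction is in hand, the rest is the standard Fano-plus-tensorization template for non-interactive LDP lower bounds.
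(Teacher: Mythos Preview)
The paper does not give its own proof of this statement: it is quoted verbatim as Theorem~3.1 of \cite{Ullman18} and used as a black box. So there is no in-paper argument to compare against.

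That said, your sketch is the standard proof of this result, essentially the Duchi--Jordan--Wainwright/Ullman argument. The three ingredients you identify---Fano for the $\Omega(\log d)$ lower bound, tensorization of mutual information from non-interactivity, and the sharp per-sample bound $I((B,J);Y_i) = O(\ve^2\alpha^2/d)$ from the $\ell_2$-orthogonality of the perturbations $\phi_{b,j}(x) = 2^{-d}\alpha b x_j$ combined with the DJW strong data processing inequality---are exactly the right ones, and your diagnosis that the naive Pinsker bound loses the factor of $d$ is correct. The only thing left implicit is the precise form of the DJW inequality you invoke (the version that controls the average KL by $(e^\ve-1)^2$ times the operator-norm-squared of the perturbation map, or equivalently the supremum over bounded $\psi$ of the averaged squared correlations), but this is a known lemma and your Parseval computation is the right input to it.
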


To obtain a lower bound on hypothesis selection, we will rely on the
following fact that bounds the total variation distance between the
distributions $p_{b,j}$ (see e.g., Lemma 6.4 in \cite{KamathLSU19}).

\begin{fact}\label{prodtv}
  Let $q$ and $q'$ be two product distributions over $\{\pm 1\}^d$
  with mean vectors $\mu$ and $\mu'$ respectively, such that
  $\mu_i \in [-1/3, 1/3]$ for all $j\in [d]$. Suppose that
  $\|\mu - \mu'\|_2 \geq \alpha$ for any $\alpha \leq \alpha_0$ with
  some absolute constant $0 < \alpha_0 \leq 1$. Then
  $\dtv(q, q') \geq C \alpha$, for some absolute constant $C$.
\end{fact}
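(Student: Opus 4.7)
The plan is to reduce to a one-dimensional analysis by projecting onto a carefully chosen linear test statistic. Let $v = \mu - \mu'$ and define $T(x) = \langle v, x\rangle$; by the data processing inequality, $\dtv(q,q') \geq \dtv(T\mid q, T\mid q')$. A direct moment computation yields $\E_q[T] - \E_{q'}[T] = \|v\|_2^2 = \alpha^2$, while $\Var_q[T] = \sum_i v_i^2(1-\mu_i^2) \in [\tfrac{8}{9}\alpha^2, \alpha^2]$ (using $|\mu_i|\leq 1/3$), and symmetrically under $q'$. Intuitively, under each distribution $T$ has standard deviation $\Theta(\alpha)$ with mean gap $\alpha^2$, so in analogy with the Gaussian calculation $\dtv(\mathcal{N}(\alpha^2,\alpha^2),\mathcal{N}(0,\alpha^2)) \asymp \alpha$, one expects to extract $\dtv(q,q') = \Omega(\alpha)$ from $T$.

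I would then split into two cases depending on how the mass of $v$ is concentrated. If some coordinate satisfies $|v_i|\geq c\alpha$ for a small absolute constant $c$, I would instead project onto that single coordinate via data processing: the marginals $q_i, q'_i$ are Bernoulli with mean gap $|v_i|$, giving $\dtv(q,q')\geq \dtv(q_i, q'_i) = |v_i|/2 \geq c\alpha/2$. Otherwise $\|v\|_\infty < c\alpha$, and $T-\E[T]$ is a sum of independent mean-zero terms each bounded by $2c\alpha$ with total variance $\Theta(\alpha^2)$. In this regime I would invoke anti-concentration---either via a local CLT / Esseen inequality, or via a fourth-moment Paley--Zygmund argument exploiting that the fourth central moment of $T$ is $O(\alpha^4)$ under the bounded-coordinate assumption---to argue that the law of $T$ under both $q$ and $q'$ assigns ``density'' of order $\Theta(1/\alpha)$ to a length-$\Omega(\alpha)$ interval around its mean. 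Setting the threshold $t=(\E_q[T]+\E_{q'}[T])/2$ then gives $\Pr_q[T\geq t]-\Pr_{q'}[T\geq t] = \Omega(\alpha^2\cdot 1/\alpha) = \Omega(\alpha)$.

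The main technical obstacle is precisely the anti-concentration step in the second case: a direct Berry--Esseen estimate only bounds the CDF-error by $O(\|v\|_\infty/\|v\|_2) = O(c)$, which is too weak to detect a TV gap at scale $\alpha$. The standard remedy is to combine Berry--Esseen with a local anti-concentration bound (e.g., an Esseen or Erd\H{o}s--Littlewood--Offord style estimate on the concentration function of $T$), or to work directly with higher moments. The hypothesis $\alpha \leq \alpha_0$ keeps us in the linear-response regime where the Gaussian-like $\Theta(\alpha)$ behavior is tight, and the assumption $|\mu_i|\leq 1/3$ ensures every per-coordinate variance $1-\mu_i^2$ is bounded away from zero so that no coordinate is degenerate and the CLT approximation has a chance of holding uniformly in the dimension $d$.
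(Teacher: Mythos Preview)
The paper does not give its own proof of this fact; it is quoted from Lemma~6.4 of~\cite{KamathLSU19}. Note also that for the paper's actual application---the distributions $p_{b,j}$ of Section~\ref{sec:lb}---the difference $\mu-\mu'$ is always $1$- or $2$-sparse, so your Case~1 projection onto a single coordinate already suffices there.

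For the general statement, Case~2 is the substantive part, and the remedies you list do not close the gap you correctly flag. A fourth-moment Paley--Zygmund argument yields \emph{anti}-concentration of $T$, i.e.\ an \emph{upper} bound on the mass $T$ assigns to short intervals; but your heuristic ``shift $\alpha^2$ times density $\Theta(1/\alpha)$ equals $\Theta(\alpha)$'' requires a \emph{lower} bound on that mass near the threshold, and the two are not interchangeable (e.g.\ $T$ supported on $\{-\alpha,\alpha\}$ has second moment $\alpha^2$ and fourth moment $\alpha^4$, yet zero mass strictly between the atoms). A local CLT would supply such a pointwise lower bound, but its hypotheses need to be verified and do not follow automatically from $\|v\|_\infty\le c\alpha$. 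A cleaner route, close in spirit to your outline but avoiding the case split, is to work with the likelihood ratio: write $L(x)=q(x)/q'(x)=\prod_i(1+Y_i)$ with $Y_i=v_ix_i/(1+\mu'_ix_i)$, so that $|Y_i|\le\tfrac32|v_i|$ by the hypothesis $|\mu'_i|\le\tfrac13$. Then under $q'$ the $Y_i$ are independent with mean zero, $\E_{q'}[(\sum_iY_i)^2]=\Theta(\|v\|_2^2)$ and $\E_{q'}[(\sum_iY_i)^4]=O(\|v\|_2^4)$, so the H\"older bound $\E|Z|\ge(\E Z^2)^{3/2}/(\E Z^4)^{1/2}$ gives $\E_{q'}\lvert\sum_iY_i\rvert\ge c_1\|v\|_2$; the remainder $R=L-1-\sum_iY_i$ satisfies $\E_{q'}[R^2]=O(\|v\|_2^4)$, whence $\dtv(q,q')=\tfrac12\E_{q'}|L-1|\ge c_1\|v\|_2-O(\|v\|_2^2)\ge C\alpha$ once $\alpha\le\alpha_0$.
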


\begin{theorem}[Non-interactive lower bound]
  Let $\ve \in (0,1)$.
  Suppose $M$ is a non-interactive an $\ve$-LDP protocol that solves the $k$-wise
  simple hypothesis testing problem  with probability at least $1/3$
  when given $n$ samples from some distribution $p \in \mathcal{Q}$,
  where $\mathcal{Q} = \{q_1, \ldots, q_k\}$ are distributions such
  that $\min_{i \neq j} \dtv(q_i, q_j) \ge \alpha$. Then
  $$n \geq \Omega\left( \frac{k \log k}{ \alpha^2 \ve^2}\right).$$
\end{theorem}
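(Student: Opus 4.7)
The plan is to reduce the sparse estimation problem of Theorem~\ref{goodhombre} to $k$-wise simple hypothesis testing. Take $d = \lfloor k/2 \rfloor$ and consider the family $\mathcal{Q} = \{ p_{b,j} : b \in \{\pm 1\}, j \in [d]\}$, which contains $k$ distributions (padding with arbitrary well-separated distributions if $k$ is odd). The intuition is clean: if some non-interactive $\varepsilon$-LDP hypothesis testing protocol $M$ can identify which $q \in \mathcal{Q}$ produced the samples, then $M$ automatically recovers the pair $(B,J)$, and hence the lower bound from Theorem~\ref{goodhombre} applies directly with $d = \Theta(k)$.

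The first step is to verify that the distributions in $\mathcal{Q}$ are pairwise $\Omega(\alpha)$-separated in total variation. Each $p_{b,j}$ is a product distribution on $\{\pm 1\}^d$ with mean vector $\mu^{(b,j)}$ having $\mu^{(b,j)}_j = b\alpha$ and $\mu^{(b,j)}_i = 0$ for $i \neq j$. For any $(b,j) \neq (b',j')$, a direct computation gives $\|\mu^{(b,j)} - \mu^{(b',j')}\|_2 \in \{\sqrt{2}\alpha, 2\alpha\}$, and since $\alpha < 1/3$ the hypotheses of Fact~\ref{prodtv} are satisfied. Thus $\dtv(p_{b,j}, p_{b',j'}) \ge C\sqrt{2}\alpha$ for an absolute constant $C$. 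Rescaling $\alpha$ by a constant factor (which is absorbed into the final $\Omega$-notation) ensures that $\min_{i \neq j} \dtv(q_i, q_j) \ge \alpha$, as required by the hypothesis-testing problem statement.

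Next, given the protocol $M$ guaranteed by the theorem, I would construct a sparse-estimation protocol $M'$ as follows: on input samples $X_1, \dots, X_n$ drawn from $p_{B,J}$, simply run $M$ and output the pair $(b,j)$ corresponding to the distribution $\hat q \in \mathcal{Q}$ returned by $M$. Since $M$ is non-interactive and $\varepsilon$-LDP, so is $M'$. Moreover, if $M$ correctly identifies $p_{B,J}$ with probability at least $1/3$ (which it does when the true distribution lies in $\mathcal{Q}$ and the family is $\alpha$-separated), then $M'$ correctly identifies $(B,J)$ with probability at least $1/3$, matching the success criterion of Theorem~\ref{goodhombre}.

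Finally, applying Theorem~\ref{goodhombre} to $M'$ gives
\[
  n = \Omega\!\left(\frac{d \log d}{\alpha^2 \varepsilon^2}\right) = \Omega\!\left(\frac{k \log k}{\alpha^2 \varepsilon^2}\right),
\]
which is the desired bound. The main conceptual step is recognizing that the $2d$ sparse-mean distributions $p_{b,j}$ already serve as a hard instance of hypothesis testing; the only technical care is in Fact~\ref{prodtv}'s application to confirm the pairwise TV-separation and in handling the constant factor rescaling of $\alpha$. There is no real obstacle, since all the nontrivial work is encapsulated in the cited Theorem~\ref{goodhombre} and Fact~\ref{prodtv}; the argument is essentially a direct translation of the sparse mean estimation lower bound into our setting.
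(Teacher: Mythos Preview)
Your proposal is correct and follows essentially the same approach as the paper: both use the family $\mathcal{Q} = \{p_{b,j} : b \in \{\pm 1\}, j \in [d]\}$ with $k = 2d$, verify the pairwise total-variation separation via Fact~\ref{prodtv}, and then invoke Theorem~\ref{goodhombre} directly. Your write-up is in fact more explicit than the paper's (you spell out the reduction $M \mapsto M'$, compute the $\ell_2$ distances between mean vectors, and handle odd $k$ by padding), but the underlying argument is identical.
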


\begin{proof}
  Let $\mathcal{Q} = \{p_{b,j} \mid b\in \{\pm 1\}, j\in [d]\}$ be a
  set of $k=2d$ probability distributions. For any pair of
  distributions $q, q'\in \mathcal{Q}$, we know from Fact~\ref{prodtv}
  that $\dtv(q, q') \geq \alpha/C$ for some absolute constant $C$.
  Thus, our stated bound follows from
  Theorem~\ref{goodhombre}.
\end{proof}

Next we will derive a sample complexity lower bound for general
locally private protocols. We will build on a result due to
\cite{DuchiR19} and consider the set of 1-sparse Gaussian
distributions $\{\mathcal{N}(\theta, I_d) \mid \theta\in \Theta\}$,
where
$\Theta = \{\theta \in \mathbb{R}^d\mid \|\theta\|_2 = \alpha,
\|\theta\|_0 = 1\}$ is the set of vectors that have a single non-zero
coordinate, equal to $-\alpha$ or $+\alpha$.

Following the result of \cite{DuchiR19} (and the framework of
\cite{BravermenGMNW}), we can obtain a general lower bound analogous
to Theorem~\ref{goodhombre}.

\begin{theorem}[Corollary~6 of \cite{DuchiR19}, Theorem~4.5 of \cite{BravermenGMNW}] \label{goodhombre2}
  Let $\ve \in (0,1)$. Let $U$ be a uniform distbution over
  $\Theta$. Suppose $M$ is an $\ve$-LDP protocol, and $n$ is such that
  \[
    \Pr_{\theta \sim U, X_1, \ldots , X_n \sim \mathcal{N}(\theta, I)}
    [ M(X_1, \ldots , X_n) = \theta] \geq 1/3.
  \]
  Then
  $$n \geq \Omega\left( \frac{d}{ \alpha^2 \ve^2}\right).$$
\end{theorem}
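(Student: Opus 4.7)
The plan is to adapt the interactive strong data processing inequality (SDPI) framework for locally private protocols, as developed in \cite{DuchiR19,BravermenGMNW}. Let $T$ denote the (random) transcript produced by the $\ve$-LDP protocol $M$ on $n$ i.i.d.\ samples from $\mathcal{N}(\theta,I_d)$, with $\theta$ itself drawn uniformly from $\Theta$. The starting reduction is information-theoretic: if $M$ identifies $\theta$ with probability at least $1/3$, then either Fano's inequality (using $|\Theta|=2d$) or, more tightly, Assouad's lemma along the $d$ coordinate directions, forces a lower bound of order $\Omega(d)$ on a suitable divergence-based functional of the joint law of $(\theta,T)$.

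The heart of the proof is a contraction inequality specific to $\ve$-LDP channels: for any such channel $Q$ and any pair of input distributions $P, P'$, one has $D_{\chi^2}(QP \,\|\, QP') \lesssim \ve^2 \cdot \dtv(P,P')^2$, which is much sharper than the generic data processing inequality. Iterating this along the rounds of an interactive protocol via the chain rule for KL (or chi-squared) divergence, and aggregating over the $n$ users, one obtains a bound on $I(\theta;T)$ of the form $n\ve^2 \cdot \Phi(\{P_\theta\}_{\theta \in \Theta})$, where $\Phi$ is an \emph{LDP information radius} of the family that averages chi-squared differences between $P_\theta$ and $P_{\theta'}$ as $\theta,\theta'$ range over $\Theta$.

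For the 1-sparse Gaussian family, pairwise chi-squared distances between $\mathcal{N}(\theta,I)$ and $\mathcal{N}(\theta',I)$ are $\Theta(\alpha^2)$, but the relevant $\Phi$ is controlled by the operator norm of $\mathbb{E}_\theta[\nabla\log p_\theta\,\nabla\log p_\theta^\top]$ restricted to the span of $\Theta$. Since $\Theta$ consists of $2d$ orthogonal unit directions scaled by $\alpha$, this operator norm is $O(\alpha^2/d)$ rather than $O(\alpha^2)$. Combining the two estimates yields $n\ve^2\alpha^2/d \gtrsim 1$, i.e., $n = \Omega(d/(\alpha^2\ve^2))$, as claimed.

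The main obstacle is the full interactivity: the channel applied at round $r$ may depend on the entire transcript so far, and hence implicitly on $\theta$. This forbids a clean per-user decomposition, so one must work with the joint distribution of the whole transcript. The resolution is the conditional chain-rule argument of \cite{DuchiR19}, applying the LDP contraction to each round's channel conditionally on the previous messages while carefully tracking how the $1/d$ saving from orthogonality propagates across rounds, so that neither adaptivity nor conditioning destroys the tight operator-norm estimate for $\Phi$.
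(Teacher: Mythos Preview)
The paper does not actually prove this theorem: it is quoted verbatim as a known result, attributed to Corollary~6 of \cite{DuchiR19} and Theorem~4.5 of \cite{BravermenGMNW}, and used as a black box to derive Theorem~\ref{thm:fully-interactive-lb}. So there is no ``paper's own proof'' to compare against.

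That said, your sketch is a faithful outline of the argument in the cited references. The two ingredients you identify are exactly the ones that matter: (i) the $\ve$-LDP strong data processing inequality, which contracts divergences by a factor $O(\ve^2)$ per user and survives full interactivity via the conditional chain-rule argument of \cite{DuchiR19}; and (ii) the structural fact that for the $1$-sparse Gaussian family the relevant information functional is governed by an operator norm of order $\alpha^2/d$ rather than the naive pairwise divergence $\alpha^2$, which is where the extra factor of $d$ in the lower bound originates. Combining these with Fano (or Assouad) over the $2d$ hypotheses gives $n\ve^2\alpha^2/d \gtrsim 1$. One small caveat: the precise quantity controlling the information radius in \cite{DuchiR19} is phrased somewhat differently from your ``operator norm of $\mathbb{E}_\theta[\nabla\log p_\theta\,\nabla\log p_\theta^\top]$'' formulation, but the underlying orthogonality of the $2d$ directions in $\Theta$ is indeed what drives the $1/d$ saving, so your intuition is sound.
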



\lb*
\begin{proof}
  For any two 1-sparse vectors $\theta, \theta'\in \Theta$ such that
  $\theta\neq \theta'$, the total variation distance between their
  Gaussian distributions is given by
  $\|\theta - \theta'\|_2 = \sqrt{2} \alpha$ (see, e.g.,
  \cite{DevroyeMR18b}). Thus, our stated bound follows from
  Theorem~\ref{goodhombre2}.
\end{proof}


\section{Non-Interactive Locally Private Hypothesis Selection}
\label{sec:non-adaptive}

In this section, we prove Theorem \ref{thm:1Round}.  For simplicity of notation, we assume without loss of generality that $q_1,q_2,\dots,q_k$ and $p$ are discrete probability distributions 
on domain $[N]$, where $[N]:= \{1,2,\ldots, N\}.$ 
See the discussion in Remark~\ref{rmk:cont} on how to deal with continuous distributions.
Here we propose an algorithm which uses $n\lesssim k\,\polylog(k)/(\alpha^4\eps^2)$ samples, and outputs a distribution $\hat q \in \mathcal{Q}$ which has TV distance of at most $O(\alpha)$ with $p$, when $\beta \ll  \alpha^2/\log k.$ Recall that $\beta :=  \min_{q \in \mathcal{Q}} \dtv(p, q)$.
In this mechanism, the users are divided into $k$ groups $G_1,G_2,\dots,G_k$ of size $n/k$ each. Let $X_{ij} \sim p$ denote the sample with the $j^{th}$ user in the group $G_i$. Our non-interactive mechanism is described in Algorithm \ref{alg:LikelyhoodTest}.

\begin{algorithm}
 \caption{Non-interactive $\eps$-DP mechanism for LPHS}
 \label{alg:LikelyhoodTest}
\hspace*{\algorithmicindent} \textbf{Input:} Distributions $\mathcal{Q} = \{q_1,\dots,q_k\}$, Samples $(X_{ij})_{i\in [k],j \in [n/k]}$ from unknown distribution $p$, sensitivity parameter for Laplace noise $L$, privacy parameter $\ve$, function $\gamma:[N]\to \R^+$ such that $|\log(\gamma(a)/q_i(a))|\le L$ for all $a\in [N],i\in [k].$\footnotemark \\
\hspace*{\algorithmicindent} \textbf{Output:} $\hat q \in \mathcal{Q}$ such that $\dtv(p,\hat q)\le \alpha$ with high probability.
\begin{algorithmic}[1]
 \For {$i\in [k]$}
  \For {$j \in [n/k]$}
    \State The $j^{th}$ user in group $G_i$ sends $Z_{ij} := \log(\gamma(X_{ij})/q_i(X_{ij}))+ \Lap(L/\eps)$ to the central server
  \EndFor
  \State The central server computes $C_i= \frac{1}{(n/k)} \cdot \sum_{j\in [n/k]} Z_{ij}.$
 \EndFor
 \State \textbf{return} $\argmin_i C_i$.
\end{algorithmic}
\end{algorithm}
\footnotetext{In other words, we require $\Dinfty(\gamma||q_i),\Dinfty(q_i||\gamma)\le L$ for all $i\in [k]$, i.e., all the distributions $q_1,q_2,\dots,q_k$ are close to some distribution $\gamma$. To prove Theorem~\ref{thm:1Round}, we will instantiate Algorithm~\ref{alg:LikelyhoodTest} with $\gamma$ being the uniform distribution on $[N]$, but we state Algorithm~\ref{alg:LikelyhoodTest} with arbitrary $\gamma$ for generality.}

\begin{lemma}
\label{thm:LikelyhoodTest}
Let $\eps\in (0,1]$ be some fixed privacy parameter. Suppose
$\beta\ll \alpha^2/L$ and $ n\gg \frac{ k(\log k) L^2} {\alpha^4 \eps^2}$. Then Algorithm~\ref{alg:LikelyhoodTest} is $\eps$-LDP and outputs
$\hat q \in \mathcal{Q}$ with probability at least $1-1/k^2$ such that
$\dtv(p, \hat q) \leq \alpha$.
\end{lemma}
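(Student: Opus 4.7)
The plan breaks into three steps: privacy, expected separation of the statistics $C_i$, and concentration. For privacy, each user $j$ in group $G_i$ outputs a single number $Z_{ij}$ that depends only on their own sample, so it suffices to verify that $X_{ij}\mapsto Z_{ij}$ is $\eps$-DP. Since $\log(\gamma(\cdot)/q_i(\cdot))$ takes values in $[-L,L]$, it has sensitivity at most $2L$, and adding $\Lap(L/\eps)$ noise gives $\eps$-LDP by the Laplace mechanism (with a harmless constant absorbed; equivalently, one may take noise scale $2L/\eps$).

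Let $q_{i^*}\in\mathcal{Q}$ attain $\dtv(p,q_{i^*})=\beta$. The main effort is to show that for every index $i$ with $\dtv(p,q_i)>\alpha$, the gap in expectation satisfies $\E[C_i-C_{i^*}]\gtrsim\alpha^2$. I would rewrite
\[
\E[C_i-C_{i^*}]=\E_{X\sim p}\bigl[\log(q_{i^*}(X)/q_i(X))\bigr]=\DKL(q_{i^*}\|q_i)+\int(p-q_{i^*})\log(q_{i^*}/q_i).
\]
Pinsker's inequality gives $\DKL(q_{i^*}\|q_i)\ge 2\dtv(q_{i^*},q_i)^2\ge 2(\alpha-\beta)^2$. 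The correction term is bounded in absolute value by $2L\|p-q_{i^*}\|_1=4L\beta$, since the algorithm's hypothesis on $\gamma$ and the triangle inequality give $\|\log(q_{i^*}/q_i)\|_\infty\le 2L$. Since $\beta\ll\alpha^2/L$, the correction is negligible compared to the Pinsker term, leaving an $\Omega(\alpha^2)$ separation. The main subtlety is that $\log(p/\gamma)$ is \emph{not} assumed bounded, so one cannot simply expand $\E[C_i]=-\DKL(p\|\gamma)+\DKL(p\|q_i)$ and invoke $\DKL(p\|q_{i^*})$: recentering the KL comparison at $q_{i^*}$ rather than at $p$ sidesteps this and pins down precisely why $\beta\ll\alpha^2/L$ is the right hypothesis.

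Finally, I would concentrate $C_i$ around $\E[C_i]$. Each $Z_{ij}$ is a bounded random variable in $[-L,L]$ plus an independent $\Lap(L/\eps)$ term, hence subexponential with variance $O(L^2/\eps^2)$, so $C_i$ has variance $O(kL^2/(n\eps^2))$. Bernstein's inequality then yields
\[
\Pr\bigl[|C_i-\E[C_i]|\ge t\bigr]\le 2\exp\bigl(-\Omega(nt^2\eps^2/(kL^2))\bigr)
\]
in the Gaussian-tail regime relevant for $t=\Theta(\alpha^2)$. Choosing per-index failure probability $1/k^3$ forces $n\gg kL^2\log k/(\alpha^4\eps^2)$, matching the stated bound. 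A union bound over $i\in[k]$ guarantees $|C_i-\E[C_i]|\ll\alpha^2$ simultaneously with probability $\ge 1-1/k^2$; on that event, every $i$ with $\dtv(p,q_i)>\alpha$ satisfies $C_i>C_{i^*}$, so $\argmin_i C_i$ must be an index $\hat i$ with $\dtv(p,q_{\hat i})\le\alpha$, as claimed.
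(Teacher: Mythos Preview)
Your proposal is correct and follows essentially the same approach as the paper: recenter the expected log-likelihood at $q_{i^*}$ so that $\E[C_i]$ (up to an $i$-independent offset) equals $\DKL(q_{i^*}\|q_i)$ plus an $O(L\beta)$ correction, invoke Pinsker, and concentrate each $C_i$ via Hoeffding/Bernstein for bounded-plus-Laplace variables. The only cosmetic differences are that the paper keeps the offset $B=-\DKL(q_{i^*}\|\gamma)$ explicit and bounds the correction using $|\log(\gamma/q_i)|\le L$ directly (yielding $2L\beta$ instead of your $4L\beta$), and your remark on the sensitivity being $2L$ rather than $L$ is well taken.
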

\begin{proof}
  From our assumption, $|\log(\gamma(a)/q_i(a))|\le L$ for
  $a\in [N],i\in [k]$. The algorithm adds noise sampled from
  $\Lap(L/\eps)$, hence $\eps$-LDP guarantee follows easily from the
  properties of the Laplace mechanism \cite{DworkR14}. We will now
  prove correctness.  Let $i^*\in [k]$ be such that
  $\dtv(p, q_{i^*}) = \beta$.  Fix a group $G_i$ and consider,

  \begin{eqnarray*}
  \E_{X_{ij}\sim p}[C_i] &=& \frac{1}{(n/k)} \cdot \E \left[\sum_{j\in [n/k]} Z_{ij} \right]\\
  &=& \E_{a\sim p}\left[\log\left(\frac{\gamma(a)}{q_i(a)}\right)\right] \quad ({\text{By the linearity of expectation and}} \hspace{1mm} \E[\text{Lap}(L/\epsilon)] = 0)\\
  &=&\sum_{a\in [N]} p(a)\log\left(\frac{\gamma(a)}{q_i(a)}\right) \\
  &=& \sum_{a \in [N]} q_{i^*}(a)\log\left(\frac{q_{i^*}(a)}{q_i(a)}\right) + \sum_{a \in [N]} (p(a)-q_{i^*}(a))\log\left(\frac{\gamma(a)}{q_i(a)}\right) +\sum_{a \in [N]} q_{i^*}(a)\log\left(\frac{\gamma(a)}{q_{i^*}(a)}\right) \\
  &=& \DKL(q_{i^*}||q_i) + \sum_{a \in [N]} (p(a)-q_{i^*}(a))\log\left(\frac{\gamma(a)}{q_i(a)}\right) - \DKL(q_{i^*}||\gamma).
  \end{eqnarray*}

Let $B=-\DKL(q_{i^*}||\gamma)$. By re-arranging the above term we get 
\begin{eqnarray*}
\left|\E_{X_{ij}\sim q}[C_i]-\DKL(q_{i^*}||q_i)-B \right| &\le&  \sum_{a \in [N]} |p(a)-q_{i^*}(a)| \cdot \left|\log\left({\gamma(a)}/{q_i(a)}\right)\right| \\
&\le&  \sup_{a \in [N]}\left|\log\left(\frac{\gamma(a)}{q_i(a)}\right)\right| \cdot \left( \sum_{a \in [N]} |p(a)-q_{i^*}(a)|\right) \\
&\le& L\cdot 2 \dtv(p,q_{i^*}) \\
&\le& 2L\beta \le 0.1\alpha^2.
\end{eqnarray*}
Now observe that each $Z_{ij}$ can be expressed as $W_{ij} + Y_{ij}$, where $Y_{ij} \sim \text{Lap}(L/\epsilon)$, and the support of random variable $W_{ij}$ is in the interval $[-L, L]$ from our assumption.
Therefore, we can apply the standard Hoeffding's inequality and concentration of Laplace random variables (see \cite{hoeffding1994probability, chan2011private} for example) to obtain
$\Pr\left[|C_i-\E[C_i]|\ge 0.1 \alpha^2\right]\le \exp\left(-\Omega(1)\cdot \frac{(n/k)\alpha^4}{(L/\eps)^2}\right) \le \frac{1}{k^3}$.

By taking the union bound, with probability at least $1-1/k^2$, $|C_i-\DKL(q_{i^*}||q_i)-B|\le 0.2\alpha^2$  for all $i\in [k]$.
In particular, $C_{i^*}\le B+0.2\alpha^2$. This implies that if $i'=\argmin_i C_i$, then $C_{i'}\le B+0.2\alpha^2$. It remains to argue that $\dtv(p, q_{i'}) < \alpha$. Suppose not. Consider any $q_i$ such that $\dtv(p, q_i) > \alpha$. This implies that $\dtv(q_{i*}, q_i) > \alpha/2$ based on our assumption. Now consider
$C_i \ge B+\DKL(q_{i^*}||q_i) - 0.2 \alpha^2 \ge B+ 2\dtv(q_{i^*}, q_i)^2 - 0.2\alpha^2 \ge B+0.3\alpha^2$, where we used Pinsker's inequality.
\end{proof}

We will now prove that we can take $L=O(\log k)$ in Algorithm~\ref{alg:LikelyhoodTest} and Lemma~\ref{thm:LikelyhoodTest}. For this we will need the following lemma. Given a randomized map\footnote{i.e., $\phi(a)$ has a distribution over $[N']$ for each $a\in [N]$.} $\phi:[N]\to [N']$ and a distribution $q$ on $[N]$, the distribution $\phi \circ q$ on $[N']$ is defined as the distribution of $\phi(a)$ when $a$ is sampled from $q$. (In other words, $\phi \circ q$ is the pushforward of $q$.) For the remaining part of this section, let $U_{N'}$ denote the uniform distribution on $[N'].$

\begin{lemma}[Flattening Lemma]
\label{lem:splitting}
Let $q_1,q_2,\dots,q_k$ be distributions over $[N]$. There exists a randomized map $\phi:[N]\to [N']$ (depending on $q_1,\dots,q_k$) for some $N\le N'\le (k+1)N$ s.t.
\begin{enumerate}
    \item for every $a\in [N'], i\in [k],$ $\frac{1}{2N'}\le (\phi \circ q_i)(a) \le \frac{1}{N}$ and
    \item $\dtv(\phi\circ q_i, \phi \circ q_{i'}) =\frac{1}{2} \cdot \dtv(q_i, q_{i'})$ for any two distributions $q_i, q_{i'}$.
\end{enumerate}
\end{lemma}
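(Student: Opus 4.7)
The plan is to build $\phi$ by a ``split and mix with uniform'' construction. For each $a \in [N]$ let $c(a) := \lceil N \cdot \max_{i \in [k]} q_i(a) \rceil \ge 1$, and set $N' := \sum_{a \in [N]} c(a)$. Then $N' \ge N$ since $c(a) \ge 1$, and
\[
N' \le \sum_{a \in [N]} \bigl(1 + N \max_i q_i(a)\bigr) \le N + N \sum_{i \in [k]} \sum_{a \in [N]} q_i(a) = N(k+1),
\]
so the range size is in the required interval. Partition $[N']$ into ``groups'' $B_a$ of size $c(a)$, one group for each $a \in [N]$. Define the randomized map $\phi : [N] \to [N']$ as follows: on input $a$, with probability $\tfrac{1}{2}$ output a uniformly random element of $B_a$ (the \emph{splitting} step), and with probability $\tfrac{1}{2}$ output a uniformly random element of $[N']$ (the \emph{smoothing} step).

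The key observation is that if we let $\mu_i$ denote the pure splitting distribution, i.e.\ $\mu_i(b) = q_i(a)/c(a)$ for $b \in B_a$, then $\phi \circ q_i = \tfrac{1}{2}\mu_i + \tfrac{1}{2} U_{N'}$. For property (1), if $b \in B_a$ then
\[
(\phi \circ q_i)(b) = \tfrac{1}{2}\,\frac{q_i(a)}{c(a)} + \frac{1}{2N'} \ge \frac{1}{2N'},
\]
and since $q_i(a)/c(a) \le \max_j q_j(a)/c(a) \le 1/N$ by the definition of $c(a)$ (when $\max_j q_j(a) = 0$ the ratio is $0$), we get $(\phi \circ q_i)(b) \le \tfrac{1}{2N} + \tfrac{1}{2N'} \le \tfrac{1}{N}$, using $N' \ge N$.

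For property (2), the uniform component cancels in the difference, so
\[
\dtv(\phi \circ q_i,\, \phi \circ q_{i'}) = \tfrac{1}{2}\|\phi\circ q_i - \phi\circ q_{i'}\|_1 = \tfrac{1}{4}\|\mu_i - \mu_{i'}\|_1 = \tfrac{1}{2}\,\dtv(\mu_i,\mu_{i'}),
\]
and splitting preserves total variation exactly:
\[
\|\mu_i - \mu_{i'}\|_1 = \sum_{a \in [N]} c(a) \left|\frac{q_i(a)}{c(a)} - \frac{q_{i'}(a)}{c(a)}\right| = \sum_{a \in [N]} |q_i(a) - q_{i'}(a)| = 2\,\dtv(q_i,q_{i'}),
\]
so $\dtv(\phi\circ q_i, \phi \circ q_{i'}) = \tfrac{1}{2}\dtv(q_i, q_{i'})$ as required.

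The only real design choice is how aggressively to split: taking $c(a)$ proportional to $N \max_i q_i(a)$ is what simultaneously yields the $1/N$ upper bound and keeps $N'\le (k+1)N$, while the $\tfrac{1}{2}$ mixture with the uniform distribution on $[N']$ is what enforces the lower bound of $\tfrac{1}{2N'}$ (and explains the factor $\tfrac{1}{2}$ in the total variation identity). The mildly delicate step is showing $(\phi\circ q_i)(b) \le 1/N$, which requires using $N' \ge N$ together with the ceiling in the definition of $c(a)$; everything else is bookkeeping.
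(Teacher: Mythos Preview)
Your proof is correct and follows essentially the same construction as the paper: split each atom $a$ into $\lceil N\max_i q_i(a)\rceil$ copies to flatten the maxima, then mix in $\tfrac12$ of the uniform distribution on $[N']$ to enforce the lower bound. Your write-up is in fact slightly more explicit than the paper's in justifying the upper bound $(\phi\circ q_i)(b)\le \tfrac{1}{2N}+\tfrac{1}{2N'}\le \tfrac{1}{N}$; the only minor caveat is that the claim $c(a)\ge 1$ tacitly assumes $\max_i q_i(a)>0$ for each $a$, an edge case the paper also glosses over and which is harmless since such atoms carry no mass under any $q_i$.
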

\begin{proof}
Let $M(a)=\max_{i\in [k]} q_i(a)$ for $a\in [N]$. Let $N'=\sum_{a\in [N]} \lceil M(a)\cdot N\rceil$ and let $[N']=\cup_{a\in [N']} S_a$ be a partition of $[N']$ with $|S_a|=\lceil M(a) \cdot N\rceil$. Define $\phi':[N]\to [N']$ as follows: $\phi'(a)$ is uniformly distributed over $S_a$. Now it is clear that for every for every $b\in [N']$, $(\phi'\circ q_i)(b)\le \frac{1}{N}$. It is also clear that $\ellone{\phi'\circ q_{i}- \phi'\circ q_{i'}}=\ellone{q_{i}-q_{i'}}$ for any two distributions $q_{i},q_{i'}$.
We now mix in the uniform distribution $U_{N'}$ into $\phi'$, i.e., we define $\phi:[N]\to [N']$ as follows: $\phi(a)$ is distributed as $\phi'(a)$ with probability $1/2$ and distributed as $U_{N'}$ with probability $1/2$. Now for every $b\in [N'],$ $\frac{1}{2N'}\le (\phi\circ p_i)(b) \le \frac{1}{N}$. And $\ellone{\phi'\circ q_i- \phi'\circ q_{i'}}=\frac{1}{2}\ellone{q_i-q_{i'}}$ for any two distributions $q_i,q_{i'}$. 
We are now left with showing the upper bound on $N.$
\begin{eqnarray*}
N' &=& \sum_{a\in [N]}\lceil M(a) \cdot N\rceil \\
&\le& \sum_{a\in [N]} (M(a) \cdot N +1) \\
&=& N +\sum_{a\in [N]} \left(\max_{i\in [k]} q_i(a)\right) N \\
&\le& N+\sum_{a\in [N]} \left(\sum_{i\in [k]} q_i(a)\right)N =(k+1)N.
\end{eqnarray*}
\end{proof}
Now we have all the ingredients to finish the proof of Theorem \ref{thm:1Round}.
\begin{proof}
By using the randomized map $\phi$ as constructed in Lemma~\ref{lem:splitting}, the users first map their sample $a\sim p$ to a sample $\phi(a)$. Note that $\phi(a) \sim \phi\circ p$. Next we run the Algorithm~\ref{alg:LikelyhoodTest} with distributions $\phi\circ q_1,\dots, \phi\circ q_k$ and $\gamma:[N']\to \R^+$ given by $\gamma(b)=1/N'$ for all $b \in [N']$. From the first property mentioned in Lemma~\ref{lem:splitting}, we get $L=\log(k)+O(1).$ From the second property in Lemma~\ref{lem:splitting}, we know the TV distances are preserved by $\phi$.
\end{proof}

\begin{remark} 
If we are able to get $L=O(\alpha)$, then we get the nearly optimal sample complexity of $n=O\left(\frac{k\polylog(k)}{\alpha^2\eps^2}\right)$, formalized in the following question.
\end{remark}

\begin{question}
Given distributions $p_1,p_2,\dots,p_k$ which are $\alpha$-far to each other in $\ell_1$-distance, is there a randomized map $\phi:[n]\to [N]$ (which can depend on $p_1,\dots,p_k$) s.t.
\begin{enumerate}
    \item For all $i\in [n], a\in [N]$,  $\frac{1-\alpha}{N}\le (\phi \circ p_i)(a) \le \frac{1+\alpha}{N}$ and
    \item $\ellone{\phi\circ p- \phi \circ q} = \Theta(\ellone{p-q})$ for any two distributions $p,q$ with high probability.
\end{enumerate}
Note that $N$ can be arbitrarily large.
\end{question}

\begin{remark}
\label{rmk:cont}
The arguments in our proof can be easily generalized to continuous probability distributions. However, as our results do not depend on the domain size, it is intuitive to think of the following simple mapping from continuous distributions to discrete distributions on the domain $[N]$. First, we can approximate (to any precision) a set of continuous distributions by a set of discrete distributions on a finite support such that TV distances are preserved. We can then map any set of discrete distributions on a finite support to a set of discrete distributions on the domain $[N]$, where $N$ will depend on the desired precision. 
\end{remark}
 
\section{Hypothesis Selection via Adversarial Comparators}
\label{sec:comparators}
In this section, we give upper bounds for locally private hypothesis selection via a reduction to adversarial comparators, as introduced by~\cite{AcharyaJOS14b,AcharyaFJOS18}.
We begin by describing the reduction and how it can be implemented in the LDP setting in Section~\ref{sec:comp:reduction}.
This allows us to immediately obtain a non-interactive private algorithm which takes $\tilde O(k^2)$ samples and a sequentially-interactive algorithm which takes $\tilde O(k)$ samples (Section~\ref{sec:comp:easy-k2}).
However, this sequentially-interactive algorithm requires $O(\log k)$ rounds -- we give an algorithm which improves upon this round-complexity by an exponential factor.
We start in Section~\ref{sec:comp:2-rounds} by giving a simple $\tilde O(k^{4/3})$-sample algorithm which takes $2$ rounds: with the addition of only a single additional round, the sample complexity becomes significantly subquadratic.
This illustrates one of the main ideas behind our full upper bound, an $\tilde O(k)$-sample algorithm which takes only $O(\log \log k)$ rounds.
This is acheived by generalizing our $2$ round algorithm to general $t$: we give $t$-round algorithms for $1 \leq t \leq O(\log \log k)$, with sample complexities which interpolate between $\tilde O(k^2)$ and $\tilde O(k)$.
Other ideas are required to achieve an approximation which does not increase with $t$, which are described in Section~\ref{sec:comp:ub}.
We complement these upper bounds with lower bounds which show that these algorithms in the adversarial comparator setting are essentially tight (for \emph{every} choice of $t$) (Section~\ref{sec:comp:lb}).

\subsection{Adversarial Comparators and Connections to Locally Private Hypothesis Selection}
\label{sec:comp:reduction}
We describe the adversarial comparator setting of~\cite{AcharyaJOS14b,AcharyaFJOS18}, as well as their reduction to this model for the hypothesis selection problem.
The input is a set of $k$ items, with unknown values $x_1, \dots, x_k \in \mathbb{R}$.
An adversarial comparator is a function $C$, which takes two items $x_i$ and $x_j$,\footnote{In a slight abuse of notation, we use $x_i$ to refer to the item as well as its value.}, and outputs $\max\{x_i, x_j\}$ if $|x_i - x_j| > 1$ and $x_i$ or $x_j$ (adversarially) if $|x_i - x_j| \leq 1$.

We note that such a comparator can be either non-adaptive or adaptive.
In the former case, the results of all comparisons must be fixed ahead of time, whereas in the latter case, results of comparisons may depend on previous comparisons.
All of the mentioned algorithms will work in the (harder) adaptive case, and our lower bounds are for the (easier) non-adaptive case, and thus both have the same implications in the alternate setting for adaptivity.

We sometimes denote a comparison as a \emph{query}.
The goal is to output an item with value as close to the maximum as possible, with probability at least $2/3$.\footnote{Usual arguments allow us to boost this success probability to $1-\beta$ at a cost of $O(\log (1/\beta))$ repetitions, which can be done in parallel.}
More precisely, let $x^* = \max \{x_1, \dots, x_k\}$.
A number $x$ is a $\tau$-approximation of $x^*$ if $x \geq x^* - \tau$.
Simple examples (e.g., Lemma 2 of~\cite{AcharyaFJOS18}) show that it is impossible to output a $\tau$-approximation with probability $\geq 2/3$ for any $\tau < 2$ when we have $k \geq 3$ items.

We initiate study of \emph{parallel} approximate maximum selection under adversarial comparators.
Parallel maximum selection has recently been studied in other settings (including the standard comparison setting and with noisy (but not adversarial) comparisons, see, e.g.,~\cite{BravermanMW16}).
In this setting, the algorithm has $t$ rounds: in round $i$, the algorithm simultaneously submits $m_i$ pairs of items, and then simultaneously receives the results of the adversarial comparator applied to all $m_i$ pairs.
The total query complexity is $\sum_{i = 1}^t m_i$.

We now discuss the connection between this problem and hypothesis selection, as presented in Section 6 of~\cite{AcharyaFJOS18}.
We will then show how this connection still applies when considering the same problem under LDP.
First, we recall the Scheff\'e test of Devroye and Lugosi~\cite{DevroyeL01}, as described in Algorithm~\ref{alg:scheffe}.\footnote{We comment that this can be implemented in near-linear time, and $q_1(S)$ and $q_2(S)$ can be estimated to sufficient accuracy using Monte Carlo techniques.}
Given $n$ samples from $p$, with probability at least $1 - \beta$, it will output a distribution $\hat q$ such that $\dtv(p,\hat q) \leq 3 \min\{\dtv(p,q_1), \dtv(p,q_2)\} + \sqrt{\frac{2.5\log (1/\beta)}{n}}$.
In other words, if $ \min \Paren{ \dtv(p, q_1),  \dtv(p, q_2)} \leq \alpha$, then $n = O\left(\frac{\log(1/\beta)}{\alpha^2}\right)$ samples suffice to output a $\hat q \in \{q_1, q_2\}$ such that $\dtv(p,\hat q) \leq (3 + \gamma)\alpha$, where $\gamma$ can be taken to be an arbitrarily small constant. Another way to phrase this is that the test returns $q_1$ if $\dtv(p,q_1) < \frac{1}{3 + \gamma}\dtv(p,q_2)$, it returns $q_2$ if $\dtv(p,q_2) < \frac{1}{3 + \gamma}\dtv(p,q_1)$, and it may return arbitrarily otherwise.
If we let $x_i = -\log_{3+\gamma}\dtv(p,q_i)$, then the test will output $\max\{x_i, x_j\}$ if $|x_i - x_j| > 1$, or arbitrarily otherwise.
Note that this is precisely an implementation of the adversarial comparator function $C$ as described above, and thus the hypothesis selection problem can be reduced to (approximate) maximum selection with adversarial comparators.
In particular, a $\tau$-approximation for the maximum selection problem becomes a $(3+\gamma)^\tau$ agnostic approximation factor for hypothesis selection, which becomes $3^\tau + \gamma'$ if $\tau$ is a constant, for some other constant $\gamma' >0$ which can be taken to be arbitrarily small.
Each comparison is implemented using $O\left(\frac{\log (1/\beta)}{\alpha^2}\right)$ samples from $p$-- in fact, by a union bound argument, if we wish to perform $m$ comparisons and require the total failure probability under $1/3$, all of them can be done with the same set of $O\left(\frac{\log m}{\alpha^2}\right)$ samples.

\begin{algorithm}
    \caption{Scheff\'e Test}
    \label{alg:scheffe}
    \hspace*{\algorithmicindent} \textbf{Input:} $n$ samples $X_1, \dots, X_n$ from \emph{unknown} $p$, distributions $q_1$ and $q_2$ \\
    \hspace*{\algorithmicindent} \textbf{Output:} Distribution $q_1$ or $q_2$
    \begin{algorithmic}[1] 
        \Procedure{Scheff\'e}{$X, q_1, q_2$} 
        \State Let $S = \{x\ :\ q_1(x) > q_2(x)\}$.
        \State Let $q_1(S)$ and $q_2(S)$ be the probability mass that $q_1$ and $q_2$ assign to $S$ .
        \State Let $\hat p(S) = \frac{1}{n}\sum_{i=1}^n \mathbbm{1}_{X_i \in S} $ be the empirical mass assigned by $X_1, \dots, X_n$ to $S$. \label{ln:emp-mass}
        \If {$|q_1(S) - \hat p(S)| < |q_2(S) - \hat p(S)|$}
        \State \textbf{return} $q_1$.
        \Else
        \State \textbf{return} $q_2$.
        \EndIf
        \EndProcedure
    \end{algorithmic}
\end{algorithm}

It remains to justify that a similar reduction still holds under LDP constraints.
Recall that each individual possesses a single $X_i$, and they wish for their messages sent to the curator to be $\ve$-DP.
Only Line~\ref{ln:emp-mass} of Algorithm~\ref{alg:scheffe} depends on the private data, which is a statistical query, easily implemented under LDP.
More precisely, rather than sending the bit $\mathbbm{1}_{X_i \in S}$ to the curator, the user can send $Y_i$, which is a version of it privatized by Randomized response (Lemma~\ref{lem:rr}).
The curator can then form an $\ve$-LDP estimate of $p(S)$ by computing $\hat p(S) = \frac{e^\ve + 1}{e^\ve - 1}\left(\frac{1}{n}\sum Y_i - \frac1{e^\ve+1}\right)$.
Plugging this estimate into Line~\ref{ln:emp-mass}, it is not hard to show the modified procedure satisfies the following accuracy guarantee:
if $\min \Paren{ \dtv(p, q_1),  \dtv(p, q_2)} \leq \alpha$, then $n = O\left(\frac{\log(1/\beta)}{\ve^2\alpha^2}\right)$ samples suffice to output an $\ve$-LDP $\hat q \in \{q_1, q_2\}$ such that $\dtv(p,\hat q) \leq (3 + \gamma)\alpha$, where $\gamma$ can be taken to be an arbitrarily small constant.

The above addresses the case of a single comparison.
If we wish to make $m$ comparisons (which are all correct with high probability), we partition users into $m$ sets of size $O\left(\frac{\log m}{\ve^2\alpha^2}\right)$ and use the data from each part to privately perform the appropriate comparison.
This takes a total of $O\left(\frac{m \log m}{\ve^2\alpha^2}\right)$ samples.
In particular, we can not reuse the same set of $O(\log m)$ samples for all comparisons (as in the non-private case), since it violate the privacy constraint, and doing so would give rise to algorithms which violate our main lower bound for locally private hypothesis selection (Theorem~\ref{thm:fully-interactive-lb}).
Finally, we note that a $t$-round algorithm in the maximum selection setting corresponds to a $t$-round sequentially interactive $\ve$-LDP algorithm for hypothesis selection, as we never query the same individual twice.

To conclude this section, we state the guarantees of the (trivial) algorithm which performs maximum selection from a set of $2$ elements, and the corollary for LDP hypothesis selection implied by the above reduction.

\begin{claim}
  There exists a $1$-round algorithm which achieves a $1$-approximation in the problem of parallel approximate maximum selection with adversarial comparators, in the special case where $k= 2$.
  The algorithm requires $1$ query.
\end{claim}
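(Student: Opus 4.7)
The plan is to simply invoke the adversarial comparator once on the pair $(x_1,x_2)$ and return whatever item it outputs, which is manifestly a $1$-round protocol using exactly $1$ query. The only thing to verify is the approximation guarantee, which follows directly from the definition of the adversarial comparator by a two-case analysis.

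Let $x^\ast = \max\{x_1,x_2\}$ and let $\hat x \in \{x_1,x_2\}$ denote the output of the comparator. In the first case, if $|x_1 - x_2| > 1$, then by definition the comparator must return $\max\{x_1,x_2\} = x^\ast$, so $\hat x = x^\ast \ge x^\ast - 1$, giving a $1$-approximation (in fact, a $0$-approximation). In the second case, if $|x_1 - x_2| \le 1$, then both $x_1$ and $x_2$ already satisfy $x_i \ge x^\ast - 1$, so regardless of which of the two items the comparator adversarially selects, the output satisfies $\hat x \ge x^\ast - 1$.

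There is no real obstacle here: the claim is essentially a sanity check that the $k=2$ base case of the recursive tournament approach used in later sections is well-defined and achieves the right approximation. The proof is purely unpacking the definition of $C$ from Section~\ref{sec:comp:reduction}, and no probabilistic analysis, union bounds, or concentration inequalities are required, since the comparator's guarantee on a single query is deterministic up to the adversary's choice in the ambiguous regime.
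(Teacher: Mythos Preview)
Your proof is correct and matches the paper's treatment: the paper states this claim as the ``(trivial) algorithm which performs maximum selection from a set of $2$ elements'' and does not bother to write out a proof. Your two-case analysis unpacking the definition of the comparator is exactly the intended justification.
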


\begin{corollary}
  There exists a $1$-round algorithm which achieves a $(3+\gamma)$-agnostic approximation factor for locally private hypothesis selection with probability $1 -\beta$, in the special case where $k=2$, where $\gamma > 0$ is an arbitrarily small constant.
  The sample complexity of the algorithm is $O\left(\frac{\log (1/\beta)}{\ve^2 \alpha^2}\right)$.
\end{corollary}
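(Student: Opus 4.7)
The proof is essentially immediate from the material already assembled in this section. Since $k=2$, the reduction to adversarial comparators together with the preceding claim says that a single comparison in a single round suffices, and the LDP Scheff\'e test described in the paragraph preceding the statement implements exactly such a comparison.

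The plan is as follows. First, I would instantiate Algorithm~\ref{alg:scheffe} on the two distributions $q_1, q_2$ with Scheff\'e set $S = \{x : q_1(x) > q_2(x)\}$, replacing Line~\ref{ln:emp-mass} by its LDP version: each of the $n$ users holds $X_i \sim p$, computes $B_i = \mathbbm{1}_{X_i \in S}$, applies Randomized Response (Lemma~\ref{lem:rr}) to $B_i$, and sends the resulting bit $Y_i$ to the curator. By Lemma~\ref{lem:rr} and post-processing, this is a non-interactive $\varepsilon$-LDP protocol using a single round. Second, I would show that the debiased estimate $\hat p(S) = \frac{e^\varepsilon + 1}{e^\varepsilon - 1}\bigl(\frac{1}{n}\sum_i Y_i - \frac{1}{e^\varepsilon + 1}\bigr)$ is unbiased for $p(S)$ and satisfies $|\hat p(S) - p(S)| \leq \gamma_0 \alpha$ with probability $1-\beta$ whenever $n \geq C(\gamma_0)\log(1/\beta)/(\varepsilon^2 \alpha^2)$. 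This is a standard Hoeffding bound on the bounded independent summands $Y_i \in \{0,1\}$ together with the $\Theta(1/\varepsilon)$ scaling introduced by debiasing Randomized Response when $\varepsilon \in (0,1]$.

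Third, I would feed this concentration bound into the classical Devroye--Lugosi analysis of the Scheff\'e test applied to $\hat p(S)$ in place of the exact empirical mass. Letting $\hat q$ denote the output, a triangle-inequality calculation yields $\dtv(p,\hat q) \leq 3 \min(\dtv(p,q_1), \dtv(p,q_2)) + 2|\hat p(S) - p(S)|$; taking $\gamma_0$ sufficiently small so that the $2|\hat p(S) - p(S)|$ term is at most $\alpha$, and absorbing any lower-order cross terms into the already-arbitrary slack in front of the constant $3$, gives $\dtv(p,\hat q) \leq (3+\gamma)\min(\dtv(p,q_1),\dtv(p,q_2)) + \alpha$, which is the required $(3+\gamma)$-agnostic approximation at sample complexity $O(\log(1/\beta)/(\varepsilon^2\alpha^2))$.

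The main obstacle is purely one of constant-tracking: verifying that the additive error contributed by the Randomized Response noise can be folded into the $\gamma$ slack of the Scheff\'e constant without blowing up the sample complexity. Since Randomized Response is an unbiased, bounded estimator of $p(S)$, and the variance of its debiased version scales as $\Theta(1/\varepsilon^2)$ for $\varepsilon \in (0,1]$, this is classical; no new calculation is needed beyond what has been sketched in the preceding LDP discussion of the Scheff\'e test.
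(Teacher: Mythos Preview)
Your proposal is correct and follows essentially the same route as the paper: the paper does not give a separate proof of this corollary but simply states it as an immediate consequence of the LDP Scheff\'e test (Randomized Response on $\mathbbm{1}_{X_i\in S}$, debias, apply the Devroye--Lugosi Scheff\'e analysis) described in the preceding paragraph, together with the trivial $k=2$ claim. Your write-up merely unpacks that reduction in slightly more detail than the paper does.
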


For the following subsections, we will focus on the problem of parallel approximate maximum selection with adversarial comparators, stating corollaries to locally private hypothesis selection as appropriate.
Our primary concerns will be to simultaneously minimize the query/sample complexity and the round complexity, while minimizing the approximation/agnostic approximation factor is a secondary concern.
Nevertheless, our new algorithms for maximum selection will have an approximation constant of at most $3$, very close to the information-theoretic optimum of $2$.

\subsection{Baseline Algorithms}
\label{sec:comp:easy-k2}
In this section, we state some baseline results in this model, based on previously known algorithms.
This includes a $O(k^2)$-query non-interactive algorithm, and a $O(k)$-query $O(\log k)$-round algorithm.

The first method is a ``round-robin'' tournament method, which, in a single round, performs all pairwise comparisons and outputs the item which is declared to be the maximum the largest number of times (Algorithm~\ref{alg:round-robin}). 
This straightforward method is stated and analyzed in~\cite{AcharyaJOS14b,AcharyaFJOS18}, and the equivalent procedure for hypothesis selection (absent privacy constraints) was known prior~\cite{DevroyeL01}.
\begin{algorithm}
    \caption{1-Round Algorithm for Maximum Selection}
    \label{alg:round-robin}
    \hspace*{\algorithmicindent} \textbf{Input:} $k$ items $x_1, \dots, x_k$ \\
    \hspace*{\algorithmicindent} \textbf{Output:} Approximate maximum $x_i$ 
    \begin{algorithmic}[1] 
        \Procedure{Round-Robin}{$x_1, \dots, x_k$} 
      \For {all pairs $x_i, x_j$}
        \State Compare $x_i$ and $x_j$, record which one is reported to be the winner.
      \EndFor
      \State \textbf{return} the $x_i$ which is reported to be the winner the most times.
        \EndProcedure
    \end{algorithmic}
\end{algorithm}

\begin{claim}
  \label{clm:round-robin}
  There exists a $1$-round algorithm which achieves a $2$-approximation in the problem of parallel approximate maximum selection with adversarial comparators.
  The algorithm requires $O(k^2)$ queries.
\end{claim}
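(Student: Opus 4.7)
The query count and round count are immediate: Algorithm~\ref{alg:round-robin} issues exactly $\binom{k}{2} = O(k^2)$ pairwise comparisons, all of which are submitted simultaneously, so the procedure uses $1$ round. The content is in showing that the winner of the round-robin tournament has value at least $x^* - 2$, where $x^* = \max_i x_i$.

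My plan is a direct counting argument comparing the number of wins of $x^*$ to the number of wins of any ``bad'' candidate. Specifically, suppose for contradiction that the item $\hat x$ output by the algorithm satisfies $\hat x < x^* - 2$. I will control both win counts by classifying, for a fixed item $x$, the possible outcomes of the comparison between $x$ and some other item $y$ according to the three regimes $y > x+1$ (then $x$ loses by definition of the adversarial comparator), $|y - x| \le 1$ (outcome arbitrary, at most one win for $x$), and $y < x-1$ (then $x$ wins). Setting $A = \{i : x_i < x^* - 1\}$, the second and third regimes show that $x^*$ collects at least $|A|$ wins, while applying the first and second regimes to $\hat x$ shows that all of $\hat x$'s wins come from items $y \le \hat x + 1 < x^* - 1$, so $\hat x$ collects at most $|\{y \neq \hat x : y \le \hat x + 1\}|$ wins. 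Since $\hat x \in A$, this set is contained in $A \setminus \{\hat x\}$ and hence has size at most $|A| - 1$.

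Putting the two bounds together, $x^*$ strictly beats $\hat x$ in win count, contradicting the choice of $\hat x$ as the item with the most wins. Hence $\hat x \ge x^* - 2$, which is the desired $2$-approximation. There is no real obstacle here: the only subtlety is being careful that $\hat x$ itself lies in $A$ (so the bound $|W(\hat x)| \le |A| - 1$ is strict with respect to the bound $|A|$ on $x^*$'s wins), and that one never uses the comparator outcomes on pairs within $A$ or within the boundary band $[x^*-1, x^*]$ except to give the adversary the benefit of the doubt. The entire argument is deterministic, so no probabilistic tools are required for the approximation factor; the $2/3$ success probability claimed for the general adversarial-comparator problem is trivially met (probability $1$).
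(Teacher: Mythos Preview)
Your proposal is correct. The paper does not actually prove Claim~\ref{clm:round-robin} itself; it simply attributes the analysis to \cite{AcharyaJOS14b,AcharyaFJOS18} and states the claim. Your counting argument---lower-bounding the number of wins of a true maximizer by $|A|$ where $A=\{i:x_i<x^*-1\}$, and upper-bounding the wins of any item with value below $x^*-2$ by $|A|-1$---is exactly the standard proof found in those references, so there is nothing to contrast.
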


\begin{corollary}
  There exists a $1$-round algorithm which achieves a $(9+\gamma)$-agnostic approximation factor for locally private hypothesis selection with high probability, where $\gamma > 0$ is an arbitrarily small constant.
  The sample complexity of the algorithm is $O\left(\frac{k^2 \log k}{\ve^2 \alpha^2}\right)$.
\end{corollary}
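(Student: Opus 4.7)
The plan is to combine Claim~\ref{clm:round-robin} with the LDP reduction from hypothesis selection to parallel approximate maximum selection with adversarial comparators developed in Section~\ref{sec:comp:reduction}. Concretely, I would invoke Claim~\ref{clm:round-robin} as a black box to obtain a $1$-round, $2$-approximation max-selection algorithm making $m = O(k^2)$ comparisons, and then replace each abstract comparator call with the LDP-Scheff\'e subroutine described at the end of Section~\ref{sec:comp:reduction}.

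The key accounting step is handling the sample budget under LDP. Because each individual's data may only be used once, we cannot reuse a single pool of $O(\log k/(\ve^2\alpha^2))$ samples across all $m = O(k^2)$ comparisons (as one could in the non-private Devroye--Lugosi analysis); instead, I would partition the $n$ users into $m$ disjoint groups of size $O(\log m/(\ve^2\alpha^2)) = O(\log k/(\ve^2\alpha^2))$, and dedicate one group to each of the $\binom{k}{2}$ pairwise Scheff\'e comparisons. The accuracy guarantee of the LDP-Scheff\'e test ensures each comparison is implemented correctly with failure probability at most $1/m^2$, so a union bound over all $m = O(k^2)$ comparisons yields overall success with constant probability. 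Summing group sizes gives a total sample complexity of $O(m \cdot \log m/(\ve^2\alpha^2)) = O(k^2\log k/(\ve^2\alpha^2))$, matching the stated bound. Because all comparisons use disjoint users and can be asked simultaneously, the protocol is $1$-round (non-interactive) and $\ve$-LDP.

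For the approximation factor, I would invoke the translation described in Section~\ref{sec:comp:reduction}: mapping item values via $x_i = -\log_{3+\gamma}\dtv(p,q_i)$ turns a $\tau$-approximation in the max-selection world into a $(3+\gamma)^\tau$ agnostic approximation factor for hypothesis selection. Plugging in $\tau = 2$ from Claim~\ref{clm:round-robin} gives $(3+\gamma)^2 = 9 + O(\gamma)$, which can be rewritten as $9+\gamma'$ with $\gamma' > 0$ arbitrarily small by choosing $\gamma$ small enough. The constant in the sample complexity absorbs the choice of $\gamma$.

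No step here is genuinely hard; the proof is essentially bookkeeping. The only subtle point worth highlighting is emphasizing \emph{why} the sample reuse trick fails under LDP, so that the quadratic-in-$k$ blowup is seen as an unavoidable cost of the naive reduction rather than a looseness in the analysis — this motivates the more sophisticated $t$-round algorithms developed in the subsequent subsections.
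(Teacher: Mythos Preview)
Your proposal is correct and matches the paper's approach exactly: the corollary is not given a standalone proof in the paper but is presented as an immediate consequence of Claim~\ref{clm:round-robin} combined with the LDP reduction framework of Section~\ref{sec:comp:reduction}, precisely as you outline. Your accounting of the sample partition, the union bound over the $O(k^2)$ comparisons, and the $(3+\gamma)^2 = 9+\gamma'$ translation of the approximation factor are all faithful to the paper's reasoning.
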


The clear drawback of this method is that the complexity of the resulting algorithms is quadratic in $k$.
Unfortunately, a simple argument shows that this is tight for any $1$-round protocol: roughly, if we do not compare the smallest and second smallest items, we do not know which is smaller, and thus any algorithm which doesn't perform all $\binom{k}{2}$ comparisons in its $1$ round will be wrong with probability $1/2$ (more formal lower bounds for more general settings appear in Section~\ref{sec:comp:lb}).
The natural questions are, if we expend more rounds, can we reduce the sample complexity? 
And how many rounds are needed to achieve the information-theoretic optimum of a linear query complexity?
Many recent works have focused on this question without concern for the number of rounds expended~\cite{AjtaiFHN09, DaskalakisK14, SureshOAJ14, AcharyaJOS14b, AcharyaFJOS18}, culminating in algorithms with linear complexity.
When the round complexity is analyzed, it can be shown that all these methods take $O(\log k)$ rounds.
We state the implied results for our setting in the following claim and corollary, omitting details as we will shortly improve on the round complexity to be $O(\log \log k)$.

\begin{claim}[\cite{AcharyaJOS14b, AcharyaFJOS18}]
  There exists an $O(\log k)$-round algorithm which achieves a $2$-approximation in the problem of parallel approximate maximum selection with adversarial comparators.
  The algorithm requires $O(k)$ queries.
\end{claim}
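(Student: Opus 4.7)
The plan is to run a single-elimination knockout tournament, following the strategy of~\cite{AcharyaJOS14b,AcharyaFJOS18}. Arbitrarily pair up the $k$ items and, in parallel, invoke the adversarial comparator on each of the $k/2$ pairs. The reported winners advance to the next round, where they are again paired and compared in parallel. Iterate for $\lceil \log_2 k \rceil$ rounds until one item survives, and output it. Round $r$ issues $k/2^r$ parallel comparisons, so the total query count is $\sum_{r=1}^{\lceil \log_2 k \rceil} k/2^r < k$, distributed over $O(\log k)$ rounds of adaptivity. Via the Scheff\'e-based reduction in Section~\ref{sec:comp:reduction}, each round of the tournament translates into one round of sequential interactivity in the LDP protocol, so this also matches the corresponding $\tilde O(k)$-sample, $O(\log k)$-round LDP hypothesis selection guarantee.

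For correctness, the basic observation is that any comparator output $x_\ell \in \{x_i, x_j\}$ satisfies $x_\ell \ge \max(x_i, x_j) - 1$: either the gap exceeds $1$ and the max is returned truthfully, or the two items agree to within $1$, in which case even the loser is within $1$ of the max. A direct induction along the path from a leaf containing $x^*$ up to the root of the bracket then yields only the naive bound $\hat w \ge x^* - \lceil \log_2 k \rceil$, which proves an $O(\log k)$-approximation and is insufficient for the claimed $2$-approximation.

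The main obstacle is sharpening this chain analysis to a constant-factor guarantee. The idea I would follow from the cited works is to classify items by value relative to $x^*$ into a ``near-maximum'' class $N_{2} = \{x : x \ge x^* - 2\}$ and its complement, and then to show that the complement is stable under the tournament dynamics: any item with value strictly less than $x^* - 2$ can never beat an item with value at least $x^* - 1$, since the gap strictly exceeds $1$ and the comparator must return the larger. Consequently, once the descendant of $x^*$ in the bracket has value $\ge x^* - 1$ (which holds after one round by the basic observation above), it can only be defeated by items in $N_2$, so the final survivor lies in $N_2$ regardless of the number of tournament rounds. Combined with the reduction in Section~\ref{sec:comp:reduction}, this yields the stated corollary for LDP hypothesis selection; the $O(\log k)$ round complexity is the weakness we improve upon in the remainder of the section.
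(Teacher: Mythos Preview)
There is a genuine gap in your correctness argument, and in fact the single-elimination knockout you describe does \emph{not} achieve a $2$-approximation against an adversarial comparator.

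The faulty step is the sentence ``it can only be defeated by items in $N_2$, so the final survivor lies in $N_2$.'' It is true that an item with value $\ge x^*-1$ can only lose to an item in $N_2=\{x:x\ge x^*-2\}$. But the item from $N_2$ that defeats it may have value exactly $x^*-2$, and \emph{that} item can then lose to something with value $x^*-3$ (gap $\le 1$), which in turn can lose to something with value $x^*-4$, and so on. Membership in $N_2$ is not preserved along the bracket line once the value drops below $x^*-1$, so the induction you are implicitly running does not close.

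Here is an explicit counterexample. Take $k=2^d$ and a balanced bracket. Put $x^*=0$ at one leaf; give its first-round opponent value $-1$; give the two items in the other half of $x^*$'s depth-$2$ subtree value $-2$; in general, give all $2^{j-1}$ items that share $x^*$'s depth-$j$ subtree but not its depth-$(j-1)$ subtree the value $-j$. In round $1$ the adversary returns $-1$ from the pair $(0,-1)$; in round $2$ it returns $-2$ from $(-1,-2)$; and so on, so that after $d$ rounds the surviving item has value $-d=-\log_2 k$. Thus the plain knockout only guarantees an $O(\log k)$-approximation, which is exactly the ``naive bound'' you yourself flagged as insufficient.

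The paper does not actually prove this claim; it is quoted from~\cite{AcharyaJOS14b,AcharyaFJOS18} and the details are omitted. The algorithms in those works are not a bare knockout: they combine a (randomized) tournament with an additional mechanism---essentially the same low-density/high-density dichotomy the present paper uses in Section~\ref{sec:comp:ub:bound}---to prevent the approximation from degrading across rounds. If you want a self-contained argument here, you would need to incorporate that second ingredient (e.g., carry along a small random sample and run \textsc{Round-Robin} on it together with the knockout survivor), not just the bracket.
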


\begin{corollary}
  There exists an $O(\log k)$-round algorithm which achieves a $(9+\gamma)$-agnostic approximation factor for locally private hypothesis selection with high probability, where $\gamma > 0$ is an arbitrarily small constant.
  The sample complexity of the algorithm is $O\left(\frac{k \log k}{\ve^2\alpha^2}\right)$.
\end{corollary}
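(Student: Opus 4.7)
The plan is to instantiate the reduction developed in Section~\ref{sec:comp:reduction} using the preceding claim of Acharya et al.\ as a black box. That claim provides an $O(\log k)$-round maximum-selection algorithm that achieves a $2$-approximation while making only $O(k)$ adversarial comparison queries in total (across all rounds). My job is to lift this into an $\ve$-LDP hypothesis selection algorithm by implementing each adversarial comparator with a noisy Scheff\'e test.

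First, I would recall from Section~\ref{sec:comp:reduction} that each single adversarial comparison between two candidate hypotheses $q_i, q_j$ can be implemented under $\ve$-LDP by running Algorithm~\ref{alg:scheffe} with randomized-response estimation of the empirical mass on the Scheff\'e set $S = \{x : q_i(x) > q_j(x)\}$. This comparison succeeds with probability $1-\beta$ using $O(\log(1/\beta)/(\ve^2\alpha^2))$ samples, provided $\min(\dtv(p,q_i), \dtv(p,q_j)) \le \alpha$. Since we wish every one of the $m = O(k)$ queries made by the tournament to be correct simultaneously with constant probability, I set $\beta = \Theta(1/k)$ and apply a union bound, giving $O(\log k/(\ve^2 \alpha^2))$ samples per comparison.

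Next, the critical observation for preserving privacy is that the $t = O(\log k)$-round schedule of the maximum-selection algorithm can be implemented as a sequentially interactive LDP protocol by partitioning the users into $m = O(k)$ disjoint groups, one per query, each of size $O(\log k/(\ve^2 \alpha^2))$. Each group participates in exactly one comparison, and by the randomized-response guarantee (Lemma~\ref{lem:rr}), each user's single message is $\ve$-LDP; disjointness of the groups then immediately yields $\ve$-LDP for the overall protocol (no composition is incurred). The total sample complexity is $m \cdot O(\log k/(\ve^2 \alpha^2)) = O(k \log k/(\ve^2 \alpha^2))$, matching the claimed bound.

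Finally, conditioned on every comparison being correct in the adversarial-comparator sense (which holds with high probability by the union bound above), the $2$-approximation guarantee in maximum selection translates through the reduction of Section~\ref{sec:comp:reduction}: a $\tau$-approximation corresponds to a $(3+\gamma)^{\tau}$ agnostic approximation factor for hypothesis selection for arbitrarily small $\gamma>0$. Setting $\tau=2$ yields $(3+\gamma)^2 = 9 + \gamma'$ for some arbitrarily small $\gamma'>0$, giving the stated factor. The main piece of care is the bookkeeping that the tournament's \emph{total} query count is $O(k)$ (not $O(k)$ per round) so that partitioning users across queries costs only a $\log k$ factor; once this is verified, no further work is needed, which is why the paper notes that details are omitted in anticipation of the stronger $O(\log\log k)$-round algorithm to come.
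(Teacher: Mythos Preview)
Your proposal is correct and follows exactly the approach the paper outlines in Section~\ref{sec:comp:reduction}: apply the generic reduction (each of the $m=O(k)$ adversarial comparisons is implemented by an LDP Scheff\'e test on a disjoint block of $O(\log m/(\ve^2\alpha^2))$ users, then union bound), and translate the $2$-approximation of the Acharya et al.\ tournament into a $(3+\gamma)^2 = 9+\gamma'$ agnostic factor. The paper explicitly omits these details for this corollary, so there is nothing further to compare.
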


\subsection{A Sub-Quadratic Algorithm with $2$ Rounds}
\label{sec:comp:2-rounds}
In this section, we give a simple $2$-round algorithm which results in a significantly better query complexity of $O(k^{4/3})$.
In Section~\ref{sec:comp:ub}, we generalize this to $t$-round protocols, but provide this as a warm-up and to convey one of the main ideas.

\begin{algorithm}
    \caption{2-Round Algorithm for Maximum Selection}
    \label{alg:2-round}
    \hspace*{\algorithmicindent} \textbf{Input:} $k$ items $x_1, \dots, x_k$ \\
    \hspace*{\algorithmicindent} \textbf{Output:} Approximate maximum $x_i$ 
    \begin{algorithmic}[1] 
        \Procedure{2-Round}{$x_1, \dots, x_k$} 
        \State Partition $x_1$ through $x_k$ into $k^{2/3}$ sets of size $k^{1/3}$. \label{ln:2-round-1}
        \State Run \textsc{Round-Robin} on each set to obtain $k^{2/3}$ winners. \label{ln:2-round-2}
        \State \textbf{return} the winner of \textsc{Round-Robin} on the set of $k^{2/3}$ winners. \label{ln:2-round-3}
        \EndProcedure
    \end{algorithmic}
\end{algorithm}

\begin{figure}[h!]
  \includegraphics[scale=0.5]{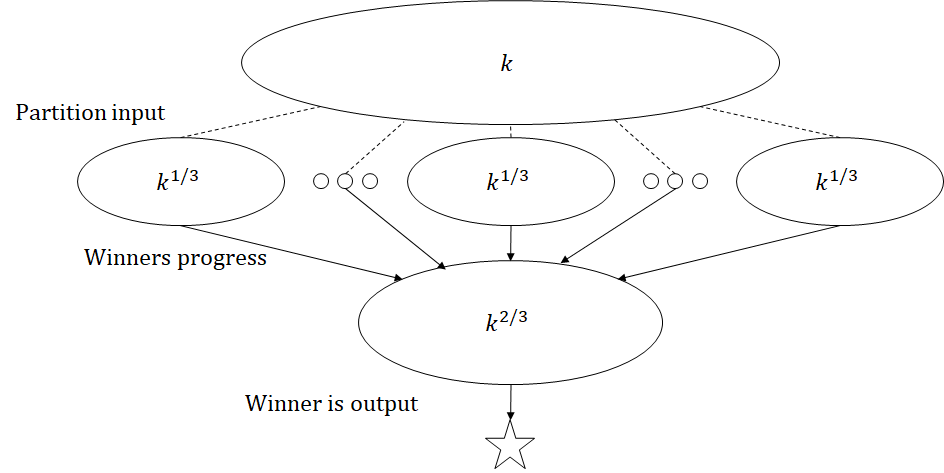}
  \centering
  \caption{An illustration of Algorithm~\ref{alg:2-round}. In the first round, the input is partitioned into sets of size $k^{1/3}$ and a round-robin tournament is performed on each.
  In the second round, a single round-robin tournament is performed on the winners from the previous round.}
\end{figure}

Algorithm~\ref{alg:2-round} describes the procedure, whose guarantees are summarized in the following theorem.
\begin{theorem}
  \label{thm:2-round}
  There exists a $2$-round algorithm which achieves a $4$-approximation in the problem of parallel approximate maximum selection with adversarial comparators.
  The algorithm requires $O(k^{4/3})$ queries.
\end{theorem}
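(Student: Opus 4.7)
The plan is to analyze the two rounds of Algorithm~\ref{alg:2-round} separately, using the guarantee of \textsc{Round-Robin} (Claim~\ref{clm:round-robin}) as a black box for each, and then show that the approximation errors compose additively rather than multiplicatively.

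I first count queries. In round one (Lines~\ref{ln:2-round-1}--\ref{ln:2-round-2}), there are $k^{2/3}$ disjoint sets, each of size $k^{1/3}$, on which \textsc{Round-Robin} performs $\binom{k^{1/3}}{2} = O(k^{2/3})$ comparisons; summing over all sets yields $O(k^{2/3}\cdot k^{2/3}) = O(k^{4/3})$ comparisons. All of these can be submitted in parallel, so they constitute a single round. In round two (Line~\ref{ln:2-round-3}), \textsc{Round-Robin} is applied to the $k^{2/3}$ winners, using $\binom{k^{2/3}}{2} = O(k^{4/3})$ comparisons, again in a single round. The total is $O(k^{4/3})$ queries over $2$ rounds, as claimed.

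Next I argue correctness. Let $x^* = \max_i x_i$ and let $S^*$ be the set (among the $k^{2/3}$ groups from Line~\ref{ln:2-round-1}) that contains $x^*$. By Claim~\ref{clm:round-robin}, \textsc{Round-Robin} returns a $2$-approximate maximum of each input set. Applied to $S^*$, it returns some item $y^*$ with $y^* \geq x^* - 2$. Since the global maximum over all $k^{2/3}$ winners is at least $y^*$, it is also at least $x^* - 2$. Applying Claim~\ref{clm:round-robin} to the set of winners in round two then yields an output $\hat x$ satisfying $\hat x \geq (\text{max winner}) - 2 \geq (x^* - 2) - 2 = x^* - 4$. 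Thus $\hat x$ is a $4$-approximation of $x^*$.

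The one subtlety worth highlighting is that additive composition is the right notion here, not multiplicative: the guarantee of \textsc{Round-Robin} is phrased in terms of an additive gap of $2$ in the (log-transformed) values, so two successive applications accumulate only to an additive gap of $4$. Correspondingly, when translated back to hypothesis selection via the reduction of Section~\ref{sec:comp:reduction}, the resulting agnostic approximation factor is $(3+\gamma)^4$ rather than compounding in a worse way, which is important for the round-efficient algorithms we build from this template. There is no probabilistic subtlety to handle beyond noting that we can take the failure probability of each of the finitely many invocations of \textsc{Round-Robin} to be a small constant and union bound; the main obstacle in later sections (generalizing to $t$ rounds) will be keeping the additive loss bounded as $t$ grows, but for $t = 2$ the analysis is immediate.
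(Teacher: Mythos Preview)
Your proof is correct and follows essentially the same argument as the paper: count the comparisons in each of the two rounds and compose the $2$-approximation guarantee of \textsc{Round-Robin} additively across the two applications. The only superfluous remark is the one about union-bounding failure probabilities of \textsc{Round-Robin} invocations; in the adversarial-comparator model \textsc{Round-Robin} is deterministic and Claim~\ref{clm:round-robin} holds with certainty, so no probabilistic argument is needed here.
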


The resulting corollary for LDP hypothesis selection is the following.
\begin{corollary}
  There exists an $2$-round algorithm which achieves a $(81+\gamma)$-agnostic approximation factor for locally private hypothesis selection with high probability, where $\gamma > 0$ is an arbitrarily small constant.
  The sample complexity of the algorithm is $O\left(\frac{k^{4/3} \log k}{\ve^2\alpha^2}\right)$.
\end{corollary}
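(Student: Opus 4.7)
The plan is to invoke Theorem~\ref{thm:2-round} via the reduction established in Section~\ref{sec:comp:reduction}, carefully tracking both the approximation factor and the sample cost of simulating each adversarial comparator under $\varepsilon$-LDP. Recall that the reduction identifies each adversarial comparator query on items $x_i = -\log_{3+\gamma} \dtv(p, q_i)$ with a (private) Scheff\'e test between $q_i$ and $q_j$. Thus, given a $\tau$-approximate maximum selection procedure, the returned hypothesis $\hat q$ satisfies $\dtv(p, \hat q) \le (3+\gamma)^{\tau} \min_{q^* \in \mathcal{Q}} \dtv(p, q^*)$; with $\tau = 4$ from Theorem~\ref{thm:2-round}, this yields $(3+\gamma)^4$, which equals $81 + \gamma'$ for an arbitrarily small $\gamma' > 0$ by taking $\gamma$ small enough.

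Next I would account for the query-to-sample translation under LDP. As discussed in Section~\ref{sec:comp:reduction}, a single Scheff\'e comparison can be implemented in an $\varepsilon$-LDP fashion using randomized response on the indicator $\mathbbm{1}_{X \in S}$ (where $S$ is the Scheff\'e set), and debiasing the noisy count; this requires $O(\log(1/\beta)/(\varepsilon^2 \alpha^2))$ samples per comparison to achieve failure probability $\beta$. Crucially, because LDP forbids reusing the same user across multiple queries (which would violate the per-user privacy guarantee and would contradict the lower bound of Theorem~\ref{thm:fully-interactive-lb}), I must partition the $n$ users into disjoint groups, one per comparison. Since Theorem~\ref{thm:2-round} uses $m = O(k^{4/3})$ comparisons in total, setting $\beta = 1/\poly(k^{4/3})$ and taking a union bound over all comparisons yields a required per-group size of $O(\log k/(\varepsilon^2 \alpha^2))$, for a total sample complexity of $O(k^{4/3} \log k / (\varepsilon^2 \alpha^2))$, matching the claim.

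Finally, I would verify the round complexity. The two-round structure of Algorithm~\ref{alg:2-round} partitions comparisons into two batches (the intra-group round-robins, then the round-robin among group winners). Since each comparison is executed by a disjoint group of users, and the identity of the comparisons in the second round depends only on the publicly released outputs of the first round, the resulting $\varepsilon$-LDP protocol is sequentially interactive with exactly two rounds. Composing the approximation and sample-complexity bounds above gives the stated $(81+\gamma)$-agnostic approximation factor at sample complexity $O(k^{4/3}\log k/(\varepsilon^2 \alpha^2))$.

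The only mildly delicate step is ensuring that when all $O(k^{4/3})$ comparisons succeed simultaneously (via the union bound with $\beta = 1/\poly(k)$), the high-probability guarantee of Theorem~\ref{thm:2-round} can be faithfully applied. This works because the adversarial comparator abstraction already permits arbitrary (possibly adversarial) answers on close pairs, so as long as every Scheff\'e test is accurate on pairs that are well-separated in total variation distance, the output of Algorithm~\ref{alg:2-round} on the transformed values $x_i = -\log_{3+\gamma}\dtv(p, q_i)$ is a valid $4$-approximation to $\max_i x_i$, which is exactly what the reduction needs.
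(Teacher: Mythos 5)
Your proposal is correct and follows exactly the paper's approach: the corollary is obtained by composing Theorem~\ref{thm:2-round} ($4$-approximation, $O(k^{4/3})$ queries, $2$ rounds) with the LDP reduction from Section~\ref{sec:comp:reduction}, which turns a $\tau$-approximate maximum into a $(3+\gamma)^\tau$-agnostic factor and costs $O(\log m/(\ve^2\alpha^2))$ samples per comparison over $m$ disjoint user groups. One small note: the guarantee should properly read $\dtv(p,\hat q) \le (3+\gamma)^4 \min_{q^*} \dtv(p,q^*) + O(\alpha)$ — the additive $O(\alpha)$ slack from the finite-sample Scheff\'e test is implicit in the paper's ``agnostic approximation factor'' terminology and you should keep it in mind when chaining the bounds, but it does not affect the argument.
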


We proceed to prove the guarantees stated in Theorem~\ref{thm:2-round}.

\begin{proof}
  The number of rounds is easily seen to be 2: Lines~\ref{ln:2-round-1} and~\ref{ln:2-round-2} can be performed in one round, and Line~\ref{ln:2-round-3}, which depends on the results of the previous round, is performed in the second round.

  We next analyze the number of queries. 
  Line~\ref{ln:2-round-2} performs the quadratic round-robin tournament of Claim~\ref{clm:round-robin} on sets of size $k^{1/3}$.
  The resulting number of queries for each set is $O(k^{2/3})$, and since there are $k^{2/3}$ sets, the total number of queries here is $O(k^{4/3})$.
  Line~\ref{ln:2-round-3} performs the same quadratic round-robin tournament on one set of size $k^{2/3}$, which takes $O(k^{4/3})$ queries.
  Therefore, the total number of queries is $O(k^{4/3})$.

  Finally, we justify that this achieves a $4$-approximation to the maximum.
  Consider the first round: a maximum element is placed into one of the $k^{2/3}$ sets, and by the guarantees of Claim~\ref{clm:round-robin}, the winner for this set will be a $2$-approximation to the maximum.
  Therefore, the maximum among the winners is a $2$-approximation to the overall maximum, and again by the guarantees of Claim~\ref{clm:round-robin}, the winner of this round will be a $4$-approximation to the maximum, as desired.
\end{proof}

\subsection{A Near-Linear-Sample Algorithm with $O(\log \log k)$ Rounds}
\label{sec:comp:ub}
In this section, we describe our main result in this setting, a family of algorithms for approximate maximum selection parameterized by $t$, which is the allowed number of rounds.
By setting $t = O(\log \log k)$, we will get an $O(k \log \log k)$-query algorithm which requires only $O(\log \log k)$ rounds, improving exponentially on the round complexity of previous approaches.
In particular, the following corollaries are obtained from Theorem~\ref{thm:better-t-round} and Corollary~\ref{cor:better-t-round-ldp} with an optimized setting of parameters.

\begin{corollary}
  \label{cor:set-params-select}
  There exists an $O(\log \log k)$-round algorithm which, with probability $9/10$, achieves a $3$-approximation in the problem of parallel approximate maximum selection with adversarial comparators.
  The algorithm requires $O(k \log \log k)$ queries.
\end{corollary}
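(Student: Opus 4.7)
My plan is to obtain Corollary~\ref{cor:set-params-select} as an immediate consequence of Theorem~\ref{thm:better-t-round} (which, according to the introduction, gives for every $t \in \mathbb{Z}^+$ a $t$-round $3$-approximation algorithm using $O(k^{1+1/(2^t-1)} t)$ queries) by optimizing the parameter $t$. The goal is to drive the exponent $1 + \tfrac{1}{2^t-1}$ down to $1$ up to a sub-polynomial factor that can be absorbed into the $t$ factor in the query count.

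First I would set $t := \lceil \log_2(\log_2 k + 1)\rceil$. This choice guarantees $2^t - 1 \geq \log_2 k$, and therefore
\[
k^{1/(2^t-1)} \leq k^{1/\log_2 k} = 2.
\]
Substituting into the query bound of Theorem~\ref{thm:better-t-round} yields $O(k \cdot t) = O(k \log \log k)$ queries spread over $t = O(\log \log k)$ rounds. The $3$-approximation guarantee is inherited verbatim from Theorem~\ref{thm:better-t-round}, and the success probability $9/10$ likewise carries over, since we are only tuning $t$ and not composing the algorithm with itself.

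The main obstacle is therefore not the corollary itself but Theorem~\ref{thm:better-t-round}: the naive $t$-fold recursion underlying the warm-up in Section~\ref{sec:comp:2-rounds} incurs a factor-of-$2$ loss per round and would give an unacceptable $2^t$-approximation when $t = \omega(1)$, in our setting giving an approximation factor of roughly $\log k$ rather than the desired constant $3$. As sketched in the text around Theorem~\ref{thm:informal-comparisons}, avoiding this geometric decay in the approximation factor is what forces the more intricate design behind Theorem~\ref{thm:better-t-round} (roughly, propagating more than a single winner out of each sub-tournament and re-comparing candidates at the top). Once Theorem~\ref{thm:better-t-round} is in hand, Corollary~\ref{cor:set-params-select} follows purely from the arithmetic of the parameter choice above.
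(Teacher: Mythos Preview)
Your derivation of the corollary is correct and matches the paper: both simply instantiate Theorem~\ref{thm:better-t-round} with $t = \Theta(\log\log k)$ so that $k^{1/(2^t-1)} = O(1)$, giving $O(k\log\log k)$ queries in $O(\log\log k)$ rounds with the $3$-approximation and $9/10$ success probability inherited directly.

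One tangential correction: your speculation about how Theorem~\ref{thm:better-t-round} avoids the $2t$ approximation decay is off. The paper does \emph{not} propagate multiple winners per sub-tournament. Instead (following~\cite{DaskalakisK14}) it runs the naive $t$-round recursion but, in parallel, also draws a uniformly random subset $H$ of size roughly $k^{2^{t-1}/(2^t-1)}$ and feeds $H$ together with the final-round survivors into a single round-robin. The analysis then splits on the density of items that are $1$-approximations to the maximum: if this density is high, $H$ contains one with good probability and round-robin yields a $3$-approximation; if low, a true maximum is unlikely to meet any such item on its path through the recursion, so it survives to the last round and round-robin yields a $2$-approximation. This does not affect your proof of the corollary, which correctly treats Theorem~\ref{thm:better-t-round} as a black box.
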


\begin{corollary}
  \label{cor:set-params-ldp}
  There exists an $O(\log \log k)$-round algorithm which achieves a $(27+\gamma)$-agnostic factor for locally private hypothesis selection with probability $9/10$, where $\gamma > 0$ is an arbitrarily small constant.
  The sample complexity of the algorithm is $O\left(\frac{k\log k \log \log k}{\ve^2\alpha^2}\right)$.
\end{corollary}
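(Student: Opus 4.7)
The plan is to obtain Corollary~\ref{cor:set-params-ldp} as a direct instantiation of the more general $t$-round upper bound (Theorem~\ref{thm:better-t-round}) combined with the LDP reduction established in Section~\ref{sec:comp:reduction}, by choosing $t$ so that the per-item overhead $k^{1/(2^t-1)}$ becomes a constant. Concretely, I would pick $t$ to be the smallest integer with $2^t - 1 \ge \log_2 k$, so that $t = \lceil \log_2(\log_2 k + 1)\rceil = O(\log \log k)$ and $k^{1/(2^t-1)} \le k^{1/\log_2 k} = 2$. Plugging this $t$ into the query bound of Theorem~\ref{thm:better-t-round} gives a total of $O(k^{1+1/(2^t-1)} t) = O(k \log \log k)$ comparison queries, spread over $O(\log \log k)$ rounds, while the approximation guarantee of the maximum-selection procedure remains a $3$-approximation (this is the content of Corollary~\ref{cor:set-params-select}, which already packages the non-decaying approximation analysis referenced in Section~\ref{sec:comp:ub}).

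Next I would translate this maximum-selection guarantee back to hypothesis selection via the reduction of Section~\ref{sec:comp:reduction}. Recall that under that reduction a $\tau$-approximation for the adversarial-comparator problem corresponds to a $(3+\gamma)^\tau$ agnostic approximation factor for hypothesis selection, for $\gamma$ an arbitrarily small positive constant we may fix at the outset. With $\tau = 3$ this yields $(3+\gamma)^3 = 27 + \gamma'$ for an arbitrarily small $\gamma' > 0$, matching the claimed factor. The round complexity of the hypothesis-selection algorithm matches the round complexity of the maximum-selection procedure (namely $O(\log \log k)$), since each round of comparisons is implemented by a single sequentially-interactive round of LDP Scheff\'e tests.

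For the sample complexity, I would use the fact (established in Section~\ref{sec:comp:reduction}) that to perform $m$ LDP Scheff\'e comparisons with total failure probability at most a small constant, one partitions the users into $m$ disjoint groups of size $O(\log m / (\varepsilon^2 \alpha^2))$, for a total of $O(m \log m / (\varepsilon^2 \alpha^2))$ samples. Here $m = O(k \log \log k)$, so $\log m = O(\log k)$, giving a sample complexity of
\[
O\!\left(\frac{k \log \log k \cdot \log k}{\varepsilon^2 \alpha^2}\right) = O\!\left(\frac{k \log k \log \log k}{\varepsilon^2 \alpha^2}\right),
\]
as required. The sequentially interactive structure is preserved because the partition into groups is refined round by round and no user is ever queried twice.

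The only genuinely non-trivial ingredient is Theorem~\ref{thm:better-t-round} itself, where one has to argue both the query bound $O(k^{1+1/(2^t-1)} t)$ and that the approximation factor does not blow up with $t$; given that theorem, the proof of Corollary~\ref{cor:set-params-ldp} is a matter of parameter optimization and invoking the reduction. I expect the main obstacle, were Theorem~\ref{thm:better-t-round} not yet in hand, to be controlling the approximation factor across $O(\log \log k)$ recursive rounds of a round-robin tournament — a naive analysis would multiply the factor by $2$ per round and yield $2^{O(\log \log k)} = \polylog(k)$, so the algorithm must be designed (as indicated in Section~\ref{sec:comp:ub}) so that only a constant number of rounds contribute to the final approximation, keeping the constant at $3$.
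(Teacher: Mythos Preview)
Your proposal is correct and matches the paper's approach: the paper obtains Corollary~\ref{cor:set-params-ldp} by instantiating Corollary~\ref{cor:better-t-round-ldp} (itself derived from Theorem~\ref{thm:better-t-round} via the reduction of Section~\ref{sec:comp:reduction}) with $t = O(\log\log k)$, exactly as you describe. In fact you supply more detail than the paper, which merely states that the corollary ``follow[s] from these statements with an appropriate setting of $t$.''
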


The method is a careful recursive application of the approach described in Algorithm~\ref{alg:2-round}. 
Specifically, given $t$ allowed rounds of adaptivity, we partition the items into several smaller sets, perform the round-robin algorithm on each, and then feed the winners into the algorithm which is allowed $t-1$ rounds of adaptivity.
A judicious setting of parameters will allow the number of comparisons to decay quite rapidly as the number of rounds is increased.
This construction is described and analyzed in Section~\ref{sec:comp:ub:recursive}.
One challenge is that each round of the algorithm will potentially lose an additive $2$ in the approximation, resulting in an overall $2t$-approximation. 
To avoid this, we employ ideas from~\cite{DaskalakisK14}: we simultaneously apply two algorithms, at least one of which will be effective depending on whether the density of elements close to the maximum is high or low.
We describe the necessary modification and analyze the resulting approach in Section~\ref{sec:comp:ub:bound}.

\subsubsection{A Recursive Application of the $2$-Round Method}
\label{sec:comp:ub:recursive}
Our main result of this section will be the following lemma.
While the round and query complexity are essentially optimal (see Section~\ref{sec:comp:lb}), the quality of approximation is unsatisfactory -- our approach to improving this approximation is described in~\ref{sec:comp:ub:bound}.
\begin{lemma}
  \label{lem:t-round}
  There exists a $t$-round algorithm which achieves a $2t$-approximation in the problem of parallel approximate maximum selection with adversarial comparators.
  The algorithm requires $O(k^{1 + \frac{1}{2^t-1}}t)$ queries.
\end{lemma}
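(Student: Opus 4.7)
The plan is to prove the lemma by induction on $t$, recursively applying the two-round construction from Algorithm~\ref{alg:2-round}. The base case $t=1$ is exactly Claim~\ref{clm:round-robin}: round-robin uses one round, $O(k^2) = O(k^{1+1/(2^1-1)}\cdot 1)$ queries, and achieves a $2$-approximation. For the inductive step, assume a $(t-1)$-round algorithm with the claimed guarantees, and given $k$ items, set $s := \lceil k^{1/(2^t-1)}\rceil$. In round~$1$, partition the items into $k/s$ groups of size $s$ and run \textsc{Round-Robin} in parallel on each, producing $k/s$ winners. In rounds $2$ through $t$, invoke the inductive $(t-1)$-round algorithm on these winners and return its output. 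The total is $t$ rounds.

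For the query count, the first round uses $O((k/s)\cdot s^2) = O(ks) = O(k^{1+1/(2^t-1)})$ queries, while by the inductive hypothesis the remaining rounds use $O((k/s)^{\alpha_{t-1}}(t-1))$ queries, where $\alpha_{t-1} := 2^{t-1}/(2^{t-1}-1)$. The core algebraic identity is
\[
\alpha_{t-1}\cdot\frac{2^t-2}{2^t-1} \;=\; \frac{2^{t-1}}{2^{t-1}-1}\cdot\frac{2(2^{t-1}-1)}{2^t-1} \;=\; \frac{2^t}{2^t-1} \;=:\; \alpha_t,
\]
so $(k/s)^{\alpha_{t-1}} = k^{\alpha_t}$ and the two contributions combine to $O(k^{\alpha_t} t)$. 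Tracking the hidden constant, the recurrence $c_t\cdot t = 1 + c_{t-1}\cdot (t-1)$ telescopes to $c_t = O(1)$, which accounts for the linear-in-$t$ prefactor in the stated bound.

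For the approximation guarantee, let $x^*$ denote the global maximum and consider the group containing it in round~$1$. Claim~\ref{clm:round-robin} guarantees that the winner of that group has value at least $x^*-2$, so at least one of the $k/s$ inputs to the recursive call is a $2$-approximation to $x^*$. Applying the inductive hypothesis to these $k/s$ inputs, the final output has value at least $(x^*-2) - 2(t-1) = x^* - 2t$, yielding a $2t$-approximation, as required.

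The main obstacle is identifying the correct splitting parameter $s = k^{1/(2^t-1)}$ and verifying the exponent identity $\alpha_{t-1}(2^t-2)/(2^t-1) = \alpha_t$; without this choice the two halves of the query budget would not balance and the induction would not close with only a linear-in-$t$ overhead. The approximation analysis, by contrast, is comparatively routine, since the additive errors accumulate by exactly $2$ per level across the $t$ levels of the recursion.
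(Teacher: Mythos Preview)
Your proof is correct and follows essentially the same approach as the paper: both argue by induction on $t$, using the same splitting parameter $s=k^{1/(2^t-1)}$ (the paper writes $\eta_t=1/(2^t-1)$), the same exponent identity $(1-\eta_t)\bigl(1+\tfrac{1}{2^{t-1}-1}\bigr)=1+\eta_t$, and the same additive-$2$-per-level approximation analysis. Your explicit telescoping of the constant via $c_t t = 1 + c_{t-1}(t-1)$ is a slight elaboration beyond what the paper writes, but not a different idea.
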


The method is described in Algorithm~\ref{alg:t-round}.
Note that for $t=1$ or $t=2$, this simplifies to Algorithms~\ref{alg:round-robin} and~\ref{alg:2-round}, respectively.
\begin{algorithm}
    \caption{$t$-Round Algorithm for Maximum Selection}
    \label{alg:t-round}
    \hspace*{\algorithmicindent} \textbf{Input:} $k$ items $x_1, \dots, x_k$, number of rounds $t$\\
    \hspace*{\algorithmicindent} \textbf{Output:} Approximate maximum $x_i$ 
    \begin{algorithmic}[1] 
        \Procedure{Multi-Round}{$x_1, \dots, x_k,t$} 
        \If {$t = 1$}
        \State \textbf{return} the winner of \textsc{Round-Robin} on $x_1, \dots, x_k$.
        \EndIf
        \State Set $\eta_t = \frac{1}{2^t - 1}$.
        \State Partition $x_1$ through $x_k$ into $k^{1-\eta_t}$ sets of size $k^{\eta_t}$. \label{ln:partition}
        \State Run \textsc{Round-Robin} on each set to obtain $k^{1-\eta_t}$ winners. \label{ln:round-robin}
        \State \textbf{return} the winner of \textsc{Multi-Round} on the set of $k^{1-\eta_t}$ winners with $t-1$ rounds. \label{ln:recursive}
        \EndProcedure
    \end{algorithmic}
\end{algorithm}

\begin{figure}[h!]
  \includegraphics[scale=0.5]{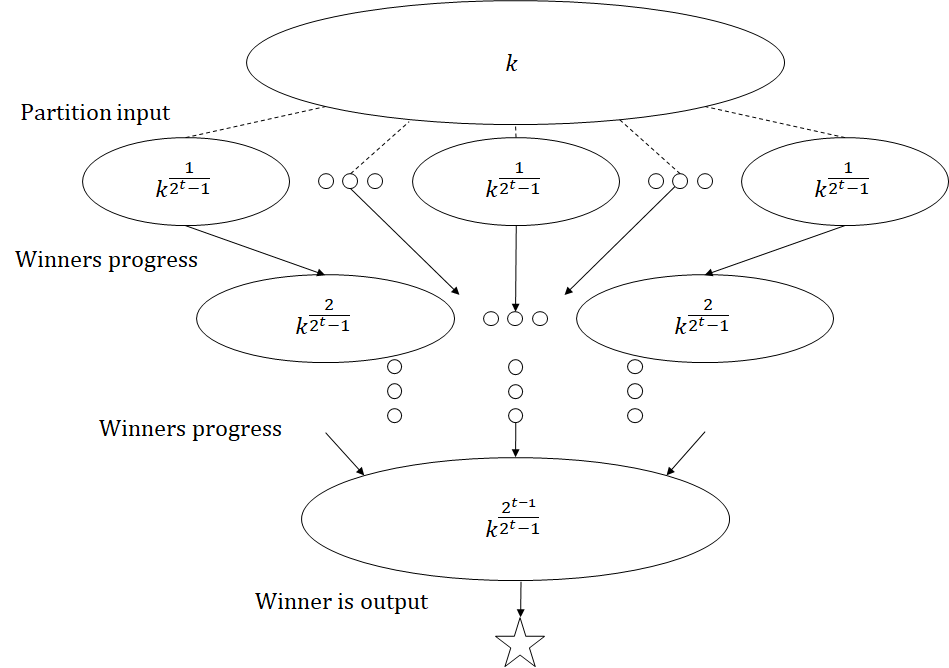}
  \centering
  \caption{An illustration of Algorithm~\ref{alg:t-round}. The input is partitioned into several sets and a round-robin tournament is performed on each. In subsequent rounds, winners are merged into fewer but larger sets, until we have only a single winner.}
\end{figure}

We proceed with proving that this algorithm satisfies the guarantees stated in Lemma~\ref{lem:t-round}.

\begin{proof}
  We prove the guarantees by induction.
  The base case corresponds to $t = 1$.
  As mentioned before, this is exactly equal to Algorithm~\ref{alg:round-robin}, and thus by Claim~\ref{clm:round-robin}, the lemma holds.

  Now, we prove the lemma for a general $t>1$, assuming it holds for $t-1$.
  The number of rounds is trivial: $1$ round is spent performing Lines~\ref{ln:partition} and~\ref{ln:round-robin}, and $t-1$ rounds are spent on the recursive call in Line~\ref{ln:recursive}.
  The approximation is also easy to reason about: the maximum element in the input appears in one of the sets in the partition in Line~\ref{ln:partition}, and therefore the winner of the corresponding set will be a $2$-approximation of the maximum.
  Thus, the set of winners which are fed into the recursive call in Line~\ref{ln:recursive} will have a $2$-approximation of the maximum.
  The inductive hypothesis guarantees that the winner of the recursive call will be a $2(t-1)$-approximation to \emph{this} item, making it a $2t$-approximation to the maximum.
  
  Finally, it remains to reason about the query complexity.
  Comparisons are only performed in Lines~\ref{ln:round-robin} and~\ref{ln:recursive}.
  In the former, we perform the round-robin tournament on $k^{1-\eta_t}$ sets of size $k^{\eta_t}$, so the total number of comparisons is $k^{1-\eta_t} \cdot O(k^{2\eta_t}) = O(k^{1+\eta_t})$.
  In the latter, the recursive call has an input of size $k^{1-\eta_t}$, so by the inductive hypothesis, the number of comparisons done in the recursive call is $O\left(\left(k^{1-\eta_t}\right)^{1 + \frac{1}{2^{t-1}-1}}(t-1)\right)$.
  Substituting in the value $\eta_t = \frac{1}{2^t-1}$, these two terms sum to $O(k^{1+\frac{1}{2^t -1}}t)$, as desired.
\end{proof}

\subsubsection{Bounding the Approximation Factor}
\label{sec:comp:ub:bound}
While the guarantees of Lemma~\ref{lem:t-round} are strong in terms of the round and query complexity, the approximation leaves something to be desired.
We alleviate this issue in a similar way as~\cite{DaskalakisK14}, by running a very simple strategy in parallel to the main method of Algorithm~\ref{alg:t-round}.
The intuition is as follows: if an item with maximum value $x^*$ is never compared with an item with value $x'$ such that $x^* > x' \geq 1$ (i.e., numbers which are $1$-approximations to the maximum), it will never lose a comparison.
If the fraction of such elements is low, then an item with value $x^*$ will make it to the final round, thus guaranteeing that the overall winner will be a $2$-approximation to the maximum.
On the other hand, if the fraction of such elements is high, then we can sample a small number of items such that we select at least one $1$-approximation to $x^*$, and running the round-robin algorithm on this set will guarantee a $3$-approximation to the maximum.

Our method is described more precisely in Algorithm~\ref{alg:better-t-round}, and the guarantees are described in Theorem~\ref{thm:better-t-round}.
\begin{algorithm}
    \caption{Better $t$-Round Algorithm for Maximum Selection}
    \label{alg:better-t-round}
    \hspace*{\algorithmicindent} \textbf{Input:} $K$ items $x_1, \dots, x_k$, number of rounds $t$\\
    \hspace*{\algorithmicindent} \textbf{Output:} Approximate maximum $x_i$ 
    \begin{algorithmic}[1] 
        \Procedure{Better-Multi-Round}{$x_1, \dots, x_k,t$} 
        \State Run \textsc{Multi-Round} on a random permutation of $x_1, \dots x_k$ with $t$ rounds, but halt when $t =1$ and let $L$ be the set of all remaining items. \label{ln:better:mr-call}
        \State Let $H$ be a random subset of $\{x_1, \dots, x_k\}$ of size $O\left(k^\frac{2^{t-1}}{2^{t}-1}\right)$. \label{ln:better:subset}
        \State Run \textsc{Round-Robin} on $L \cup H$ and \textbf{return} the winner. \label{ln:better:rr}
        \EndProcedure
    \end{algorithmic}
\end{algorithm}

\begin{figure}[h!]
  \includegraphics[scale=0.5]{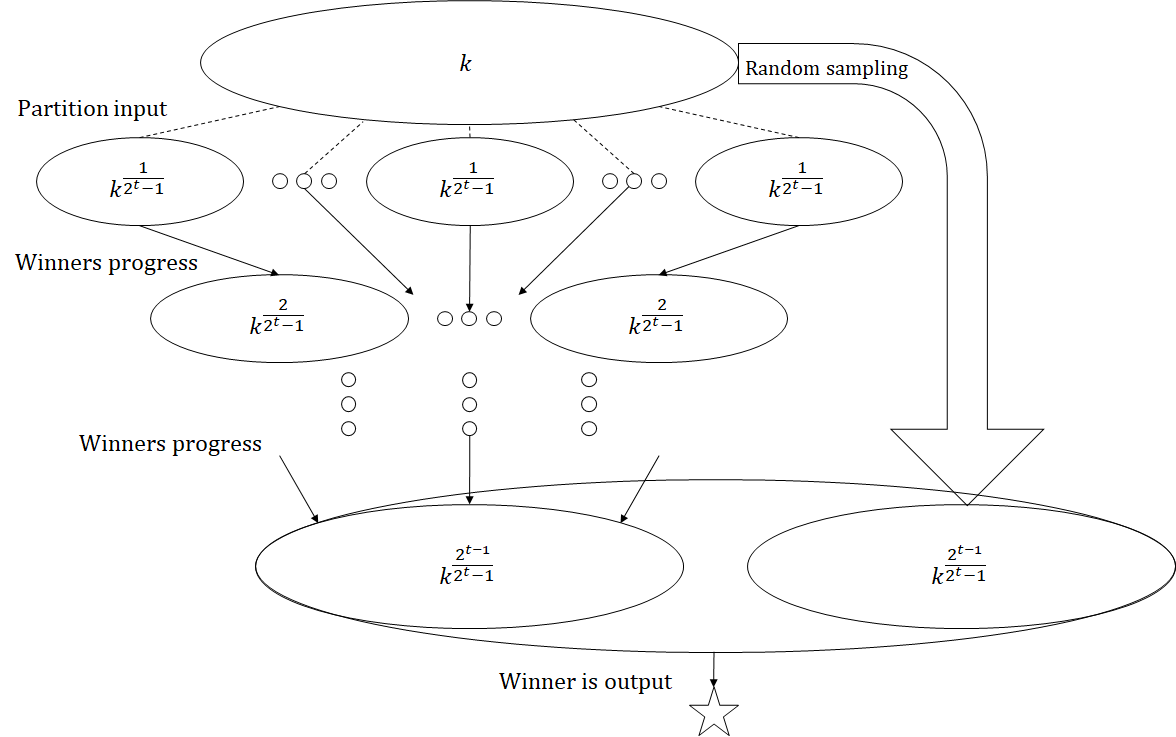}
  \centering
  \caption{An illustration of Algorithm~\ref{alg:better-t-round}. Similar to Algorithm~\ref{alg:t-round}, but in the last round, we perform a round-robin tournament additionally involving a random sample of items from the input.}
\end{figure}

\begin{theorem}
  \label{thm:better-t-round}
  There exists a $t$-round algorithm which, with probability $9/10$, achieves a $3$-approximation in the problem of parallel approximate maximum selection with adversarial comparators.
  The algorithm requires $O(k^{1 + \frac{1}{2^t-1}}t)$ queries.
\end{theorem}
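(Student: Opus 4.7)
The plan is to analyze Algorithm~\ref{alg:better-t-round} in three parts: rounds, queries, and approximation. The round count is immediate: the halted call to \textsc{Multi-Round} in Line~\ref{ln:better:mr-call} consumes $t-1$ rounds (we skip only the $t=1$ base case), and Line~\ref{ln:better:rr} contributes the $t$-th round. For queries, I would first show by induction on the recursion that $|L| = k^{2^{t-1}/(2^t-1)}$: this follows from the telescoping product $\prod_{j=2}^{t} 2(2^{j-1}-1)/(2^j-1) = 2^{t-1}/(2^t-1)$ for the product of $(1-\eta_{t-i+1})$ exponents. The halted \textsc{Multi-Round} uses $O(k^{1+1/(2^t-1)}t)$ queries by the analysis of Lemma~\ref{lem:t-round}, and the final \textsc{Round-Robin} on $L\cup H$ of size $O(k^{2^{t-1}/(2^t-1)})$ uses an additional $O(k^{2^t/(2^t-1)}) = O(k^{1+1/(2^t-1)})$ queries.

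For the approximation, let $x^*$ be a maximum-value item, call an item \emph{close} if its value lies in $[x^*-1, x^*]$, and let $s$ be the number of close items. I would split into two cases via a threshold $\tau$ proportional to $|L| = k^{2^{t-1}/(2^t-1)}$, with the constants chosen so that each case fails with probability at most $1/20$.

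In Case 1 ($s \le \tau$), I claim $x^* \in L \cup H$ with probability $\ge 19/20$. The key point is that $x^*$ beats every non-close item in any comparison (by definition of the adversarial comparator), so $x^*$ wins its partition's round-robin whenever that partition contains no other close item. By exchangeability of the initial random permutation --- combined with the fact that the adversary's comparator decisions cannot depend on the unrevealed portion of the permutation --- the non-$x^*$ items in $x^*$'s round-$i$ partition are a uniformly random subset of the $k_{i-1}$ survivors, where $k_i = k^{(2^t - 2^i)/(2^t-1)}$. A union bound over close items gives per-round collision probability $\le s/k_i$, and since the $k_i$ decrease super-geometrically, $\sum_{i=1}^{t-1} s/k_i = O(s/k_{t-1}) = O(\tau/|L|)$, which is $< 1/20$ for an appropriate constant in $\tau$. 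Once $x^* \in L \cup H$, the final round-robin returns a $2$-approximation of $\max(L \cup H) = x^*$, hence a $2$-approximation of the maximum.

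In Case 2 ($s > \tau$), the uniform random $H$ misses every close item with probability at most $(1-s/k)^{|H|} \le \exp(-s|H|/k) \le \exp(-C\cdot \tau |L|/k) = \exp(-C'\, k^{1/(2^t-1)})$ for suitable constants, which is $< 1/20$ when the constant in $|H|$ is chosen large enough (for $t = O(\log\log k)$ one has $k^{1/(2^t-1)} = \Omega(1)$, so the constant in $|H|$ must be made accordingly large). Conditional on $H$ containing some close item $c$, we have $\max(L\cup H) \ge c \ge x^*-1$, so the final round-robin returns $w$ with $w \ge \max(L\cup H) - 2 \ge x^*-3$, a $3$-approximation. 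A union bound over the two failure events yields a $3$-approximation with probability $\ge 9/10$. The main obstacle is the rigorous exchangeability argument in Case 1: one must verify that the recursive tournament preserves the ``uniformly random subset of survivors'' property for $x^*$'s subsequent partitions, which should follow by carefully tracking the initial random permutation through the deterministic partition-and-tournament steps, together with the observation that the adversary's choices are independent of the unrevealed randomness.
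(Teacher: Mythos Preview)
Your approach is essentially the paper's: same algorithm, same two-case split on the density of near-maximum items, same role for $H$ in the dense case and for $L$ in the sparse case, and the same telescoping computation of $|L|$.

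The one substantive difference is in your Case 1 (the sparse case). You argue round by round, claiming that $x^*$'s round-$i$ partition is a uniformly random subset of the round-$(i{-}1)$ survivors, and then summing $\sum_i s/k_i$. The paper instead observes that $x^*$'s entire \emph{subtree} --- the $k/|L|$ initial items whose tournament path could ever intersect $x^*$'s before the final round --- is fixed by the initial random permutation alone, before any adversary choices are made. If no close item (other than $x^*$) lands in this subtree, a max-value item reaches $L$ regardless of how the adversary plays; the expected number of close items in the subtree is at most $(s-1)\cdot \frac{k/|L|}{k} \le s/|L|$, and Markov finishes. This sidesteps your acknowledged exchangeability subtlety entirely: there is no conditioning on survivors or on adversary behavior, because the subtree depends only on the permutation.

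Your per-round route also has a small quantitative slip. You claim $\sum_{i=1}^{t-1} s/k_i = O(s/|L|)$ because the $k_i$ decrease super-geometrically, but the smallest consecutive ratio is $k_1/k_2 = k^{2/(2^t-1)}$, which is only barely above $1$ when $2^t$ is comparable to $\log k$. In that regime the sum is $\Theta(t\cdot s/|L|)$, not $O(s/|L|)$. This is patchable (absorb the factor $t$ into $\tau$ and $|H|$; the final query count is unaffected), but the subtree argument gives $O(s/|L|)$ directly and avoids the issue altogether.
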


This gives the following corollary for LDP hypothesis selection. 
\begin{corollary}
  \label{cor:better-t-round-ldp}
  There exists a $t$-round algorithm which achieves a $(27+\gamma)$-agnostic approximation factor for locally private hypothesis selection with probability $9/10$, where $\gamma > 0$ is an arbitrarily small constant.
  The sample complexity of the algorithm is $O\left(\frac{k^{1 + \frac{1}{2^t-1}}t \log k}{\ve^2\alpha^2}\right)$.
\end{corollary}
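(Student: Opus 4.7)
The plan is to deduce Corollary~\ref{cor:better-t-round-ldp} by directly invoking Theorem~\ref{thm:better-t-round} and pushing it through the LDP-compatible reduction from hypothesis selection to approximate maximum selection with adversarial comparators, as set up in Section~\ref{sec:comp:reduction}. Recall that in that reduction we identify item $i$ with the value $x_i = -\log_{3+\gamma'}\dtv(p,q_i)$ for an arbitrarily small constant $\gamma' > 0$, and we implement each comparison by the Scheff\'e test, with the empirical mass $\hat p(S)$ on line~\ref{ln:emp-mass} replaced by its randomized-response-based $\ve$-LDP estimate. As recalled there, this LDP Scheff\'e test, run on a fresh batch of $O(\log(1/\beta)/(\ve^2\alpha^2))$ users, implements a single adversarial comparator between two items with failure probability at most $\beta$.

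Next, I would instantiate Algorithm~\ref{alg:better-t-round} with the parameter $t$, which by Theorem~\ref{thm:better-t-round} uses $t$ rounds, makes $m = O(k^{1+1/(2^t-1)}t)$ comparison queries, and outputs a $3$-approximation with probability at least $9/10$. For each such comparison I would assign a disjoint group of users, of size $O(\log m / (\ve^2 \alpha^2))$, and apply the LDP Scheff\'e test with failure parameter $\beta = 1/(30m)$; a union bound then guarantees that all $m$ comparisons faithfully implement the adversarial comparator with failure probability at most $1/30$, so that combined with the $1/10$ failure probability of the max-selection procedure the overall failure probability is still at most $1/10$ (or $9/10$ success). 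Because the user groups across different queries are disjoint, LDP composes trivially per-user, and the round structure of Algorithm~\ref{alg:better-t-round} translates directly into a $t$-round sequentially interactive protocol, as the active user sets in different rounds are disjoint.

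For the sample complexity, the total number of users is $m \cdot O(\log m /(\ve^2\alpha^2))$. Since $m = O(k^{1+1/(2^t-1)} t)$ and we may assume $t \le O(\log k)$ (otherwise the bound is vacuous), $\log m = O(\log k)$, giving the claimed $O(k^{1+1/(2^t-1)} t \log k / (\ve^2 \alpha^2))$ bound. For the approximation factor, the max-selection algorithm returns an item $\hat\imath$ with $x_{\hat\imath} \ge x^* - 3$ where $x^* = \max_i x_i$, and unwinding the change of variables gives $\dtv(p,q_{\hat\imath}) \le (3+\gamma')^3 \min_{i} \dtv(p,q_i)$, which is $(27+\gamma)$ for $\gamma := (3+\gamma')^3 - 27$, an arbitrarily small positive constant as $\gamma' \to 0$.

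The only subtle step is confirming that one can afford the per-comparison failure probability $\beta = 1/(30m)$ without blowing up the sample complexity: this only multiplies the Scheff\'e budget by a $\log m = O(\log k)$ factor, which is already absorbed in the stated bound. Everything else is bookkeeping on top of Theorem~\ref{thm:better-t-round} and the LDP reduction already established in Section~\ref{sec:comp:reduction}, so no new technical obstacle arises.
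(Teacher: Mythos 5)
Your proposal is correct and follows precisely the route the paper sets up in Section~\ref{sec:comp:reduction} (the paper does not spell out this corollary's proof, but your argument is the intended one): instantiate Algorithm~\ref{alg:better-t-round} with parameter $t$, implement each of its $m = O(k^{1 + 1/(2^t-1)}t)$ comparisons with the LDP Scheff\'e test on a disjoint batch of $O(\log m/(\ve^2\alpha^2))$ users, observe $\log m = O(\log k)$, and translate the $3$-approximation into a $(3+\gamma')^3 = 27+\gamma$ agnostic factor via the change of variables $x_i = -\log_{3+\gamma'}\dtv(p,q_i)$.

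One small arithmetic slip worth fixing: with per-comparison failure probability $\beta = 1/(30m)$, a union bound gives total comparator failure $\le 1/30$, and adding the $1/10$ from Theorem~\ref{thm:better-t-round} yields overall failure $\le 4/30 > 1/10$, not $\le 1/10$ as you wrote. This is cosmetic --- take $\beta$ even smaller (still only a $\log k$ factor) and note that the two Markov steps in the proof of Theorem~\ref{thm:better-t-round} can be run with larger constants so that the comparator-model algorithm itself succeeds with probability comfortably above $9/10$, leaving room in the union bound. None of the substantive steps are affected.
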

Corollaries~\ref{cor:set-params-select} and~\ref{cor:set-params-ldp} follow from these statements with an appropriate setting of $t$.

To conclude, we prove Theorem~\ref{thm:better-t-round}.

\begin{proof}
  The number of rounds is straightforward to analyze: Line~\ref{ln:better:mr-call} takes $t-1$ rounds (since we stop one round early), and Lines~\ref{ln:better:subset} and~\ref{ln:better:rr} can be done in $1$ last round.

  To analyze the number of comparisons, we require the following claim, which quantifies the number of items that make it to the last round of \textsc{Multi-Round}.
  \begin{claim}
    \label{clm:bound-round}
    $|L| = k^\frac{2^{t-1}}{2^t -1}$.
  \end{claim}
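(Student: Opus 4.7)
The plan is a direct induction/unrolling of the recursive reduction in \textsc{Multi-Round}, followed by a telescoping-product calculation. The key observation is that when \textsc{Multi-Round} is invoked on a set of size $m$ with $s$ rounds, Line~\ref{ln:partition} partitions the input into $m^{1-\eta_s}$ groups of size $m^{\eta_s}$ and passes only the $m^{1-\eta_s}$ winners to the recursive call with $s-1$ rounds. Halting when $t=1$ means we stop just before the final round-robin, so the size reduction is applied $t-1$ times, with parameters $\eta_t,\eta_{t-1},\dots,\eta_2$ in that order.

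Concretely, I would let $m_0 := k$ denote the initial size and define $m_i := m_{i-1}^{1-\eta_{t-i+1}}$ for $i = 1,\dots,t-1$. By direct unrolling,
\[
|L| = m_{t-1} = k^{\prod_{i=1}^{t-1}(1-\eta_{t-i+1})} = k^{\prod_{s=2}^{t}\frac{2^s-2}{2^s-1}},
\]
after substituting $\eta_s = \frac{1}{2^s-1}$ and reindexing $s = t-i+1$.

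The main (and essentially only) computational step is to evaluate the product
\[
P := \prod_{s=2}^{t}\frac{2^s-2}{2^s-1} = \prod_{s=2}^{t}\frac{2(2^{s-1}-1)}{2^s-1}.
\]
This telescopes: pulling out the factor of $2$ from each term gives $2^{t-1}$, and the remaining fraction $\prod_{s=2}^{t}\frac{2^{s-1}-1}{2^s-1}$ collapses because numerator at index $s$ cancels with denominator at index $s-1$, leaving only $\frac{2^1-1}{2^t-1} = \frac{1}{2^t-1}$. Hence $P = \frac{2^{t-1}}{2^t-1}$, so $|L| = k^{2^{t-1}/(2^t-1)}$ as claimed.

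There is essentially no obstacle here; the only thing to be careful about is the indexing of which $\eta_s$ is applied in which level of the recursion (the outermost call uses $\eta_t$, then $\eta_{t-1}$, etc.), and the off-by-one from halting at $t=1$ rather than running one extra round. For a fully rigorous version I would phrase this as a short induction on $t$: the base case $t=2$ gives $|L| = k^{1-\eta_2} = k^{2/3}$, matching $k^{2^1/(2^2-1)}$; and the inductive step just multiplies the inductive claim for $t-1$ rounds on $k^{1-\eta_t}$ items by using the telescoping identity above to check $(1-\eta_t) \cdot \frac{2^{t-2}}{2^{t-1}-1} = \frac{2^{t-1}}{2^t-1}$.
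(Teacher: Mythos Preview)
Your proposal is correct and essentially identical to the paper's own proof: both unroll the recursion to get $\log_k |L| = \prod_{s=2}^{t}(1-\eta_s)$ and then evaluate this product via the same telescoping argument $\prod_{s=2}^{t}\frac{2(2^{s-1}-1)}{2^s-1} = \frac{2^{t-1}}{2^t-1}$. The only cosmetic difference is that the paper indexes the product downward as $\prod_{i=0}^{t-2}\bigl(1-\tfrac{1}{2^{t-i}-1}\bigr)$ whereas you reindex upward over $s=2,\dots,t$; your additional inductive sanity check at the end is not in the paper but is harmless.
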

  \begin{proof}
    We recall the $\eta_t$ notation of Algorithm~\ref{alg:t-round}.
    The number of items which begin the first round of the algorithm is clearly $k$.
    Since these are partitioned into $k^{1-\eta_t}$ groups, each producing a single winner which progresses to the next round, we have $k^{1-\eta_t}$ items which begin the second round of the algorithm.
    A similar reasoning implies that the number of items entering the third round of the algorithm is $\left(k^{1-\eta_t}\right)^{1-\eta_{t-1}}$.
    Noting that $|L|$ is the number of items entering the $t$-th (i.e., final) round of the algorithm, the same logic shows that 
    \[
      \log_k |L| = \prod_{i=0}^{t-2} \left(1 - \frac1{2^{t-i}-1}\right) = \prod_{i=0}^{t-2} \left(\frac{2(2^{t-i-1}-1)}{2^{t-i}-1}\right) = \frac{2^{t-1}}{2^t - 1},
    \]
    as desired.
    The latter equality can be seen by a telescoping argument, as the numerators cancel the subsequent denominators.
  \end{proof}
  With this in hand, the number of comparisons is the number of comparisons due to Line~\ref{ln:better:mr-call} (which is $O\left(k^{1 + \frac{1}{2^t-1}}(t-1)\right)$ by the same argument as in the proof of Lemma~\ref{lem:t-round}) plus the number of comparisons due to Line~\ref{ln:better:rr}, which is $O\left((|H| + |L|)^2\right) = O\left(\left(k^\frac{2^{t-1}}{2^{t}-1}\right)^2\right) =  O\left(k^{1 + \frac{1}{2^t-1}}\right)$.
  Combining both of these gives the desired number of comparisons.

  Finally, we justify the accuracy guarantee.
  We split the analysis into two cases, based on the density of items which have value comparable to the maximum.
  Let $\zeta = \max_i x_i$ be the maximum value, let $B = \{ i\ :\ \zeta > x_i \geq \zeta - 1\}$ be the set of items which are $1$-approximations to (but not strictly equal to) the maximum value, and let $\gamma = \frac{|B|}{k}$ be their density.

  First, suppose that $\gamma \geq 1/10k^\frac{2^{t-1}}{2^{t}-1}$.
  If we let the hidden constant in the size of $H$ be $100$ (i.e., $|H| = 100k^\frac{2^{t-1}}{2^{t}-1}$), then Markov's inequality says that at least one item in $H$ will be a $1$-approximation to the maximum value with probability at least $9/10$.
  By the guarantees of \textsc{Round-Robin} (quantified in Claim~\ref{clm:round-robin}), the result of Line~\ref{ln:better:rr} will be a $3$-approximation to the maximum value, as desired.

  On the other hand, suppose that $\gamma \leq 1/10k^\frac{2^{t-1}}{2^{t}-1}$.
  We argue that an item with value $\zeta$ makes it to the final round of \textsc{Multi-Round} and is included in $L$ -- if this happens, then by the guarantees of \textsc{Round-Robin}, the result of Line~\ref{ln:better:rr} will be a $2$-approximation to $\zeta$ and the proof is complete.
  This happens if an item with value $\zeta$ is never compared to any element from $B$ within the first $t-1$ rounds.
  Fix some such item: the probability it is compared with some element from $B$ is upper bounded by the probability that any element of $B$ appears in the same subtree of depth $t-1$ leading up to the final round.
  The number of elements contained in this subtree is $k/|L| = k^\frac{2^{t-1}-1}{2^t -1}$, by Claim~\ref{clm:bound-round}.
  The expected number of items from $B$ in this subtree is bounded as $\gamma \cdot k/|L| \leq \frac{1}{10k^\frac{1}{2^t-1}} \leq 1/10$, and the result follows again from Markov's inequality.
\end{proof}

\subsection{A Lower Bound for Selection via Adversarial Comparators}
\label{sec:comp:lb}

\newcommand{\fa}{\tau}

In this section, we provide a lower bound for adversarial maximum selection with constrained interactivity. In Section~\ref{sec:two-round-lower-bound}, we consider a special case when $t=2$ and prove that any 2-round algorithm requires $\Omega\Paren{k^{\frac{4}{3}}}$ comparisons to find an approximate maximum. In Section~\ref{sec:t-round-lower-bound}, we generalize our result and technique to $t$ rounds and prove that any $t$-round algorithm requires $\Omega\Paren{ \frac { k^{1+ \frac{1}{2^{t}-1}}}{3^t} }$ comparisons. These lower bounds hold even for a non-adaptive adversary.

\subsubsection{A Lower Bound for $2$-round Algorithms}
\label{sec:two-round-lower-bound}

We warm up with a simpler case which illustrates the main ideas, namely, a lower bound for $2$-round algorithms. Specifically, we show that no matter how large the approximation factor $\fa$ is, any 2-round algorithm which solves the parallel approximate maximum selection problem requires $\Omega(k^{\frac{4}{3}})$ comparisons. 

\begin{theorem}
 \label{thm:lower-2-round}
For  any  $\fa > 1$, any $2$-round algorithm which achieves a $\fa$-approximation in the problem of parallel approximate maximum selection with non-adaptive adversarial comparators requires $\Omega(k^{\frac{4}{3}})$ queries.
\end{theorem}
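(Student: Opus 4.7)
The plan is to apply Yao's minimax principle: exhibit a distribution $\mathcal{D}$ over non-adaptive adversarial instances and show that no deterministic $2$-round algorithm making $o(k^{4/3})$ queries produces a $\tau$-approximate maximum with probability $\geq 2/3$ under $\mathcal{D}$. The hard instances have three layers obtained from a uniformly random partition of $[k]$ into $L_0, L_1, L_2$ with $|L_2|=1$ (containing the unique max $x^*$), $|L_1| = s = \Theta(k^{2/3})$, and $|L_0| = k - s - 1$; values are chosen so that each between-layer gap exceeds $\max(1, \tau)$, so the comparator must answer truthfully across layers and $x^*$ is the only $\tau$-approximation of itself. Within each of $L_0$ and $L_1$ the adversary commits in advance to a tournament; the simplest choice is a uniformly random tournament, though the full argument may require a more structured hierarchical tournament within $L_0$ to force enough non-max items to look like plausible candidates.

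The proof tracks, after each round, the set $C$ of items the algorithm cannot rule out as being $x^*$. Since $x^*$ beats every other item, $v$ is ruled out only if $v$ loses at least one queried comparison. For round $1$ with query graph $G_1$ of $m_1$ edges, the survival probability of an item $v$ depends on its degree $d_v$ in $G_1$ and on the randomness of the partition and the within-layer tournaments: an $L_1$ item survives if it avoids $L_2$ and wins all its within-$L_1$ comparisons; an $L_0$ item survives if it avoids $L_1 \cup L_2$ and wins all its within-$L_0$ comparisons. Using $\sum_v d_v = 2m_1$ together with convexity of $d \mapsto 2^{-d}$, I would show that whenever $m_1 = o(k^{4/3})$, the expected size of the surviving set $C_1$ is $\Omega(k^{2/3})$, and a standard concentration argument makes this hold with constant probability. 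Moreover, the symmetry of the random partition and within-layer tournaments implies that, conditional on the round-$1$ transcript, the posterior distribution of $x^*$ is approximately uniform over a large subset of $C_1$.

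Round $2$ then reduces to a one-round maximum-selection subproblem on the $\Omega(k^{2/3})$ essentially exchangeable candidates in $C_1$. A $1$-round Yao argument, formalizing the informal $\Omega(k^2)$ lower bound sketched in Section~\ref{sec:comp:easy-k2} for the symmetric posterior at hand, shows that distinguishing the unique sink among $n = |C_1|$ candidates requires $\Omega(n^2)$ queries: for any unqueried pair of candidates the adversary can swap which of the two is $x^*$ consistently with the rest of the transcript, foiling the algorithm with constant probability. Hence $m_2 = \Omega(|C_1|^2) = \Omega(k^{4/3})$ and the total satisfies $m_1 + m_2 = \Omega(k^{4/3})$. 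The main obstacle will be the quantitative lower bound on $|C_1|$ when $m_1$ approaches $k^{4/3}$: the crude coin-flip survival bound from a uniformly random within-layer tournament becomes too weak for average-degree items, so the fix will likely be to design the within-$L_0$ tournament hierarchically, planting $\Omega(k^{2/3})$ ``fake leaders'' that mimic $L_1$ elements so that enough items survive round $1$ even under heavy probing.
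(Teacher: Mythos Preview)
Your high-level strategy (Yao's principle, a random three-layer instance, then a reduction of round $2$ to a one-round problem on the round-$1$ survivors) matches the paper's, but there is a genuine gap at the step where you claim ``the posterior distribution of $x^*$ is approximately uniform over a large subset of $C_1$.'' This is the crux of the argument and it is not true in your construction as stated. Different survivors carry very different amounts of evidence: a degree-$100$ item that won every round-$1$ comparison is far more likely to be $x^*$ (or in $L_1$) than a degree-$0$ item, since a random $L_0$ item would have to both avoid $L_1\cup L_2$ in all $100$ neighbors and win $100$ independent within-$L_0$ coin flips. Under Yao's principle you must handle the worst deterministic algorithm, and such an algorithm can deliberately create a highly non-uniform degree profile in round $1$ and then focus round $2$ on the informative survivors. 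Your round-$2$ swap argument (``for any unqueried pair the adversary can swap which is $x^*$'') also fails for the same reason: two survivors with different round-$1$ histories are not exchangeable given the transcript, so the swap is not consistent.

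The idea you are missing, and what the paper does, is to plant a \emph{decoy}. Rather than a single top element, the paper draws a random $U_1\subseteq [k]$ of size $k^{2/3}$ and then draws a uniformly random \emph{pair} $\{i^*,i'\}$ from $U_1$; edges are set so that $i^*$ beats everyone and $i'$ beats everyone except $i^*$, with all $V_0\to V_1$ edges directed upward and within-layer edges random. The point is that $i^*$ and $i'$ are exchangeable in every comparison except the direct one between them, so no uniformity-of-posterior argument over a large survivor set is needed. One only shows two easy facts about round $1$: (i) with high probability neither $i^*$ nor $i'$ is involved in any within-$U_1$ comparison, and (ii) at most half of $U_1$ is touched by within-$U_1$ comparisons. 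Conditional on these, the pair $\{i^*,i'\}$ is uniform over a set $R\subseteq U_1$ of size $\Omega(k^{2/3})$, and the algorithm fails unless it queries that specific pair in round $2$; with $m_2=o(k^{4/3})$ queries and $\binom{|R|}{2}=\Omega(k^{4/3})$ possible pairs, this happens with probability $o(1)$. Your ``fake leaders'' intuition is pointing in the right direction, but the clean realization is a single decoy $i'$ rather than a crowd of plausible candidates.
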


We remark that, since our result is proved in the setting of non-adaptive adversarial comparators, it also automatically holds for adaptive comparators as well.

In our lower bound constructions, we reformulate the parallel approximate maximum selection problem as a game between an adversary and the algorithm. Before the game starts, the adversary commits to a random tournament (i.e., a complete directed graph)\footnote{Note that in this section we use ``tournament'' in the graph theoretic sense.} on $k$ nodes, each identified with one of the $k$ items. We will require that the tournament has, with probability $1$, a single sink node. Then, the algorithm player asks $m_1$ queries to the adversary, each query corresponding to a comparison between items $x_i$ and $x_j$. If the corresponding edge between $x_i$ and $x_j$ in the tournament is directed from $x_i$ to $x_j$, then the adversary answers that $x_j > x_i$, and, otherwise, the adversary answers that $x_i > x_j$. Equivalently, the algorithm asks for the directions of $m_1$ edges, which are revealed by the adversary. Afterwards, the player asks $m_2$ additional queries, based on the information gained from the initial $m_1$ queries, and the adversary answers them according to the directions of edges in the tournament. The game continues in this manner for $t$ rounds, where in round $q$ the algorithm asks $m_q$ queries, possibly dependent on all the query answers so far. After the $t$-th round, the algorithm must declare the ``winner'', i.e., the sink in the tournament.

Note that we can always produce item values so that the query answers are valid for the adversarial comparators model, and the sink node is the unique $\fa$-approximate maximum. Let $C_1, \ldots C_\ell$ be the strongly connected components of the tournament, ordered so that, if $i < j$, then all edges between $C_i$ and $C_j$ are directed from $C_i$ to $C_j$. Then we can set, for example, $x_j = 2i\fa$ for all $x_j$ in $C_i$. This way all queries to two items in the same strongly connected component can be answered arbitrarily, and all queries to items in two different components can be answered according to the direction of edges in the tournament. Moreover, we want to mention two special components. First, since there is a unique sink node $x_{i^*}$, $C_\ell$ must be equal to $\{x_{i^*}\}$, and therefore, $x_{i^*}$ is the unique $\fa$-approximate maximum. Second, in order to ``fool'' the player, the adversary sets $C_{\ell-1} = \{ x_{i^{\prime}}\}$, where all edges incident on $x_{i^\prime}$ are directed towards $x_{i^\prime}$, except the edge from $x_{i^\prime}$ to $x_{i^*}$. Thus, if the algorithm can achieve a $\fa$-approximation in the parallel approximate maximum selection problem, it can identify the sink node in the game above, and especially, distinguish it from $ x_{i^{\prime}}$.

We are now ready to prove Theorem~\ref{thm:lower-2-round}.

\begin{proof}
  We model the problem as the game described above, with $t =2$. By Yao's minimax principle, we can assume, without loss of generality, that the algorithm player makes deterministic choices.  
We start with the construction of the random tournament. From now on, to make the notation more convenient, we will denote nodes/items by their indices, i.e., we will write $i$ rather than $x_i$. Let $U_0$ denote the complete set of the nodes. Firstly, the adversary picks a uniformly random subset $U_1$ of $k^{\frac{2}{3}}$ nodes from $U_0$. Then from the adversary picks two nodes $i^*$ and $i^\prime$ uniformly at random from $U_1$.


Now we describe the directions of the edges of the tournament. For convenience, we define $V_0 \coloneqq U_0 \backslash U_1$ and $V_1 \coloneqq U_1 \backslash \{i^*,i'\}$. All edges incident on $i^*$ are directed towards $i^*$, i.e., $i^*$ is our sink node. All edges incident on $i^\prime$ are directed towards $i^\prime$, except the edge from $i^\prime$ to $i^*$. All edges from $V_0$ to $V_1$ are directed towards the node in $V_1$. Finally, the direction of any edge between two nodes in $V_0$ or two nodes in $V_1$ is chosen uniformly and independently from all other random choices.

Now we switch to the side of the player. As noted above, any algorithm which achieves $\fa$-approximation must correctly identify $i^{*}$ as the sink, with probability higher than $\frac{2}{3}$. Given $m_1 = m_2 = \frac{1}{100} k^\frac{4}{3}$, we want to show that any algorithm which asks $m = m_1 + m_2$ queries can not find $i^*$ with this probability. In the first round, the player asks $m_1 =  \frac{1}{100} k^\frac{4}{3} $ number of queries. We use $e_j = \{\alpha_j, \beta_j\}$ to denote the $j$-th query, where $j \in [m_1]$. Let $S$ denote the set of the nodes in $U_1$ which have ever competed with some other nodes from $U_1$, i.e., $S = \{ i_1 \in U_1 : \exists i_2 \in U_1, \exists j \in [m_1], e_j = \{i_1,i_2\}\}$.
Now we want to show that the following two ``bad'' events happen with a small probability:

$$ A_1 = \{ i^{\prime} \in S ~ \cup ~ i^{*} \in S \}, ~~~~ A_2 = \{  | V_1 \cap S |  \ge \frac1{2} \cdot k^\frac{2}{3}\}.$$

We bound the probability of event $A_1$ and $A_2$, respectively. For the rest of the proof, we will assume that $k$ is a large enough constant. 
By a union bound, 
$$\pr{A_1} \le \pr{ i^{\prime} \in S } + \pr{i^{*} \in S}  \le 2 \cdot m_1 \cdot  \frac{k^{\frac{2}{3}}-1}{\binom{k}{2}} \le 0.05.$$

With respect to $A_2$, let $e_j = \{\alpha_j, \beta_j\}$, where $j \in [m_1]$. We note that $| V_1 \cap S | \le 2\cdot \sum_{j\in [m_1]} \id (\alpha_j \in V_1, \beta_j \in V_1)$, where $\forall j$, $ \E \Paren{ \id (\alpha_j \in V_1, \beta_j \in V_1)} = \binom {k^{\frac{2}{3} }} {2} \backslash \binom{k}{2}  = \frac{k^{\frac{2}{3}}-1 }{k-1} \cdot k^{-\frac13}$. Furthermore, $\forall j_1 \neq j_2$, $\id (\alpha_{j_1} \in V_1, \beta_{j_1} \in V_1)$ and $\id (\alpha_{j_2} \in V_1, \beta_{j_2} \in V_1)$ are negatively correlated.

Therefore,
$$ \E \Paren{\sum_{j\in [m_1]} \id (\alpha_j \in V_1, \beta_j \in V_1)}  = m_1 \cdot \frac{k^{\frac{2}{3}}-1 }{k-1} \cdot k^{-\frac13} = \frac1{100}\cdot \frac{k}{k-1}\cdot (k^{\frac{2}{3}}-1),$$
$$ \Var \Paren{\sum_{j\in [m_1]} \id (\alpha_j \in V_1, \beta_j \in V_1)}  \le m_1 \cdot \frac{k^{\frac{2}{3}}-1 }{k-1} \cdot k^{-\frac13}\cdot \Paren{1 - \frac{k^{\frac{2}{3}}-1 }{k-1} \cdot k^{-\frac13} } < \frac1{100}\cdot \frac{k}{k-1}\cdot (k^{\frac{2}{3}}-1).$$

By Chebyshev's inequality,  
$$ \pr {A_2} \le \pr {\sum_{j\in [m_1]} \id (\alpha_j \in V_1, \beta_j \in V_1) \ge \frac1{4} \cdot k^\frac{2}{3} } \le k^{-\frac{2}{3}} \le 0.05, $$
where in the last inequality, we assume $k\ge 100$.

Now we move to the second round. From now on, we condition on neither $A_1$ nor $A_2$ holding, which happens with probability at least $0.9$. Then, conditional on $A_1$ and on the answers to the first $m_1$ queries, the pair $\{i^*, i^\prime\}$ is distributed uniformly in the set $R = \{i^*, i^\prime\} \cup (V_1 \setminus S)$. Moreover, if the algorithm does not query $\{i^*, i^\prime\}$ in the second round, then $i^*$ and $i^\prime$ will have the same distribution conditional on all $m$ queries, and the algorithm will not be able to identify $i^*$ with probability higher than $0.5$. Then, conditional on $A_1$, $A_2$, and the queries from the first round, the probability that the algorithm queries $\{i^*, i^\prime \}$ in the second round is at most
\[
  m_2 \cdot \frac1{\binom{|R|}{2}} \le \frac{k^{4/3}}{50 \cdot \frac1{2}k^{2/3}\cdot (\frac1{2} k^{2/3}-1)} \le 0.1.
\]
Therefore, the success rate of any deterministic $2$-round algorithm making at most $\frac{k^{2/3}}{100}$ queries is at most $0.1 + 0.1 + 0.5 < \frac{9}{10}$. As already noted, by Yao's minimax principle this also implies the result for randomized algorithms.


\end{proof}

\subsubsection{A Lower Bound for $t$-round Algorithms}
\label{sec:t-round-lower-bound}

In this section, we extend our 2-round lower bound to $t$ rounds. Specifically, we want to prove the following theorem.

\begin{theorem}
\label{thm:lower-t-round}
For  any  $\fa > 1$, any $t$-round algorithm which achieves  $\fa$-approximation in the problem of parallel approximate maximum selection with non-adaptive adversarial comparators requires $\Omega\Paren{ \frac { k^{1+ \frac{1}{2^{t}-1}}}{3^t} }$ queries.
\end{theorem}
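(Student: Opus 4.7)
The plan is to generalize the construction and analysis of Theorem~\ref{thm:lower-2-round}, replacing the two-layer tournament with a $t$-layer nested structure whose layer sizes are chosen to balance a round-by-round charging argument. As before, we reformulate the problem as a game in which the adversary commits to a random tournament with a unique sink $i^*$, and the algorithm tries to identify $i^*$ by querying edge directions in $t$ rounds of $m_1, \ldots, m_t$ queries, with $m = \sum_q m_q$. By Yao's minimax principle we restrict to deterministic algorithms.

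For the construction, pick nested uniformly random subsets $U_0 = [k] \supset U_1 \supset \cdots \supset U_{t-1}$ with $|U_q| = \lfloor k^{a_q}\rfloor$ and $a_q = (2^t - 2^q)/(2^t - 1)$; note that $a_0 = 1$, $a_{t-1} = 2^{t-1}/(2^t-1)$, and the recurrence $a_q = 2 a_{q-1} - \lambda$ holds with $\lambda = 2^t/(2^t-1) = 1 + 1/(2^t-1)$. Then pick $\{i^*, i'\}$ uniformly at random in $U_{t-1}$. Set $V_q = U_q \setminus U_{q+1}$ for $q < t-1$ and $V_{t-1} = U_{t-1}\setminus\{i^*,i'\}$. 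Orient edges so that $i^*$ is the sink, every edge incident on $i'$ points to $i'$ except $i'\to i^*$, every edge from $V_{q_1}$ to $V_{q_2}$ with $q_1 < q_2$ points to the deeper layer, and every edge within a single $V_q$ is oriented independently and uniformly at random. The choice of layer sizes guarantees that $|U_{q-1}|^2/|U_q| = k^\lambda$ for all $q \in [t]$ (taking $|U_t| = 2$), which will make all per-round thresholds balance.

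For the analysis, for each $q \in [t-1]$ let $S_q \subseteq U_q$ be the set of $U_q$-nodes that are queried against another $U_q$-node in round $q$, and define the bad events $A_q = \{\{i^*,i'\}\cap S_q \neq \emptyset\} \cup \{|V_q\cap S_q| \geq |U_q|/3\}$; let $A_t$ be the event that the algorithm queries the pair $\{i^*,i'\}$ in round $t$. The core inductive claim is that, conditional on $A_1^c \cap \cdots \cap A_q^c$ and on all query answers up through round $q$, the pair $\{i^*, i'\}$ remains uniformly distributed over pairs in a ``survivor set'' $R^{(q)} \subseteq U_q$ with $|R^{(q)}| \geq (2/3)|U_q|$, because nodes in $U_q \setminus S_q$ are symmetric under the adversary's remaining randomness (they have identical query-answer footprints so far, and the random tournaments inside $V_q, \ldots, V_{t-1}$ plus the random choice of $\{i^*,i'\}$ permute among them freely). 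Conditional on $A_1^c, \ldots, A_{q-1}^c$, the subset $U_q$ is uniformly distributed on subsets of $R^{(q-1)}$ of the right size; so for any fixed pair $(u,v)$ queried in round $q$, the probability that $u,v \in U_q$ is at most $|U_q|^2/|R^{(q-1)}|^2 \le (3/2)^2 |U_q|^2/|U_{q-1}|^2$. Markov's inequality then gives $\Pr[A_q \mid A_1^c\cap\cdots\cap A_{q-1}^c] = O(m_q / k^\lambda)$, with analogous bounds for $A_q^{(1)}$ and for $A_t$ using $|R^{(t-1)}| \gtrsim k^{a_{t-1}}$.

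To conclude, under $\bigcap_q A_q^c$ the algorithm's output is conditionally independent of $i^*$ within $R^{(t-1)}$, so its success probability is at most $O(1/k^{a_{t-1}})$, which is negligible. By the chain-rule union bound $\Pr[\text{success}] \le \sum_q \Pr[A_q \mid A_1^c\cap\cdots\cap A_{q-1}^c] + o(1) = O(m/k^\lambda)\cdot c^t$, for a constant $c$ absorbing the $(3/2)$ losses from successive conditionings; requiring success probability at least $2/3$ forces $m = \Omega(k^{1 + 1/(2^t - 1)}/3^t)$. The main obstacle is verifying the symmetry claim underlying the survivor-set induction: one must check that after conditioning on $A_1^c \cap \cdots \cap A_q^c$ and on all observed edge orientations, any bijection of $R^{(q)}$ preserving layer membership yields an equally likely world, so that the posterior on $\{i^*,i'\}$ really is uniform on pairs in $R^{(q)}$. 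This requires careful bookkeeping of which pieces of randomness (the nested choice of $U_{q+1}, \ldots, U_{t-1}$, the pair $\{i^*,i'\}$, and the within-layer random tournaments) have already been ``used'' by the observed answers, and it is here that the exclusion of $\{i^*,i'\}$ from $S_q$ and the cap on $|V_q \cap S_q|$ play critical roles.
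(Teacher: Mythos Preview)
Your construction is exactly the paper's: nested random layers $U_0 \supset \cdots \supset U_{t-1} \supset \{i^*, i'\}$ with $|U_q| = k^{(2^t-2^q)/(2^t-1)}$, edges directed toward deeper layers and random within each layer. The analysis, however, has a genuine gap at the step ``the subset $U_q$ is uniformly distributed on subsets of $R^{(q-1)}$ of the right size,'' and it is not merely the bookkeeping you flag as ``the main obstacle.''

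Your bad event $A_r$ controls only round-$r$ queries landing in $V_r$. But $U_q$ was fixed before the game began, and a query in some earlier round $r < q$ can have both endpoints in $U_q \subseteq U_r$; both endpoints then lie in $S_r$ and are excluded from your survivor set $R^{(r)}$, yet nothing in $A_1^c, \ldots, A_{q-1}^c$ forces $U_q \cap (S_1 \cup \cdots \cup S_{q-1}) = \emptyset$. Concretely, for $t \ge 3$ and a total query budget near the threshold $k^\lambda$, the expected number of round-1 queries with both endpoints in $U_2$ has exponent $\lambda + 2a_2 - 2 = (2^t-6)/(2^t-1) > 0$, so it is polynomially large in $k$; the answers to these queries leak information about the layer structure inside $U_2$ that your per-round invariant does not track. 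Hence the posterior on $U_q$ is not supported on subsets of $R^{(q-1)}$, and the symmetry you invoke for the survivor set is false as stated, not just unverified.

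The paper sidesteps this by inducting on $t$ rather than unrolling the rounds. It defines \emph{all} bad events with respect to round~1 only: for every layer $q \in [t-1]$, $A_q = \{|V_q \cap S| \ge \tfrac{1}{10}|U_q|\}$, where $S$ is the set of $U_1$-nodes hit by a round-1 query against another $U_1$-node, together with $A_t = \{\{i^*, i'\} \cap S \neq \emptyset\}$. Conditional on all of these failing, the paper then \emph{subsamples} from each $V_q \setminus S$ a set $V'_q$ of exactly the size prescribed by a $(k', t-1)$-construction with $k' = \tfrac12 k^{(2^t-2)/(2^t-1)}$; the induced tournament on $\bigcup_q V'_q$ is then distributed as a fresh $(k', t-1)$-construction, and the inductive hypothesis handles rounds $2, \ldots, t$. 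The subsampling is precisely what restores exact layer sizes and exact uniformity after one round of queries has disturbed them; your direct approach would need an analogous device at every round.
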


We continue to model the problem as the game described in the previous subsection, but now with general $t$. We start with the construction of the random tournament, where a similar hierarchical structure to the 2-round construction is adopted. In the structure in Section~\ref{sec:two-round-lower-bound}, we can view node $i^*$ and $i^{\prime}$ as layer $2$, nodes in set $V_1$ as layer $1$, and all the other nodes as layer 0. We have thus designed a $3$-layer hierarchical structure in the proof of the 2-round lower bound, where edges are directed from lower to higher layers, and edges in the same layer are directed randomly. In this section, we generalize this construction to the following $(t+1)$-layer hierarchical structure, which we denote as $(k,t)$-construction.

Let $U_0$ denote the complete set of the nodes. In the first round, the adversary uniformly at random picks $k^{\frac{2^t-2}{2^t-1}}$ different nodes from $U_0$, which are denoted as $U_1$; etc.; in the $q$-th round, the adversary uniformly randomly picks $k^{\frac{2^t -2 ^q}{2^t-1} }$ from $U_{q-1}$, denoted as $U_q$, where $q \in [t-1]$. Finally, the adversary uniformly at random picks two nodes from $U_{t-1}$, denoted as $i^*$ and $i^{\prime}$, respectively, and we let $U_t = \{ i^*, i^{\prime}\}$ for the purpose of consistency.
For convenience, we define $V_0 = U_0 \backslash U_1$, $\cdots$, $V_q = U_q \backslash U_{q+1}$, where $0\le q \le t-1$, and $V_t = U_t = \{ i^*, i^{\prime}\}$. For $i < j$, we direct all edges from $V_i$ to $V_j$; for $q \neq t$,  edges between two nodes in $V_q$ are given a uniformly random direction; finally, the edge between $i^*$ and $i^\prime$ is directed towards $i^*$. Thus, $i^*$ is the unique sink in the graph.

The following is the core lemma in this section.

\begin{lemma}
\label{lem:main-t-lower}
Given a $(k,t)$-construction, and $ \forall \gamma<1$, every
deterministic $t$-round algorithm which finds $i^*$ with
probability higher than $\Paren{ \frac12+ \frac{\gamma}{100} \cdot
  3^{t}}$ requires $\Omega\left(\gamma\Paren{k^{1+ \frac{1}{2^{t}-1}}}\right)$ queries.
\end{lemma}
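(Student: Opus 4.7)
The plan is to prove the lemma by induction on $t$, following the template of the 2-round argument in the proof of Theorem~\ref{thm:lower-2-round}. For the base case $t = 1$, a $(k,1)$-construction has just $V_0$ with random internal edges and the sink pair $\{i^*, i'\}$ with the edge $i' \to i^*$. The map swapping the labels $i^*$ and $i'$ preserves every query answer except for the single edge $\{i^*, i'\}$, so the algorithm's success probability is at most $\tfrac{1}{2} + \Pr[\{i^*, i'\}\text{ is queried}] \leq \tfrac{1}{2} + m/\binom{k}{2}$; requiring this to exceed $\tfrac{1}{2} + 3\gamma/100$ forces $m = \Omega(\gamma k^2) = \Omega(\gamma k^{1 + 1/(2^1 - 1)})$, matching the target.

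For the inductive step, assume the lemma for $t - 1$ and consider a deterministic $t$-round algorithm making $m_q$ queries in round $q$ with total $m = \sum_q m_q$; assume for contradiction that $m \leq c\gamma k^{1 + 1/(2^t - 1)}$ for a sufficiently small absolute constant $c$. Generalizing the argument of Theorem~\ref{thm:lower-2-round}, let $S_1 \subseteq U_1$ be the set of nodes in $U_1$ that are compared against another $U_1$ node in round 1, and introduce the bad events $A_1 = \{i^* \in S_1 \text{ or } i' \in S_1\}$ and, for each $q \in \{1, \ldots, t-1\}$, $B_q = \{|V_q \cap S_1| \geq |U_q|/4\}$. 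Using the identity $k^2/|U_1| = k^{1 + 1/(2^t - 1)}$, a first-moment plus Markov calculation yields $\Pr[A_1] = O(m_1|U_1|/k^2) = O(c\gamma)$, and similarly $\Pr[B_q] = O(m_1|U_1|/k^2) = O(c\gamma)$ uniformly in $q$; a union bound gives $\Pr[E] \geq 1 - O(tc\gamma)$, where $E = \bar A_1 \cap \bigcap_q \bar B_q$.

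Next, condition on $E$ and on the round-1 transcript. The conditional success probability is at least $\tfrac{1}{2} + \gamma 3^t/100 - O(tc\gamma) \geq \tfrac{1}{2} + (3\gamma/2)\cdot 3^{t-1}/100$ whenever $c$ is smaller than a sufficiently small absolute constant (independent of $t$, since $3^t$ grows much faster than $t$). A symmetry argument shows that, conditional on $E$ and on the round-1 transcript, the residual hierarchy on $T_1 := U_1 \setminus S_1$ is distributed as a $(|T_1|, t-1)$-construction (up to constant-factor perturbations in the individual layer sizes): any permutation within $T_1$ preserves the round-1 answers, since no query inside $T_1$ was ever asked (by definition of $S_1$) and cross-$U_1$ queries depend only on the set $U_1$ itself. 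The telescoping identity $|T_1|^{(2^{t-1} - 2^{q-1})/(2^{t-1}-1)} = \Theta(|U_q|)$ makes the layer sizes line up correctly, and in particular $|T_1|^{1 + 1/(2^{t-1}-1)} = \Theta(k^{1 + 1/(2^t-1)})$. Applying the inductive hypothesis with $\gamma' = 3\gamma/2$ then gives $m - m_1 = \Omega(\gamma' \cdot |T_1|^{1 + 1/(2^{t-1}-1)}) = \Omega(\gamma k^{1 + 1/(2^t-1)})$, contradicting the assumption on $m$ and closing the induction.

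The main obstacle will be pinning down this symmetry-to-$(|T_1|, t-1)$-construction reduction cleanly for $t \geq 3$. In the 2-round case, any within-$U_1$ query that avoids $\{i^*, i'\}$ is necessarily between two $V_1$ nodes and returns a uniformly random bit, so nothing about the deeper hierarchy is leaked. For $t \geq 3$, however, within-$U_1$ queries can compare nodes in $U_q \cap S_1$ with other $U_1$ nodes, and the answers partially reveal which spoiled nodes lie in which deeper layer. The role of the events $B_q$ is precisely to control this leakage: by ensuring that only a small constant fraction of each $V_q$ is spoiled, the remaining set $T_1$ retains the correct nested-subset structure layer-by-layer, so that after integrating out the irrelevant information about $S_1$ the posterior on $T_1$'s internal hierarchy is genuinely that of a $(|T_1|, t-1)$-construction. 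Formalizing this will likely require either a careful coupling to an independent $(|T_1|, t-1)$-construction, or a small relaxation of the inductive statement allowing constant-factor fluctuations in the individual layer sizes so that the induction absorbs the imprecision.
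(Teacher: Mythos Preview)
Your proposal is correct and follows essentially the same inductive strategy as the paper: bound the probability that the first round ``spoils'' too much of each layer of $U_1$, condition on the good event and the round-1 transcript, and reduce to the $(t-1)$-case on the unspoiled part of $U_1$.

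The main difference is in how the reduction step is carried out. You correctly flag that the residual hierarchy on $T_1 = U_1\setminus S_1$ is only a $(|T_1|,t-1)$-construction ``up to constant-factor perturbations'' in the layer sizes, and you propose either a coupling or a strengthened inductive statement to absorb this. The paper avoids both by a simple \emph{subsampling} trick: it sets $k' = \tfrac12 |U_1|$ and, for each $q$, draws uniformly at random from $V_q\setminus S$ exactly the number of nodes needed so that the resulting nested family has precisely the sizes prescribed by a $(k',t-1)$-construction. Because the hierarchy on $U_1\setminus S$ is exchangeable conditional on the round-1 transcript (the symmetry you observe), this subsample is distributed \emph{exactly} as a $(k',t-1)$-construction, and the inductive hypothesis applies verbatim with no relaxation. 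The factor-of-two shrinkage in $k'$ costs a constant factor in $(k')^{1+1/(2^{t-1}-1)}$, which is what drives the paper to take $\gamma' \approx 2.7\gamma$ (and ultimately produces the $3^t$). Your choice $\gamma' = 3\gamma/2$ together with the threshold $|U_q|/4$ in $B_q$ does not quite close the induction numerically; either tighten the threshold (the paper uses $|U_q|/10$) or enlarge $\gamma'$.

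Two minor remarks: (i) the paper takes $t=2$ as the base case, but your $t=1$ argument is fine and arguably cleaner; (ii) Markov suffices for bounding $\Pr[B_q]$ as you say, though the paper uses Chebyshev.
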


It is not hard to show that Theorem~\ref{thm:lower-t-round} can be
viewed as a corollary of the lemma, since given the random
$(k,t)$-construction, by setting $\gamma=\frac1{3^t}$, the lemma tells
that every $t$-round algorithm which finds $i^{*}$ with constant
probability makes at least
$\Omega\Paren{ \frac { k^{1+ \frac{1}{2^{t}-1}}}{3^t} }$ queries, and
any algorithm which achieves $\tau$-approximation should find $i^{*}$
with constant probability. Finally, by Yao's minimax principle, this
also holds for randomized algorithms. Therefore, our remaining task is to prove
Lemma~\ref{lem:main-t-lower}.

\begin{proof}
We prove the lemma by induction. Throughout the proof we assume that
$k$ is large enough with respect to $t$ and $\frac1\gamma$. We will
assume that the algorithm makes at most
$\frac{\gamma}{100}\Paren{k^{1+ \frac{1}{2^{t}-1}}}$ queries, and show
inductively that it succeeds in identifying $i^*$ with probability at
most $\frac12+ \frac{\gamma}{100} \cdot 3^{t}$.

For the base case when $t=2$, the lemma holds from the argument in the previous section.
For the inductive step, let $t$ be any integer where $t\ge 3$. Recall
that the number of queries asked by the algorithm in the first round
is $m_1 \le \frac{\gamma}{100} k^{1+ \frac{1}{2^{t}-1 }} $. We use
$e_j = (\alpha_j, \beta_j)$ to denote the $j$-th query, where $j \in
[m_1]$. By analogy with the $2$-round proof, let $S$ denote the set of nodes in $U_1$ which have ever competed with some other nodes from $U_1$, i.e., $S = \{ i_1 \in U_1 : \exists i_2 \in U_1, \exists j \in [m_1], e_j = (i_1,i_2)\}$. Now we want to show that the following $t$ ``bad'' events happen with a small probability:

$$ \forall q \in [t-1], A_q = \{  | V_q \cap S |  \ge \frac{1}{10} \cdot k^{\frac{2^t -2 ^q}{2^t-1} }\}, ~~~~ A_{t} = \{ i^{\prime} \in S ~ \cup ~ i^{*} \in S \}.$$

We bound the probability of event $A_t$ first. By a union bound, 
$$\pr{A_t} \le \pr{ i^{\prime} \in S } + \pr{i^{*} \in S}  \le 2 \cdot m_1 \cdot  \frac{k^{\frac{2^t -2}{2^t-1} }-1}{\binom{k}{2}} \le 0.05 \gamma.$$

With respect to $A_q, q \in [t-1]$, let $e_j = (\alpha_j, \beta_j)$, where $j \in [m_1]$. We note that $| V_q \cap S | \le 2\cdot \sum_{j\in [m_1]} \id (\alpha_j \in V_1, \beta_j \in V_q)$, where $\forall j$, $ \E \Paren{ \id (\alpha_j \in V_1, \beta_j \in V_q)} =  \frac{ k^{\frac{2^t-2}{2^t-1}} \cdot  k^{\frac{2^t-2^q}{2^t-1}}} { \binom{k}{2} }$, which is roughly $k^{-\frac{2^q}{2^t-1}}$. Furthermore, $\forall j_1 \neq j_2$, $\id (\alpha_{j_1} \in V_1, \beta_{j_1} \in V_q)$ and $\id (\alpha_{j_2} \in V_1, \beta_{j_2} \in V_q)$ are negatively correlated.
Therefore,
$$ \E \Paren{\sum_{j\in [m_1]} \id (\alpha_j \in V_1, \beta_j \in V_q)}  = m_1 \cdot \frac{ k^{\frac{2^t-2}{2^t-1}} \cdot  k^{\frac{2^t-2^q}{2^t-1}}} { \binom{k}{2} }\le \frac{1}{50}\cdot k^{\frac{2^t -2 ^q}{2^t-1} },$$
$$ \Var \Paren{\sum_{j\in [m_1]} \id (\alpha_j \in V_1, \beta_j \in V_q)} \le m_1\cdot \frac{ k^{\frac{2^t-2}{2^t-1}} \cdot  k^{\frac{2^t-2^q}{2^t-1}}} { \binom{k}{2} }\cdot \Paren{1-\frac{ k^{\frac{2^t-2}{2^t-1}} \cdot  k^{\frac{2^t-2^q}{2^t-1}}} { \binom{k}{2} }}  \le\frac{1}{50}\cdot k^{\frac{2^t -2 ^q}{2^t-1} }.$$

By Chebyshev's inequality,  
$$ \pr {A_q} \le \pr {\sum_{j\in [m_1]} \id (\alpha_j \in V_1, \beta_j
  \in V_q) \ge \frac{1}{20} \cdot k^{\frac{2^t -2 ^q}{2^t-1}} } \le 25
k^{- \frac{2^{t-1}}{2^t-1}} \le \frac{0.05\gamma}{t}, $$
where in the last inequality, we assume $k \ge \frac{Ct^2}{\gamma^2}$
for a large enough constant $C$.

From now on, we condition on none of the bad events $A_1, \ldots, A_t$
holding, which happens with probability at least $1 - 0.1\gamma$. We
also condition on the answers to the first $m_1$ queries. We would
like to say that the conditional distribution on the graph induced on
$U_1 \setminus S$ is identical to that of a
$(k^\prime, t-1)$-construction for $k^\prime = U_1 \setminus
S$. However, because of the random choice of $S$, the sizes of
$U_q\setminus S$ are not exactly as prescribed in the definition of a
$(k^\prime, t-1)$ construction. In order to finish the induction, we
consider the following process. For
$ k^{\prime} = \frac12 k^{\frac{2^t -2}{2^t-1} }$, we first 
denote $V^\prime_t = \{i^*, i^{\prime}\}$; then, we uniformly at
random draw $(k^{\prime})^{\frac{2^{t-2}}{2^{t-1}-1} } -2 $ nodes from
from $V_{t-1} \backslash S$,  and denote them as $V^{\prime}_{t-1}$; from
$V_q \backslash S$, $q \in [t-1]$, we uniformly at random draw
$(k^{\prime})^{\frac{2^{t-1} -2^{q-1}}{2^{t-1}-1} } -
|V^{\prime}_{q+1}| < \Paren{\frac12}^{\frac{2^{t-1}
    -2^{q-1}}{2^{t-1}-1}} \cdot k^{\frac{2^t -2^q}{2^t-1}} <
\frac{3}{4}k^{\frac{2^t -2^q}{2^t-1}}$ nodes, and denote them as
$V^{\prime}_q$. Conditonal on the bad events not holding, and on the
query answers from the first round, the subgraph induced on the nodes from $V^{\prime}_1$,
$V^{\prime}_2$, $\cdots$, and $V^{\prime}_t$, is distributed
identically to a $(k^{\prime},t-1)$ construction. Clearly, for the
algorithm to determine the sink $i^*$ in the full tournament, it must
also determine it in this subgraph. Ignoring queries in rounds
$2, \ldots, t$ to edges not in the subgraph, the algorithm is allowed to
ask at most 
$m = \frac{\gamma}{100}\Paren{k^{1+ \frac{1}{2^{t}-1}}} \le
\frac{\gamma}{100} \cdot 2.7 \cdot \Paren{ \Paren{k^{\prime}}^{1+
    \frac{1}{2^{t-1}-1}}}$ queries, and, by the inductive assumption, any
$(t-1)$-round algorithm can find $i^{*}$ with probability at most
$\frac12+ \frac{3^{t-1}}{100}\cdot 2.7\gamma $. Finally, by a union bound, the
probability of success of the $t$-round algorithm is at most
$\frac12+ \frac{3^{t-1}}{100}\cdot 2.7\gamma + 0.1 \gamma \le \frac12+ \frac{ 3^t}{100}
\gamma$. This finishes the inductive step.
\end{proof}

\section*{Acknowledgments}
The main algorithms of Section~\ref{sec:comparators} are based off of (non-optimized) ideas from a previous version of~\cite{DaskalakisK14} (see v1 on arXiv), and GK would like to thank Constantinos Daskalakis for his guidance and ideas in that project.
GK would also like to thank Sepehr Assadi and Jieming Mao for helpful discussions related to the literature on parallel selection.

\appendix

\bibliographystyle{alpha}
\bibliography{biblio}

\end{document}